\documentclass[11pt]{article}
\usepackage[margin=1in]{geometry}
\usepackage{amsmath,amsthm,amsfonts,amssymb}
\usepackage{graphicx,enumerate}
\graphicspath{ {./figures/} }
\usepackage[title]{appendix}
\usepackage[
            CJKbookmarks=true,
            bookmarksnumbered=true,
            bookmarksopen=true,
            colorlinks=true,
            citecolor=blue,
            linkcolor=blue,
            anchorcolor=red,
            urlcolor=blue
            ]{hyperref}

\usepackage{multirow}

\usepackage{tikz}
\usepackage{tikz-qtree}

\usetikzlibrary{calc}
\usetikzlibrary{decorations.pathreplacing,decorations.markings}

\tikzstyle{dot}=[circle,fill,black,inner sep=1pt]

\tikzset{
  on each segment/.style={
    decorate,
    decoration={
      show path construction,
      moveto code={},
      lineto code={
        \path [#1]
        (\tikzinputsegmentfirst) -- (\tikzinputsegmentlast);
      },
      curveto code={
        \path [#1] (\tikzinputsegmentfirst)
        .. controls
        (\tikzinputsegmentsupporta) and (\tikzinputsegmentsupportb)
        ..
        (\tikzinputsegmentlast);
      },
      closepath code={
        \path [#1]
        (\tikzinputsegmentfirst) -- (\tikzinputsegmentlast);
      },
    },
  },
  mid arrow/.style={postaction={decorate,decoration={
        markings,
        mark=at position .5 with {\arrow[#1]{stealth}}
      }}},
  early arrow/.style={postaction={decorate,decoration={
        markings,
        mark=at position .2 with {\arrow[#1]{stealth}}
      }}},
}

\newcommand{\bluedoubleedge}[2]{\path #1 edge[bend left,blue] #2 edge[bend right,blue] #2; \node[dot] at #1 {}; \node[dot] at #2 {};}
\newcommand{\reddoubleedge}[2]{\path #1 edge[bend left,red] #2 edge[bend right,red] #2; \node[dot] at #1 {}; \node[dot] at #2 {};}
\newcommand{\reddoubleedgeD}[2]{\path[red] #1 edge[bend left,postaction={on each segment={mid arrow=red}}] #2; \path[red] #2 edge[bend left,postaction={on each segment={mid arrow=red}}] #1; \node[dot] at #1 {}; \node[dot] at #2 {};}
\newcommand{\bluedoubleedgeD}[2]{\path[blue] #1 edge[bend left,postaction={on each segment={mid arrow=blue}}] #2; \path[blue] #2 edge[bend left,postaction={on each segment={mid arrow=blue}}] #1; \node[dot] at #1 {}; \node[dot] at #2 {};}
\newcommand{\redsingleedge}[2]{\path #1 edge[red] #2 ; \node[dot] at #1 {}; \node[dot] at #2 {};}
\newcommand{\bluesingleedge}[2]{\path #1 edge[blue] #2 ; \node[dot] at #1 {}; \node[dot] at #2 {};}

\newcommand{\bluesingleedgeD}[2]{\path #1 edge[blue,postaction={on each segment={mid arrow=blue}}] #2; \node[dot] at #1 {}; \node[dot] at #2 {};}
\newcommand{\redsingleedgeD}[2]{\path #1 edge[red,postaction={on each segment={mid arrow=red}}] #2; \node[dot] at #1 {}; \node[dot] at #2 {};}

\def\alternatecolorred{%
    \pgfkeysalso{red}%
    \global\let\alternatecolor\alternatecolorblue 
}
\def\alternatecolorblue{%
    \pgfkeysalso{blue}%
    \global\let\alternatecolor\alternatecolorred 
}

\newcommand{\altred}{\let\alternatecolor\alternatecolorred 
\tikzset{every edge/.append code = {%
    \global\let\currenttarget\tikztotarget 
    \pgfkeysalso{append after command={(\currenttarget)}}
			\alternatecolor
}}
}
\newcommand{\altblue}{\let\alternatecolor\alternatecolorblue 
\tikzset{every edge/.append code = {%
    \global\let\currenttarget\tikztotarget 
    \pgfkeysalso{append after command={(\currenttarget)}}
			\alternatecolor
}}
}

\tikzstyle{vertexdot}=[circle, draw, fill=black, minimum size=3,inner sep=0pt]


\newtheorem{theorem}{Theorem}
\newtheorem{lemma}{Lemma}

\newtheorem{corollary}{Corollary}
\newtheorem{definition}{Definition}

\newtheorem{assumption}{Assumption}

\usepackage{color}
\usepackage{subfigure}
\usepackage{algorithm}
\usepackage{algorithmic}

\usepackage{xspace,prettyref}

\newcommand{\diverge}{\to\infty}
\newcommand{\eqdistr}{\overset{d}{=}}

\newcommand{\ones}{\mathbf 1}

\newcommand{\reals}{{\mathbb{R}}}
\newcommand{\integers}{{\mathbb{Z}}}
\newcommand{\naturals}{{\mathbb{N}}}


\newcommand{\diff}{{\rm d}}
\newcommand{\red}{\color{red}}

\newcommand{\nbr}[1]{{\sf\red[#1]}}

\newcommand{\expect}[1]{\mathbb{E}\left[ #1 \right]}
\newcommand{\eexpect}[1]{\mathbb{E}[ #1 ]}

\newcommand{\pprob}[1]{ \mathbb{P}\{ #1 \} }
\newcommand{\prob}[1]{ \mathbb{P}\left\{ #1 \right\} }

\newcommand\indep{\protect\mathpalette{\protect\independenT}{\perp}}
\def\independenT#1#2{\mathrel{\rlap{$#1#2$}\mkern2mu{#1#2}}}
\newcommand{\Bern}{{\rm Bern}}
\newcommand{\Binom}{{\rm Binom}}
\newcommand{\Pois}{{\rm Pois}}

\newcommand{\ie}{i.e.\xspace}

\newrefformat{eq}{(\ref{#1})}
\newrefformat{chap}{Chapter~\ref{#1}}
\newrefformat{sec}{Section~\ref{#1}}
\newrefformat{alg}{Algorithm~\ref{#1}}
\newrefformat{fig}{Fig.~\ref{#1}}
\newrefformat{tab}{Table~\ref{#1}}
\newrefformat{rmk}{Remark~\ref{#1}}
\newrefformat{clm}{Claim~\ref{#1}}
\newrefformat{def}{Definition~\ref{#1}}
\newrefformat{cor}{Corollary~\ref{#1}}
\newrefformat{lmm}{Lemma~\ref{#1}}
\newrefformat{prop}{Proposition~\ref{#1}}
\newrefformat{app}{Appendix~\ref{#1}}
\newrefformat{hyp}{Hypothesis~\ref{#1}}
\newrefformat{thm}{Theorem~\ref{#1}}
\newrefformat{ass}{Assumption~\ref{#1}}
\newrefformat{conj}{Conjecture~\ref{#1}}

\newcommand{\sth}[1]{\left\{ #1 \right\}}


\newcommand{\iprod}[2]{\left \langle #1, #2 \right\rangle}

\newcommand{\indc}[1]{{\mathbf{1}_{\left\{{#1}\right\}}}}

\newcommand{\calB}{{\mathcal{B}}}

\newcommand{\calD}{{\mathcal{D}}}
\newcommand{\calE}{{\mathcal{E}}}
\newcommand{\calF}{{\mathcal{F}}}

\newcommand{\calH}{{\mathcal{H}}}

\newcommand{\calN}{{\mathcal{N}}}

\newcommand{\calP}{P}
\newcommand{\calQ}{Q}

\newcommand{\calT}{{\mathcal{T}}}
\newcommand{\calU}{{\mathcal{U}}}

\DeclareMathAlphabet{\varmathbb}{U}{bbold}{m}{n}




\renewcommand{\hat}{\widehat}
\renewcommand{\tilde}{\widetilde}

\usepackage{bbm}







\def \exppath {figures/}

\begin{document}

\pgfdeclarelayer{background}
\pgfdeclarelayer{foreground}
\pgfsetlayers{background,main,foreground}

\title{Hidden Hamiltonian Cycle Recovery via Linear Programming}
\author{Vivek Bagaria, Jian Ding, David Tse, Yihong Wu, Jiaming Xu\thanks{V.\ Bagaria is with Stanford University, Stanford, USA\texttt{ vbagaria@stanford.edu}.
J.\ Ding is with University of Pennsylvania, Philadelphia, USA \texttt{ dingjian@wharton.upenn.edu}.
D.\ Tse is with Stanford University, Stanford, USA \texttt{ dntse@stanford.edu}.
Y.\ Wu is with Yale University, New Haven, USA\texttt{ yihong.wu@yale.edu}.
J.\ Xu is with Purdue University, West Lafayette, USA\texttt{ xu972@purdue.edu}.
}}
\maketitle

\begin{abstract}

We introduce the problem of hidden Hamiltonian cycle recovery, where there is an unknown Hamiltonian cycle in an $n$-vertex complete graph that needs to be inferred from noisy edge measurements. The measurements are independent and distributed according to $\calP_n$ for edges in the cycle and $\calQ_n$ otherwise. This formulation is motivated by a problem in genome assembly, where the goal is to order a
set of contigs (genome subsequences) according to their positions on the genome using long-range linking measurements between the contigs.  Computing the maximum likelihood estimate in this model reduces to a Traveling Salesman Problem (TSP). Despite the NP-hardness of TSP, we show that a simple linear programming (LP) relaxation, namely the fractional $2$-factor (F2F) LP, recovers the hidden Hamiltonian cycle with high probability as $n \to \infty$ provided that $\alpha_n - \log n \to \infty$, where $\alpha_n \triangleq -2 \log \int \sqrt{d P_n d Q_n}$ is the R\'enyi divergence of order $\frac{1}{2}$. This condition is information-theoretically optimal in the sense that,  under mild distributional assumptions, $\alpha_n \geq (1+o(1)) \log n$ is necessary for any algorithm to succeed regardless of the computational cost. 

Departing from the usual proof techniques based on dual witness construction, the analysis relies on the combinatorial characterization (in particular, the half-integrality) of the extreme points of the F2F polytope. Represented as bicolored multi-graphs, these extreme points are further decomposed into simpler ``blossom-type'' structures for the large deviation analysis and counting arguments. Evaluation of the algorithm on real data shows improvements over existing approaches.

\end{abstract}



\newpage

\section{Introduction}
\label{sec:introduction}
Given an input graph, the problem of finding a subgraph satisfying certain properties has diverse
applications. MAX CUT, MAX CLIQUE, TSP are a few canonical examples. Traditionally these problems have been studied in theoretical computer science from the worst-case perspective, and many such problems have been shown to be NP-hard. However, in machine learning applications, many such problems arise when an underlying \emph{ground truth} subgraph needs to be recovered from the noisy measurement data represented by the entire graph.  Canonical models to study such problems
include planted partition models~\cite{Condon01} (such as planted clique~\cite{Jer92}) in community detection and  planted ranking models (such as Mallows model~\cite{mallows1957non})
in rank aggregation. In these models, the planted or hidden subgraph represents the ground-truth,  and one is not necessarily interested in the worst-case instances but rather in only instances for which there is enough information in the data to recover the ground-truth sub-graph, i.e. when the amount of data is above the information limit.  The key question is whether there exists an efficient recovery algorithm that can be successful all the way to the information limit. 

 In this paper,  we pose and answer this question for a hidden Hamiltonian cycle recovery model: 
 
\begin{definition}[Hidden Hamiltonian cycle recovery]\label{def:model}\hfill

\noindent
{\bf Given}: $n \ge 1$, and two distributions $\calP_n$ and $\calQ_n$, parameterized by $n$.\\
{\bf Observation}: A randomly weighted, undirected complete graph 
$G=([n], E)$ with a hidden Hamiltonian cycle $C^*$ such that every edge has an
independent weight distributed as $\calP_n$ if it is on $C^*$ and  as $\calQ_n$ otherwise. \\
{\bf Inference Problem:} Recover the hidden Hamiltonian cycle $C^*$ from the observed random graph.
\end{definition}


Our problem is motivated from {\em de novo} genome assembly, the reconstruction of an organism's long sequence of A,G,C,T nucleotides  from fragmented sequencing data.  The first step of the standard assembly pipeline stitches together short, overlapping fragments (so-called shotgun reads) to form longer subsequences called contigs,  of lengths typically tens to hundreds of thousands  of nucleotides (\prettyref{fig:assembly}).  Due to coverage gaps and other issues, these individual contigs cannot be extended to the whole genome. To get a more complete picture of the genome, the contigs need to be ordered according to their positions on the genome, a process called {\em scaffolding}. Recent advances in sequencing assays~\cite{lieberman2009comprehensive, putnam2016chromosome}  aid this process by providing long range linking information between these contigs in the form of randomly sampled {\em Hi-C reads}. This data can be summarized by a contact map (\prettyref{fig:contact}),  tabulating the counts of Hi-C reads linking each pair of contigs. The problem of ordering the contigs from the contact map data can be modeled by the  hidden Hamiltonian cycle recovery problem, where  the vertices of the graph are the contigs, the hidden Hamiltonian cycle is the true ordering of the contigs on the genome,\footnote{Strictly speaking, this only applies to genomes which are circular. For genomes which are linear, the ordering of the contigs would correspond to a hidden Hamiltonian {\em path}. We show in Section \ref{sec:cycle_path} that our results extend to a hidden Hamiltonian path model as well.} and the weights on the graph are the counts of the Hi-C reads linking the contigs. As can be seen in \prettyref{fig:contact}(a), there is a much larger concentration of Hi-C reads between contigs adjacent on the genome than between far-away contigs. A first order model is to choose $\calP_n = \Pois(\lambda_n)$ and $\calQ_n = \Pois(\mu_n)$, where $\lambda_n$ is the average number of Hi-C reads between adjacent contigs and $\mu_n$ is the average number between non-adjacent contigs. The parameter $n \lambda_n + \frac{n(n-1)}{2} \mu_n$ increase with the coverage depth\footnote{The coverage depth is the average number of Hi-C reads that include a given nucleotide (base pair).} of the Hi-C reads and are part of the design of the sequencing experiment.

\begin{figure}[ht]
\centering
\includegraphics[width=0.9\columnwidth]{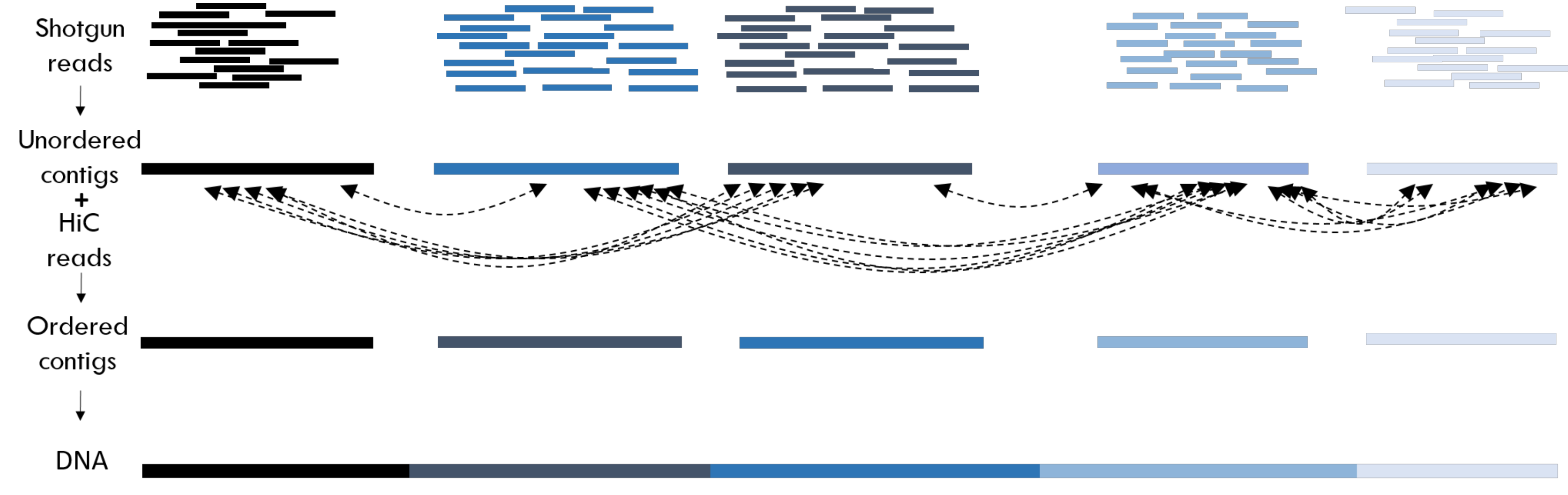}
\caption{Short reads are assembled to form contigs which are then ordered by using long-range linking Hi-C reads.}
\label{fig:assembly}
\end{figure}

\begin{figure}[ht]
\centering
\includegraphics[width=0.8\columnwidth]{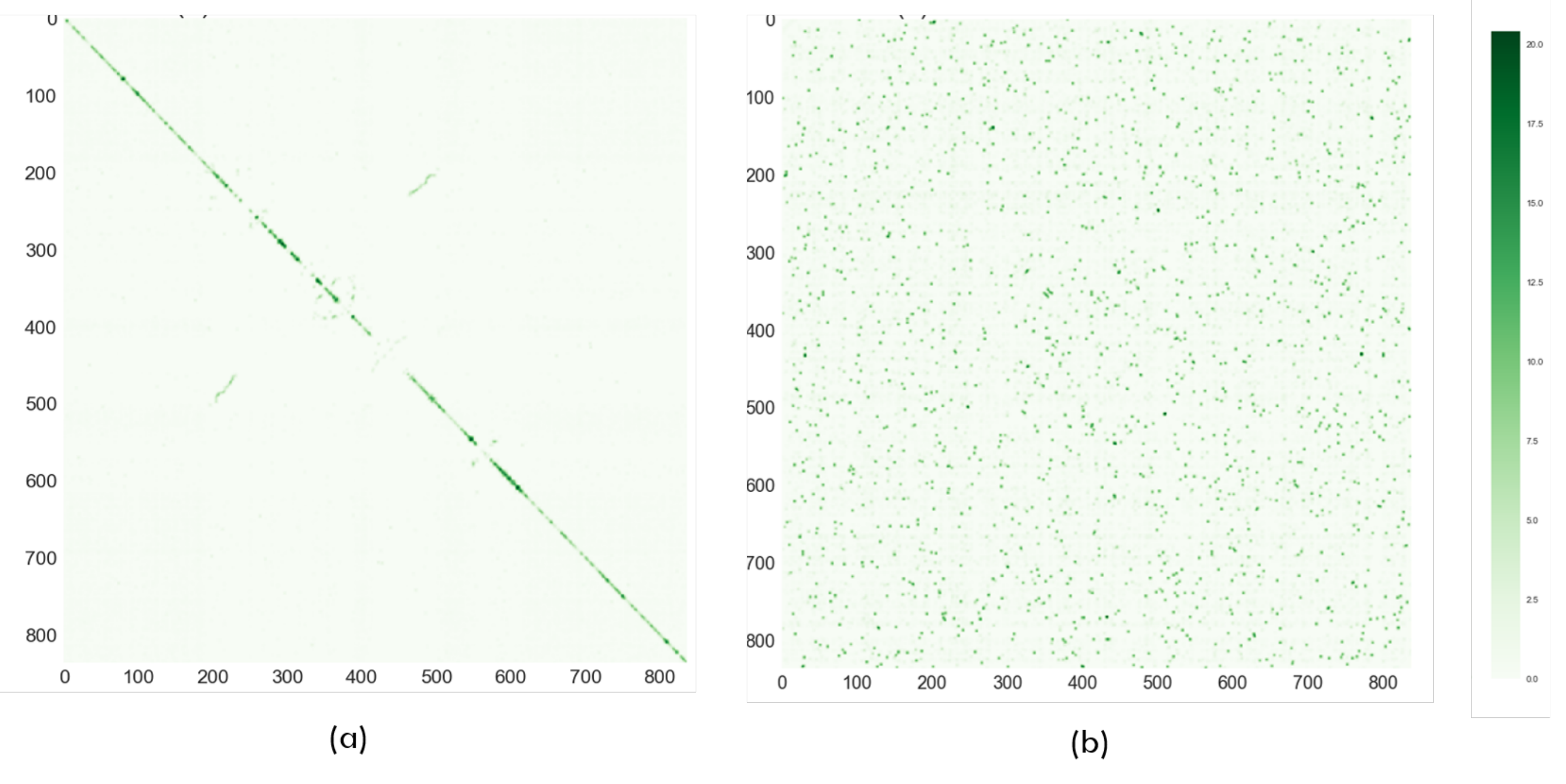}
\caption{(a): Contact map where the rows (and columns) correspond to \emph{ordered} contigs of human chromosome $1$~\cite{putnam2016chromosome}
and the value at entry $(i,j)$ corresponds to the number 
of HiC reads  between contig $i$ and contig $j$.
 (b): Contact map of the \emph{unordered} matrix in (a), where the contigs are randomly ordered. This is the data from which the ordering of the contigs is inferred.}
 \label{fig:contact}
 \end{figure}

 The hidden Hamiltonian cycle can be represented as an
adjacency vector $x^* \in \{0,1\}^{\binom{n}{2}}$ such that
$x^*_e=1$ if edge $e$ is on the Hamiltonian cycle, and
$x^*_e=0$ otherwise. 
Let $A$ denote the weighted adjacency matrix of $G$, so that $A_e$ is distributed according to $P$ (resp.~$Q$) if $x^*_e=1$ (resp.~0).
The maximum likelihood (ML) estimator  for the hidden Hamiltonian cycle recovery problem is equivalent to solving the traveling salesman problem (TSP) on a transformed weighted graph, where each edge weight $w_e=\log \frac{\diff P}{\diff Q}(A_e)$ is the log likelihood ratio  evaluated on the weights of the observed graph:
\begin{align}
\widehat{x}_{\rm ML} = \arg \max_{x}  & \;  \iprod{ w} {x}   \label{eq:mle} \\
\text{s.t.  } & \; \text{ $x$ is the adjacency vector of a Hamiltonian cycle in $G$}. \nonumber
\end{align}
In the Poisson or Gaussian model where the log likelihood ratio is an affine function, we can simply take $w$ to be $A$ itself.
 
 Solving TSP is NP-hard, and a natural approach is to look for a tractable relaxation. It is well-known that TSP  \prettyref{eq:mle}  can be cast as an integer linear program (ILP)~\cite{SWZ13}:
\begin{align}
\widehat{x}_{\rm TSP} = \arg \max_{x}  & \;  \iprod{w}{x} \label{eq:tsp_ilp}\\
\text{s.t.  } & \; x \left(  \delta(v) \right) =2  
\label{eq:tsp_degree}\\
& \; x \left( \delta(S) \right) \ge 2, \; \forall S \subset [n],  \; 3 \le |S| \le n-3  \label{eq:subtour_elim} \\
& \; x_e \in \{ 0, 1\}, \label{eq:tsp_integer}
\end{align}
where $\delta(S)$ denotes the set of all edges in $G$ with exactly one endpoint in $S \subset [n]$, and $\delta(v) \triangleq \delta(\{v\})$.
In particular, 
\prettyref{eq:tsp_degree} are called \emph{degree} constraints, enforcing each vertex to have exactly two incident edges in the graph represented by the adjacency vector $x$, while \prettyref{eq:subtour_elim} are \emph{subtour elimination} constraints, eliminating solutions whose corresponding graph is a disjoint union of subtours of length less than $n$ . Note that there are small number of degree constraints but exponentially large number of subtour elimination constraints. If we drop the subtour elimination constraints as well as relax the integer constraints on $x$, we obtain
 the \emph{fractional $2$-factor (F2F)} LP relaxation:\footnote{
A $2$-factor is a spanning subgraph consisting of disjoint cycles. }
\begin{align}
\widehat{x}_{\rm F2F} = \arg \max_{x}  & \;   \iprod{w}{x}   \label{eq:F2F}  \\
\text{s.t.  }  & \;    x \left(  \delta(v) \right) =2  \nonumber \\ 
& \;  x_e \in [0,1]. \nonumber
\end{align}

The main result of the paper is the following. We abbreviate $P_n$ and $Q_n$ as $P$ and $Q$, respectively. 
\begin{theorem}
\label{thm:LP_opt}
Define:
\begin{equation}
\alpha_n \triangleq - 2    \log \int \sqrt{\diff P \diff Q}
\label{eq:alpha}
\end{equation}
to be  the R\'enyi divergence of order $\frac{1}{2}$ between distributions $\calP$ and $\calQ$.\footnote{
The R\'enyi divergence of order $\rho>0$ from $P$ to $Q$ is defined as \cite{Renyi61}
\begin{equation}
D_{\rho}(P\|Q) \triangleq \frac{1}{\rho-1} \log \int (dP)^{\rho} (dQ)^{1-\rho}.
\label{eq:renyi}
\end{equation}
It particular, for $\rho=1/2$ it is related to the so-called Battacharyya distance $B(P,Q)$ via $D_{\frac{1}{2}}(P\|Q) = 2 B(P,Q)$.
} If
\begin{equation}
\alpha_n - \log n  \to + \infty, 
\label{eq:ITlimit}
\end{equation}
then the optimal solution of the F2F LP \prettyref{eq:F2F} satisfies 
$
\min_{x^*} \prob{ \hat x_{\rm F2F} = x^* } \to 1
$
as $n \to \infty$. 
\end{theorem}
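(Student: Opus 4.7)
The plan is to prove that $x^*$ is the unique optimizer of the F2F LP \eqref{eq:F2F} with the MLE weights $w_e = \log(\mathrm{d}P/\mathrm{d}Q)(A_e)$, so that $\widehat x_{\rm F2F}=x^*$ with high probability. Since an LP optimum is attained at an extreme point, it suffices to union-bound, over extreme points $x \neq x^*$, the probability of $\{\langle w,x-x^*\rangle \geq 0\}$. The structural input that makes this tractable---and the reason one can bypass the usual dual-witness approach---is the half-integrality theorem for the F2F polytope: every extreme point takes values in $\{0,\tfrac12,1\}$, the $1$-valued edges form a graph of maximum degree $2$, and the $\tfrac12$-valued edges form vertex-disjoint \emph{odd} cycles whose vertices are exactly the degree-$1$ vertices of the integer part.

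Fix such an extreme point $x$ and let $\Delta(x) := \langle w,x-x^*\rangle$. Conditioned on the supports, $\Delta(x)$ is a sum of independent log-likelihood ratios, with edges outside $C^*$ having weights drawn from $Q$ and edges inside $C^*$ from $P$. A Chernoff tilt at order $\rho=\tfrac12$ yields a Bhattacharyya factor $\int\sqrt{\mathrm{d}P\,\mathrm{d}Q}=e^{-\alpha_n/2}$ per disagreeing edge. Crucially, because both $x$ and $x^*$ are $2$-regular, each edge added outside $C^*$ is paired with an edge removed from $C^*$, so disagreements come in \emph{swaps}, and the bound takes the form $\Prob[\Delta(x)\geq 0]\leq e^{-k\alpha_n}$ where $k$ is the number of swaps; the half-integer edges are handled by grouping each odd half-cycle with its attached integer stems, so that the pairing argument and the Bhattacharyya tilt still apply locally.

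The decisive step, and where I anticipate the main technical difficulty, is counting extreme points of each swap-size $k$ sharply enough that the union bound is summable under \eqref{eq:ITlimit}. Represent $x^*$ and $x$ as a bicolored multigraph (blue for $x^*$, red for $x$, with half-valued edges drawn as red half-weight edges); at every vertex the red and blue degrees both equal $2$. Decompose the graph into canonical ``blossom-type'' alternating pieces---alternating closed walks, isolated red odd half-cycles, and alternating stems connecting them---and argue that a blossom collection of total swap-size $k$ can be encoded by $O(k)$ combinatorial choices together with $O(k)$ vertex labels, yielding at most $(Cn)^k$ extreme points with $k$ swaps. The union bound then reads $\sum_k (Cn)^k e^{-k\alpha_n} = \sum_k e^{-k(\alpha_n-\log n-O(1))}\to 0$ precisely when $\alpha_n - \log n\to\infty$. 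The delicate point is the sharp $n^k$ (rather than the naive $n^{2k}$ obtained by independently picking $k$ added and $k$ removed edges): one must exploit that the removed edges lie on the \emph{fixed} cycle $C^*$ so that only one ``free'' vertex is spent per swap, a bookkeeping that the blossom decomposition is designed to expose, and that must be carried out uniformly across both integer and half-integral extreme points.
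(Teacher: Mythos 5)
Your overall approach matches the paper's: a direct primal argument over extreme points, the half-integrality theorem for the F2F polytope, a bicolored (multi)graph encoding of $x-x^*$, a blossom/stem decomposition, and a counting scheme that spends one ``free'' vertex per red edge because red edges must lie on the known cycle $C^*$. That part of the plan is sound and is essentially what the paper does.

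The genuine gap is in the large-deviation step, and it is exactly the difficulty the half-integral entries create. You claim that ``a Chernoff tilt at order $\rho=\tfrac12$ yields a Bhattacharyya factor $e^{-\alpha_n/2}$ per disagreeing edge,'' with disagreements grouped into swaps. But $\Delta(x)=\langle w,x-x^*\rangle$ is a weighted sum with coefficients in $\{0,\pm\tfrac12,\pm1\}$, and there is no single tilt parameter $\theta$ that gives the Bhattacharyya factor for both coefficient sizes. At $\theta=\tfrac12$ the $\pm1$ terms give $e^{-\alpha_n/2}$, but the $\pm\tfrac12$ terms give $\int(\diff P)^{1/4}(\diff Q)^{3/4}$ and $\int(\diff P)^{3/4}(\diff Q)^{1/4}$; Cauchy--Schwarz shows their product is $\geq e^{-\alpha_n}$, i.e.\ strictly \emph{worse} than a Bhattacharyya rate per half-edge, so a single tilt loses a constant factor and cannot reach the sharp threshold $\alpha_n-\log n\to\infty$. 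Equivalently, after the paper's doubling $y=2(x-x^*)$, the integer-disagreement edges become double edges contributing $\pm 2w_e$, and tilting with any fixed $\theta$ produces a mismatch between single- and double-edge factors (taking $\theta=1/2$ for the doubles gives $\mathbb{E}_Q[e^{Y}]=1$, which is vacuous). The paper resolves this with a two-step, non-standard large-deviation argument (Lemma~\ref{lmm:LD}): it handles the double-edge sum $A$ and single-edge sum $B$ separately, bounds $\max_i A_i$ against a carefully chosen threshold $\tau$, and then conditions on $\{A\le\tau\}$ before bounding $B\geq -2A$ by stratifying over values of $A$. Your ``group each odd half-cycle with its attached integer stems so the Bhattacharyya tilt still applies locally'' is not enough; the stems contain integer-disagreement edges, so inside a single group you still have both coefficient sizes and the same mismatch.

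A secondary, related issue: because double edges have twice the variance, the union bound must control double and single edges with \emph{separate} parameters, which is why the paper's enumeration (Lemma~\ref{lmm:enum2}) tracks double-edge labelings and single-edge labelings separately (via $\Phi^*_{H_d}$ and the conditional count over $\phi_s$, with an auxiliary parameter $r$), rather than a single swap count $k$. Your $(Cn)^k$ bound and the ``one free vertex per red edge'' intuition are correct in spirit, but as stated the proposal collapses the two counts into one, which is exactly what you cannot do if the Chernoff rate per double edge and per single edge differ.
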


For Gaussian, Poisson, or Bernoulli weight distribution, the explicit expressions of 
$\alpha_n$ are given as follows
\begin{align}
\alpha_n=
\begin{cases}
\mu^2/4 & \text{ if } \calP=N(\mu,1), \; \calQ=N(0,1) \\
\left(\sqrt{\lambda} - \sqrt{\mu} \right)^2 & \text { if } \calP=\Pois(\lambda), \; \calQ=\Pois(\mu) \\
- 2 \log \left(\sqrt{pq} + \sqrt{(1-p)(1-q)} \right) & \text { if } \calP=\Bern(p), \; \calQ=\Bern(q). 
\end{cases}
\label{eq:alphaexample}
\end{align}

Although the relaxation from TSP to F2F LP is quite drastic, the resulting algorithm is in fact information theoretically optimal for the hidden Hamiltonian cycle recovery problem. 
Specifically, under an assumption which can be easily verified for Poisson, Gaussian or Bernoulli weight distribution,  we show  in \prettyref{sec:IT} that  if there exists any algorithm, efficient or not,  which exactly recovers $x^*$ with
high probability, then it must hold that
\[
\alpha_n \geq (1+o(1)) \log n.
\]
This necessary condition together with sufficient condition \prettyref{eq:ITlimit}
implies that the optimal recovery threshold is at 
$$
\liminf_{n\diverge} \frac{\alpha_n}{\log n} =1,
$$ 
achieved by the F2F LP. 

Applying the results back to the geome scaff
We discuss two consequences of \prettyref{thm:LP_opt}. First, as a corollary of the integrality and the optimality of the F2F LP, it can be shown that the Max-Product belief propagation algorithm introduced in \cite{bayati2011belief} can be used to solve the F2F LP exactly, which, for the Gaussian or Poisson weight distribution, requires $o(n^2 \log n)$ iterations (see \prettyref{sec:other_efficient_algorithms} for details). Second, note that we do not require the edge weights to be real-valued. Thus the formulation also encompasses the case of partial observation, by letting the weight of every edge in $G$ takes on a special ``erasure'' symbol with some probability. See~\prettyref{app:partial} for details.



 
In related work, a version of the hidden Hamiltonian cycle model was studied in~\cite{broder1994finding}, where the observed graph is the superposition of a hidden Hamiltonian cycle and an Erd\"os-R\'enyi random graph with constant average degree $d$. Our measurement model is more general than the one in~\cite{broder1994finding}, but more importantly, the goal in~\cite{broder1994finding} is not to recover the hidden Hamiltonian cycle but rather to find \emph{any} Hamiltonian cycle in  the observed graph, which may not coincide with the hidden one.  (In fact in the regime considered there, exact recovery of the hidden cycle is information theoretically impossible.\footnote{To see this, suppose the hidden Hamiltonian cycle is given by sequence of vertices $(1,2,\ldots, n,1)$. If $q=\Omega(1/n)$, then with a non-vanishing probability there exist $1 \le i \le n-4$ and $ i+2\le j \le n-2$ such that $(i,j)$ and $(i+1,j+1)$ are edges in $G$. Thus we have a new Hamiltonian cycle by deleting edges $(i,i+1)$ and $(j,j+1)$ in the  hidden one and adding edges $(i,j)$ and $(i+1,j+1)$, leading to the impossibility of exact recovery. See \prettyref{fig:c4} for an illustration.})   The fractional $2$-factor relaxation of TSP has been well-studied in the worst case \cite{dantzig1954,BC99,SWZ13}.
It has been shown that for metric TSP (the cost minimization formulation) where the costs are symmetric and satisfy the triangle
inequality, the \emph{integral gap} of F2F is $4/3$; here the integrality gap is defined as the worst-case ratio of the cost
of the optimal integral solution to the cost of the optimal relaxed solution. In contrast, our model does not make any  metric assumption on the graph weights. 


The rest of the paper is organized as follows. In Section \ref{sec:other_efficient_algorithms}, we describe a few other computationally efficient algorithms for the hidden Hamiltonian cycle problem and benchmark their performance against the information-theoretic limit. 
In Section \ref{sec:related_work}, we discuss related work in more details. Sections \ref{sec:warmup} and \ref{sec:proof_theorem1} are devoted to the proof of Theorem \ref{thm:LP_opt}, while Section \ref{sec:IT} characterizes the information theoretic  limit for the recovery problem. In \prettyref{sec:cycle_path} we describe the closely related hidden Hamiltonian path problem and show that it can be reduced to and from the hidden Hamiltonian cycle problem both statistically and computationally. 
Empirical evaluation of various algorithms on both simulated and real DNA datasets are given in Section \ref{sec:emp_results}.

\section{Performance of Other Algorithms}
\label{sec:other_efficient_algorithms}
It is striking to see that the simple F2F LP relaxation of the TSP achieves the
optimal recovery threshold in the hidden Hamiltonian cycle model. 
A natural question to ask is whether there exists other efficient and perhaps even simpler estimator with provable optimality. We 
have considered various efficient algorithms and derived their performance guarantees. 
\begin{table}[ht]
\begin{center}
\begin{tabular}{l|c} 
Efficient Algorithms         & Performance Guarantee \\
\hline \hline
F2F LP &  \multirow{ 2}{*}{$\alpha_n - \log n \to +\infty$}  \\
MaxProduct BP &   \\
\hline
Greedy Merging & $\beta_n - \log n \to + \infty$ \\
\hline
Simple Thresholding & \multirow{ 2}{*}{$\alpha_n - 2 \log n \to +\infty $} \\
Nearest Neighbor & \\
\hline
Spectral Methods     & $\alpha_n \gg n^5$ (Gaussian)  \\
\hline
\end{tabular}
\caption{Sufficient conditions for various efficient algorithms to achieve exact recovery.}
\label{table:various_alg}
\end{center}
\end{table}
As summarized in Table~\ref{table:various_alg}, spectral algorithms is orderwise suboptimal; greedy methods including thresholding achieve the optimal scaling but not the sharp constant. Finally, MaxProduct Belief Propagation also achieves the sharp threshold as a corollary of our result on the F2F LP.
In Table \prettyref{table:various_alg}, 
\begin{align}
\beta_n \triangleq - \frac{3}{2} \log \int  (\diff P)^{2/3} (\diff Q)^{1/3}  \label{eq:def_beta_n}
\end{align}
is the $\frac{1}{3}$-R\'{e}nyi divergence from $Q$ to $P$; cf.~\prettyref{eq:renyi}. By Jensen's and H\"older's inequality, for any distinct $P$ and $Q$, we have 
\begin{equation}
\frac{1}{2} \alpha_n < \beta_n < \alpha_n.
\label{eq:alphabeta}
\end{equation}
For Gaussian weights with $\calP = \calN(\mu,1)$ and $\calQ = \calN(0,1)$, we have $\beta_n = \frac{1}{6} \mu^2 = \frac{2}{3} \alpha_n.$ Simulation of these algorithms confirm these theoretical results. See Figure \ref{fig:simulations_a} in Section \ref{subsec:sim}.

\paragraph{Spectral Methods} Spectral algorithms are powerful methods for recovering the underlying structure
in planted models based on the principal eigenvectors of the 
observed adjacency matrix $A$. Under planted models
such as planted clique~\cite{Alon98} or planted partition models~\cite{McSherry01}, spectral algorithms and their
variants have been shown to achieve either the optimal recovery thresholds~\cite{Massoulie13,BordenaveLelargeMassoulie:2015dq,AbbeSandon15} or the best possible
performance within certain relaxation hierarchies~\cite{Meka15,DeshpandeMontanari15,barak-etal-planted-clique}. 
The rationale behind spectral
algorithms is that the principal eigenvectors of $\expect{A}$ contains information
about underlying structures and the principal eigenvectors of $A$ are close to
those of $\expect{A}$, provided that the spectral gap (the gap between the largest
few eigenvalues and the rest of them) is much larger than the spectral norm of
the perturbation $\|A-\expect{A}\|$.  In our setting, indeed the principal eigenvectors
of $\expect{A}$ contain
information about the ground truth Hamiltonian cycle $C^*$. To see this, let us
consider the Gaussian case where $P=\calN(\mu,1)$ and $Q=\calN(0,1)$ as an
illustrating example. Then the observed matrix can be expressed as
$$
A = \mu C^* + Z,
$$
where with a slight abuse of notation we use $A$ to denote the weighted adjacency matrix
of $G$ and $C^*$ to denote 
the adjacency matrix of the true Hamiltonian cycle; 
$Z$ is a symmetric Gaussian matrix with zero diagonal and 
$Z_{ij}=Z_{ji}$ independently drawn from~$\calN(0,1)$ for $i<j$.
Since $C^*$ is a circulant matrix, its eigenvalues and the corresponding
eigenvectors can be explicitly derived via discrete Fourier transform. It turns out that
the eigenvector corresponding to the second largest eigenvalue of $C^*$ contains perfect information about the
true Hamiltonian cycle. Unfortunately, in contrast to the planted
clique and planted partition models under which $\expect{A}$ is low rank and
has a large eigen-gap, here $C^*$ is full-rank and the gap
between the second and the third largest eigenvalue is on the order of $1/n^2$,
which is much smaller than $\|Z\| = \Theta(\sqrt{n})$. Therefore, for spectral
algorithms to succeed, it requires a very high signal level $\mu^2 \gg n^5$. This agrees with the empirical performance on simulated data and is highly suboptimal as compared to the sufficient condition \prettyref{eq:ITlimit} of F2F LP: 
$\mu^2 -4 \log n \to \infty$.

\paragraph{Greedy Methods} To recover the hidden Hamiltonian cycle, we can also resort to greedy methods. It turns out that the following 
simple thresholding algorithm achieves the optimal recovery threshold \prettyref{eq:ITlimit} within a factor of two:
for each vertex, keep the two incident edges with the two largest weights and delete the other $n-3$ edges. 
The resulting graph has degree at most $2$. It can be shown  that the resulting graph coincides with $C^*$ 
with high probability provided that  $\alpha_n - 2 \log n \to + \infty$.  

Another well-known greedy heuristic is the following nearest-neighbor algorithm. Start on an arbitrary vertex as the current vertex and 
find the edge with the largest weight connecting the current vertex and an unvisited vertex $v$,
set the current vertex to $v$ and mark $v$ as visited. Repeat until all vertices have been visited.
Let $ v_1, \ldots, v_{n}$ denote the sequence of visited vertices and output the Hamiltonian cycle formed by 
$(v_1, \ldots, v_n, v_1)$. It can be shown (see \prettyref{app:greedy}) that the resulting Hamiltonian cycle coincides with $C^*$ 
with high probability provided that  $\alpha_n - 2 \log n \to + \infty$.  

Finally, we consider a greedy merging algorithm proposed in~\cite{motahari2013information}: connect pairs of vertices with the largest edge weights until all vertices have degree two. 
The output is a $2$-factor, and it can be shown  that the output $2$-factor coincides with $C^*$ with high probability provided that
$\beta_n - \log n \to  +\infty $, strictly improving on the performance guarantee of previous two greedy algorithms.

Notice that the aforementioned greedy algorithms only exploit local information, and do not take the into account the global cycle structure. Naturally, none of them achieves the optimal threshold \prettyref{eq:ITlimit}. See \prettyref{app:greedy}
for further details.

\paragraph {Max-Product Belief Propagation}
We can improve on the simple thresholding algorithm using an iterative message-passing algorithm 
known as max-product belief propagation. Specifically, at each time $t=0, 1, \ldots, t_f$, each vertex $i$ sends a real-valued message $m_{i\to j}(t)$
to each of its neighbors $j$. Messages are initialized by $m_{i\to j}(0)=w_e$ for all $e =(i,j)$. For $t\ge 1$, messages transmitted by 
vertex $i$ in iteration $t$ are updated based on messages received in iteration $t-1$ recursively as follows:
$$
m_{ i \to j} (t) = w_e - \mathop{\text{2nd}\max}_{\ell \neq j} \left\{ m_{\ell \to i} (t-1) \right\}, 
$$
where $\text{2nd}\max $ denotes the second largest value. At the end of the final iteration $t_f$,  for every 
vertex, keep the two incident edges with the two largest received message values and delete the other $n-3$ edges,
and output the resulting graph. Note that BP with one iteration $t_f=1$ reduces to the simple thresholding algorithm. 
 
Belief propagation algorithm is studied in~\cite{bayati2011belief} to find the $b$-factor with the maximum weight for $b \ge 1$; 
it is shown that if the fractional
$b$-factor LP relaxation has no fractional optimum solution, then 
the output of BP coincides with the optimal $b$-factor when $t_f \ge \lceil  2 n w^*/\epsilon \rceil$,
where $w^*$ is the weight of the optimal $b$-factor and $\epsilon$ is the difference between the weight
of the optimal $b$-factor and the second largest weight of $b$-factors. 
Our optimality result of F2F implies that  if $\alpha_n - \log n \to +\infty$, then with high probability, 
F2F has no fractional optimum solution and  the optimal $2$-factor coincides with the ground truth $x^*$;
Therefore, by combining our result with results of BP in \cite{bayati2011belief}, we immediately conclude
that  the output of BP coincides with $x^*$ with high probability after $t_f$ iterations, 
provided that $\alpha_n - \log n \to +\infty$. 
For both the Gaussian and the Poisson model, with high probability the number of iterations $t_f$ of the BP algorithm is in fact $o(n^2 \log n)$, nearly linear in the problem size (see \prettyref{app:BPtime} for a justification).

\section{Related Work}
\label{sec:related_work}

We discuss additional related work before presenting the proof
of our main results. Because of the NP-hardness of TSP, researchers have imposed 
structural assumptions on the costs (weights) and devised
efficient approximation algorithms. One natural 
assumption is the metric assumption under which 
the costs are symmetric ($c_{ij}=c_{ji}$ for all $i, j \in V$)
and satisfy the triangle
inequality ($ c_{ik} \leq c_{ij}+c_{jk}$ for all $i,j, k \in V$).  Metric TSP turns out
to be still NP-hard, as shown by
reduction from the NP-hard Hamiltonian cycle problem~\cite[Theorem 58.1]{schrijver2003combinatorial}.
The best approximation algorithm for metric TSP currently known is
Christofides' algorithm which finds a Hamiltonian cycle of cost
at most a factor of $3/2$ times the cost of an optimal Hamiltonian cycle. 

\paragraph{Integrality gap of LP relaxations of TSP}
Various relaxations of TSP has also been extensively studied under the metric 
assumption. 
To measure the tightness of LP relaxations, a commonly used figure of merit is the \emph{integrality gap}. 
As is the convention in the TSP literature, the optimization is formulated as a minimization problem with nonnegative costs.
In general, the integrality gap is defined as the supremum of the
ratio $\mathsf{OPT}/\mathsf{FRAC}$,
over all instances of the problem, where $\mathsf{FRAC}$ denotes the 
objective value of the optimal fractional solution and $\mathsf{OPT}$ denotes
the objective value of the optimal integral solution~\cite{chlamtac2012convex}. 
Note that by definition the integrality gap is always at least one. 
Dropping the integer constraints \prettyref{eq:tsp_integer} in ILP formulation
of TSP \prettyref{eq:tsp_ilp} leads to a LP relaxation known as \emph{Subtour LP}~\cite{dantzig1954,held1970traveling}.
The integrality gap of the subtour LP  is known to be between 
$4/3$ and $3/2$.
The integrality gap of fractional $2$-factor LP \prettyref{eq:F2F} is  shown in
\cite{BC99,SWZ13} to be $4/3$.
In contrast to the previous worst-case approximation results on metric TSP, this paper focuses on a
planted instance of TSP, where we impose probabilistic assumption on the costs (weights)
and the goal is to recover the hidden Hamiltonian cycle. In particular, the metric assumption is
not fulfilled in our hidden Hamiltonian cycle model and hence the previous results do not apply. Our results
imply that when $\alpha_n -\log n \to +\infty$, the optimal solution of F2F coincides with the optimal solution
of TSP with probability tending to $1$, where the probability is taken over the randomness of weights $w$ 
in the hidden Hamiltonian cycle model. 
%
In other words, for ``typical'' instances of the hidden Hamiltonian cycle model, the optimal objective value of 
TSP is the same as that of F2F. 

\paragraph{SDP relaxations of TSP}
Semidefinite programming (SDP) relaxations of the traveling salesman problem have also been extensively studied in the literature. 
A classical SDP relaxation of TSP  due to \cite{cvetkovic1999semidefinite} is obtained by imposing an extra constraint on
the second largest eigenvalue of a Hamiltonian cycle in F2F LP \prettyref{eq:F2F}.  A more sophisticated SDP relaxation is derived in \cite{zhao1998semidefinite} by viewing the TSP as a quadratic assignment problem, 
from which one can obtain a simpler SDP relaxation of TSP based on association schemes~\cite{de2008semidefinite}. 
This SDP relaxation in~\cite{de2008semidefinite} is shown to dominate that of \cite{cvetkovic1999semidefinite}.
Since all these SDP relaxations are tighter than the F2F LP, our results immediately imply that the optimal solutions of 
these SDP relaxations coincide with the true Hamiltonian cycle $x^*$ with high probability provided $\alpha_n-\log n \to +\infty$.

\paragraph{Data seriation}
The problem of recovering a hidden Hamiltonian cycle (path) in a weighted complete graph 
falls into a general problem known as data seriation~\cite{kendall1971abundance} or data stringing~\cite{DataStringing11}.
In particular, we are given a similarity matrix $Y$ for $n$ objects,
and are interested in seriating or stringing the data, by ordering the $n$ objects
 so that similar objects $i$ and $j$ are near each other. 
 Data seriation has diverse applications
 ranging from data visualization, DNA sequencing to functional data analysis~\cite{DataStringing11} and archaeological dating~\cite{robinson1951method}.  
Most previous work on data seriation focuses on the noiseless case~\cite{robinson1951method,kendall1971abundance}, 
where there is an unknown ordering of $n$ objects so that if object $j$ is closer  than object $k$ to object $i$ in the ordering, then
 $Y_{ij} \ge Y_{ik}$, \ie, the similarity between $i$ and $j$ is always no less than the similarity between $i$ and $k$. Such a matrix $Y$ is called
 \emph{Robinson matrix}.   It is shown in~\cite{SeriationSpectral98} that
one can recover the underlying true ordering of objects up to a global shift by component-wisely sorting the second eigenvector of 
 the Laplacian matrix associated with $Y$ if $Y$ is a Robinson matrix. 
The data seriation problem has also been formulated as a quadratic assignment problem
and convex relaxations are derived~\cite{fogel2013convex,lim2014beyond} and references therein).

One interesting generalization of the hidden Hamiltonian cycle model is to 
extend the hidden structure from a cycle to $k$-regular graph for general $k\ge 2$, e.g., nearest-neighbor graphs, which can potentially better fit the genome assembly data.
The underlying $k$-regular graph can represent the hidden geometric
structure and  the observed graph can be viewed as a realization of the Watts-Strogatz small-world graph~\cite{watts1998collective} if
the weight distribution is Bernoulli.  
Recent work~\cite{cai2017detection} has studied the problem of detecting and recovering
the underlying $k$-regular graph under the small-world graph model and derived 
conditions for reliable detection and recovery; however, the information limit and the optimal algorithm
remain open. 

Finally, we mention that the R\'enyi divergence of order $1/2$ also plays a key role in determining the exact recovery threshold for community detection under the stochastic block models ~\cite{Abbe14,mossel2015consistency,zhang2016minimax,JL15,AbbeSandon15}.



\section{Proof Techniques and a Simpler Result}
\label{sec:warmup}

The proof of the main result, Theorem \ref{thm:LP_opt}, is quite involved. In this section, we will discuss the high-level ideas and the difference with the conventional proof using dual certificates. As a warm-up we also prove a weaker version of the result on the 2-factor ILP. The proof of the full result is given in Section \ref{sec:proof_theorem1}.

\subsection{Proof Techniques}

A standard technique for analyzing convex relaxations is the \emph{dual certificate argument}, which amounts to constructing the dual variables so that the desired KKT conditions are satisfied for the primal variable corresponding to the ground truth. 
This type of argument has been widely used, for instance, for proving the optimality of SDP relaxations for community detection under stochastic block models~\cite{Abbe14,HajekWuXuSDP14,Bandeira15,HajekWuXuSDP15,ABBK,perry2015semidefinite,HajekWuXu_one_sdp15}. 
However, for the F2F LP \prettyref{eq:F2F}, we were only able to find explicit constructions of dual certificates that attain the optimal threshold within a factor of two. Instead, the proof of \prettyref{thm:LP_opt} is by means of a direct \emph{primal argument}, which shows that with high probability no other vertices of the F2F polytope has a higher objective value than that of the ground truth. Nevertheless, it is still instructive to describe this dual construction before explaining the ideas of the primal proof.

 To certify the optimality of $x^*$ for F2F LP, it reduces to constructing a dual variable $u \in \reals^n$ (corresponding to
the degree constraints)
such that for every edge $(i,j)$,
\begin{align}
u_i + u_j & \le w_{ij} , \quad \text{ if $x^*_{ij}=1$},  \label{eq:dual_certificate_1}\\
u_i + u_j & \ge w_{ij},  \quad \text{ if $x^*_{ij}=0$}  \label{eq:dual_certificate_2}. 
\end{align}
A simple choice of $u$ is
\begin{equation}
u_i = \frac{1}{2} \min_j \left\{  w_{ij}: x^*_{ij} = 1 \right\}.
\label{eq:dual}
\end{equation}
 Then \prettyref{eq:dual_certificate_1} is fulfilled automatically and \prettyref{eq:dual_certificate_2} can be shown (see \prettyref{app:dual}) to hold with high probability provided that 
\begin{align}
\beta_n  - \log n \to + \infty. \label{eq:dual_condition}
\end{align}
where $\beta_n$ is the $\frac{1}{3}$-R\'{e}nyi divergence defined in \prettyref{eq:def_beta_n}.
Since $\alpha_n /2 < \beta_n < \alpha_n$ by \prettyref{eq:alphabeta}, this construction shows that 
F2F achieves the optimal recovery threshold by at most a multiplicative factor of $2$. For specific distributions, this factor-of-two gap can be further improved, e.g., to $\frac{3}{2}$ for Gaussian weights, for which we have $P=N(\mu,1)$ and $Q=N(0,1)$ and $\beta = \frac{1}{6}\mu^2 = \frac{2}{3} \alpha$. However, this certificate does not get us all the way to the information limit \prettyref{eq:ITlimit}.


Departing from the usual dual certificate argument, our proof of the optimality of F2F relaxation 
relies on delicate primal analysis. In particular, we show that  
$\langle w, x - x^* \rangle < 0$ for any vertex (extremal point) of the F2F polytope $x \neq x^*$ with high probability.
It is known that  F2F polytope is not integral in the sense that some of its vertices is fractional. Fortunately, it turns out for any vertex $x$, its fractional entry $x_e$ must be $1/2$. Thanks to this \emph{half-integrality} property,
we can encode the difference $y \triangleq 2(x-x^*)$ as a
bicolored multigraph $G_y$ with a total weight $w(G_y)=2\langle w, x - x^* \rangle$. 
Finally, we bound $w(G_y)$ via a \emph{divide-and-conquer} argument by first decomposing $G_y$ into an edge-disjoint union of graphs in a family with simpler structures 
and then proving that for every graph $H$ in this family, its total weight $w(H)$ is negative with high probability under condition \prettyref{eq:ITlimit}.
Our decomposition of $G_y$ heavily exploits the fact that $G_y$ is a balanced multigraph in the sense that every vertex has an equal number of incident red
edges and blue edges, and the classical graph-theoretic result that every connected balanced multigraph has an Eulerian
circuit with edges alternating in colors. 




\subsection{$2$-factor (2F) Integer Linear Programming Relaxation}
\label{sec:2f}


The $2$-factor (2F) Integer Linear Programming  relaxation of the TSP is 
\begin{align}
\widehat{x}_{\rm 2F} = \arg \max_{x}  & \;   \iprod{w}{x}   \label{eq:2factor}  \\
\text{s.t.  }  & \;    x \left(  \delta(v) \right) =2 \nonumber \\  
& \; x_e \in \{ 0, 1\}. \nonumber 
\end{align}

The $2$-factor ILP is the same as the F2F LP \prettyref{eq:F2F} except that the $x_e$'s have integrality constraints, and is therefore a tighter relaxation of the original TSP than F2F LP. As a warm-up for the optimality proof of F2F LP, 
we provide a much simpler proof, showing that the optimal
solution of the 2F ILP coincides with the true cycle $x^*$ with high probability, under the same condition that
$\alpha_n - \log n \to +\infty$. 
We note that although it is not an LP,  the $2$F ILP can be solvable in $O(n^4)$ time using
a variant of the blossom algorithm \cite{edmonds65,edmonds1965maximum,letchford2008odd}; however, 
finding an efficient and scalable implementation of the blossom algorithm can be challenging in practice~\cite{Grötschel1987}.


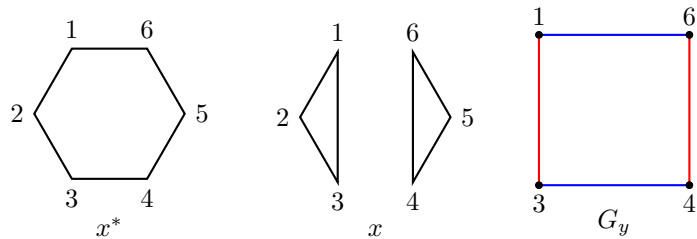
\begin{figure}[ht]%
\centering
\begin{tikzpicture}[scale=1,every edge/.append style = {thick,line cap=round},font=\small]
\draw[thick] (0:1) node[right] {$5$} --(60:1) node[above] {$6$} -- (120:1) node[above] {$1$} -- (180:1) node[left] {$2$} -- (240:1) node[below] {$3$} -- (300:1) node[below] {$4$} --cycle;
		\node at (0,-1.5) {$x^*$};
 \end{tikzpicture}
~~~
\begin{tikzpicture}[scale=1,every edge/.append style = {thick,line cap=round},font=\small]
\draw[thick]  (0:1) node[right] {$5$} --(60:1) node[above] {$6$} -- (300:1) node[below] {$4$} --cycle;
\draw[thick]  (120:1) node[above] {$1$}  -- (180:1) node[left] {$2$}  -- (240:1) node[below] {$3$}  --cycle;
		\node at (0,-1.5) {$x$};
 \end{tikzpicture}
~~
\begin{tikzpicture}[scale=1,every edge/.append style = {thick,line cap=round},font=\small]
  \node[coordinate,label=above:$6$] (v2) at (0,1)  {};
  \node[coordinate,label=above:$1$] (v1) at (-2,1)  {};
  \node[coordinate,label=below:$4$] (v3) at (0,-1)  {};
  \node[coordinate,label=below:$3$] (v4) at (-2,-1)  {};
 \redsingleedge{(v2)}{(v3)}
    \bluesingleedge{(v1)}{(v2)}
    \bluesingleedge{(v3)}{(v4)}
    \redsingleedge{(v1)}{(v4)}
    \node at (-1,-1.5) {$G_y$};
 \end{tikzpicture}
\caption{The ground truth $x^*$ is a cycle of length $6$, $x$ is a feasible solution to \prettyref{eq:2factor} corresponding to two disjoint triangles, and the graph $G_y$ for $y=x-x^*$ is an alternating 4-cycle.}%
\label{fig:2fexample}%
\end{figure}

Let $x$ denote the adjacency vector of a given $2$-factor. 
To prove that $x^*$ is the unique optimal solution to the 2F
ILP, it suffices to show $\iprod{w}{x-x^*}
<0$ for the adjacency vector of any $2$-factor 
$x \neq x^*$. To capture the difference between $x$ and $x^*,$
we define $y \in  \{0,\pm 1\}^{\binom{n}{2}}$ by
\begin{align}
y=x-x^*. \label{eq:y_def_2f}
\end{align}
Define a simple graph $G_y$ with bicolored edge
whose adjacency matrix is $|y|$ with isolated vertices removed and each edge is colored red if
$y_e=-1$ and blue if $y_e=+1$ (see \prettyref{fig:2fexample} for an example).
Furthermore, for a given bicolored graph $B$, 
we define its weight as
$$
w(B) \triangleq  \sum_{\text{blue $e\in E(B)$}} w_e - \sum_{\text{red $e \in E(B)$}}  w_e \, .
$$
Then $w(G_y)= \iprod{w}{x-x^*}$. 

A bicolored graph is \emph{balanced} if
 for every vertex the number of red incident edges is equal to the number of blue incident edges.
 Since $x(\delta(v)) =2$ and $x^*(\delta(v))=2$ for every vertex $v$, it follows that $y(\delta(v))=0$
and thus $G_y$ is balanced.
Define
\begin{align*}
\calB &=\left\{ B: \text{$B$ is a simple, connected, and balanced bicolored graph} \right\} \\
\calB^* & =\{B \in \calB: V(B) \subset [n], x_e^*=1 \text{ for every red edge $e \in E(B)$} \},
\end{align*}
where $V(B)$ and $E(B)$ denote the vertex set and edge set of $B$, respectively. 

Let $B_1, \ldots, B_m$ denote the connected components of $G_y$. 
Since each connected component of $G_y$ is balanced, it follows that 
$B_i \in \calB^*$ and 
$$
w\left( G_y \right)  = \sum_{i=1}^m w(B_i).
$$
Hence to show $w(G_y)<0$ for all possible $G_y$, it reduces to proving that
$w(B)<0$ for all $B \in \calB^*$. 

Fix an even integer $\ell \ge 4$ and let 
$$
\calB^*_\ell = \{ B \in \calB^*: |E(B)|=\ell\}.
$$
Fix any $B\in \calB^*_\ell$. By the balancedness, $B$ has $\ell/2$ red 
edges and $\ell/2$ blue  edges. Hence, 
$$
w(B) \eqdistr  \sum_{i=1}^{\ell/2} Y_i - \sum_{i=1}^{\ell/2} X_i ,
$$
where $X_i$'s and $Y_i$'s are independent sequences of random variables such that  $X_i$'s are i.i.d.\ copies of $\diff P/\diff Q$ under distribution $P$ and $Y_i $'s are i.i.d.\ copies of $\diff P/\diff Q$
under distribution $Q$; the notation $\eqdistr$ denotes equality in distribution. 
It follows from the Chernoff's inequality (cf.\ the large-deviation bound \prettyref{eq:LDXY} in \prettyref{app:LD}) that 
\begin{align}
\prob{ w(B) \ge 0} = \prob{ \sum_{i=1}^{\ell/2} Y_i - \sum_{i=1}^{\ell/2} X_i \ge 0}  \le \exp \left( -\alpha_n \ell/2 \right). \label{eq:chernoff_renyi}
\end{align}

Next we claim that there are at most $(2n)^{\ell/2}$ different graphs $B$ in $\calB_\ell^*$.
The proof of the claim is deferred to the end of this section. 
Combining the union bound with \prettyref{eq:chernoff_renyi} gives that 
$$
\prob{ \max_{B \in \calB_\ell^* } w(B) \ge 0  } \le |\calB_\ell^*| \exp \left( -\alpha_n \ell/2 \right)
\le \exp \left\{ - \left(\alpha_n - \log (2n) \right)  \ell /2 \right\}. 
$$
Taking another union bound over all integers $\ell \ge 4,$ 
we get the desired result. 
\begin{align*}
\prob{ \max_{B \in \calB^*} w(B) \ge 0} & \le \sum_{\ell=4}^\infty
\prob{ \max_{B \in \calB_\ell^* } w(B) \ge 0  } \\
& \le \sum_{\ell=4}^\infty \exp \left\{ - \left(\alpha_n - \log (2n) \right) \ell /2 \right\} \\
& \le \frac{\exp \left\{ -2 \left(\alpha_n - \log (2n) \right) \right\} }{1- \exp \left\{ - \left(\alpha_n-\log (2n) \right)/2 \right\} } \overset{\prettyref{eq:ITlimit}}{\to} 0.
\end{align*}

We are left to show $|\calB_\ell^*| \le (2n)^{\ell/2}$. 
This follows from the following classical graph-theoretic result that every connected balanced multigraph 
$G$ has an \emph{alternating} Eulerian circuit, that is, the edges in the circuit alternate in colors. 
%
\begin{lemma}
\label{lmm:eulerian}  
 Every connected balanced bicolored multigraph $G$ has an alternating Eulerian circuit.
\end{lemma}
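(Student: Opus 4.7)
The plan is to prove this by the method of transition systems. At each vertex $v$, the balance condition says the red-degree and blue-degree of $v$ coincide, so we may fix an arbitrary perfect matching $M_v$ between the red edges incident to $v$ and the blue edges incident to $v$; we call the collection $\calM=\{M_v\}_{v\in V(G)}$ a transition system. Given $\calM$, every edge $e$ at $v$ has a unique partner $M_v(e)$ of the opposite color at $v$, and starting from any edge and repeatedly following partner edges at each successive vertex decomposes $E(G)$ uniquely into a disjoint family of closed trails, each of which is alternating in color by construction. An alternating Eulerian circuit is precisely what we obtain when this decomposition consists of a single trail.

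The second step is to show that $\calM$ can always be chosen so that the decomposition has a single component. Starting from any $\calM$ with decomposition $C_1,\ldots,C_k$ and $k\ge 2$, connectivity of $G$ guarantees a vertex $v$ incident to edges of two distinct components $C_i,C_j$, since otherwise the vertex sets of the $C_\ell$'s would give a nontrivial partition of $V(G)$ into edge-disconnected blocks, contradicting connectivity of $G$. At such a vertex, each of $C_i$ and $C_j$ contributes at least one pair to $M_v$, so we may choose a red-blue pair $(e_1,e_3)\in M_v$ traversed by $C_i$ and a red-blue pair $(e_2,e_4)\in M_v$ traversed by $C_j$, with $e_1,e_2$ red and $e_3,e_4$ blue. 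Replacing these two pairs in $M_v$ by $(e_1,e_4)$ and $(e_2,e_3)$ yields a new transition system $\calM'$ whose decomposition differs from that of $\calM$ only by splicing $C_i$ and $C_j$ into one closed trail, so that the number of components strictly decreases. Iterating this swap terminates with a single closed trail, which is the desired alternating Eulerian circuit.

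The hard part will be verifying the splicing claim, namely that under $\calM'$ the former trails $C_i$ and $C_j$ merge into a single trail rather than producing new components. The check is local at $v$: under $\calM'$, entering $v$ along $e_1$ forces us to leave along $e_4$ and to traverse the rest of the former $C_j$-trail (possibly making other passages through $v$ that are unaffected by the swap) until returning to $v$ along $e_2$, at which point the pair $(e_2,e_3)$ sends us out along $e_3$ to traverse the remainder of the former $C_i$-trail back to $e_1$; this exhausts exactly the edges of $C_i\cup C_j$ and closes into one trail, while all other $C_\ell$ are untouched since only pairings at $v$ were altered. The existence of the witnessing vertex $v$ is the connectivity argument above, and termination of the swap procedure is immediate because the component count is a positive integer that strictly decreases at each step.
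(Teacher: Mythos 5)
Your proof is correct, but it takes a genuinely different route from the paper's. The paper starts from an \emph{arbitrary} Eulerian circuit (which exists since balance forces even degrees), then repeatedly finds a monochromatic adjacent pair $(v_{i-1},v_i),(v_i,v_{i+1})$, locates a second passage through $v_i$ with the opposite monochromatic pattern (such a passage exists by a parity count at $v_i$), and reverses the intervening sub-trail; this repairs two color conflicts without creating new ones, so the number of conflicts strictly decreases and the process terminates. Your proof instead builds color-alternation in from the start via a transition system $\{M_v\}$ pairing red and blue edge-ends at each vertex, which decomposes $E(G)$ into alternating closed trails automatically; the work is then moved to reducing the \emph{number of trails} to one, which you do by a local swap of two pairs at a vertex shared by two trails, using connectivity to guarantee such a vertex exists. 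The two arguments are dual in flavor: the paper fixes the single-circuit property first and then repairs alternation, while you fix alternation first and then repair connectivity of the decomposition. Your version has the advantage that the alternation invariant is structural rather than maintained step by step, and the splice is a clean local operation whose effect on the trail decomposition is easy to verify; the paper's version has the advantage of not needing the transition-system formalism. Both terminate by a strictly decreasing nonnegative integer (number of monochromatic adjacencies vs.\ number of trails), and both handle multigraphs without loops without issue.

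One small point worth making explicit in your write-up: the matching $M_v$ should be taken between red and blue \emph{edge-ends} at $v$ (equivalently, between incidences of edges at $v$, counted with multiplicity). For the paper's loopless multigraphs each edge contributes exactly one end at each of its two endpoints, so this coincides with matching incident edges, and the balance condition gives $|R_v|=|B_v|$ as needed; but stating it in terms of edge-ends makes the "degree-$2$ transition graph, hence disjoint union of cycles" step airtight.
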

The lemma is proved in~\cite[Theorem 1]{Kotzig1968} in a more general form 
(see also \cite[Corollary 1]{Pevzner95}).
For completeness, we provide a short proof in \prettyref{app:eulerian}. 

In view of \prettyref{lmm:eulerian}, for every $B \in \calB_\ell^*$,
it must have a Eulerian circuit $T$ given by the sequence of
 $(v_0,v_1, \ldots, v_{\ell-1},v_{\ell}=v_0)$ of vertices (vertices may repeat) 
 such that $v_i \in [n]$ and $(v_{i},v_{i+1})$ is
a red edge for even $i$ and blue edge for odd $i$ in $B$. Let $\calT$ denote
the set of all possible such Eulerian circuits. Moreover, every Eulerian circuit $T \in \calT$ uniquely
determines a $B \in \calB_\ell^*$, because the vertex set $V(B)$ is the union
of vertices $v_i$'s and the colored edge set $E(B)$ is the union of colored edges $(v_{i},v_{i+1})$'s in
$T$. Hence, $|\calB_\ell^*| \le |\calT|$. To enumerate all possible $T \in \calT$, 
it suffices to enumerate all the possible labelings of vertices in $T.$
Recall that by definition, for every red edge $e=(v_{i},v_{i+1})$ in $T$,
$x^*_e=1$. Thus the two endpoints $v_{i}$ and $v_{i+1}$ 
must be neighbors in the true cycle corresponding to $x^*$, and hence once the vertex labeling of 
$v_{i}$ is fixed, there are at most $2$ different choices for the vertex labeling of $v_{i+1}$. 
Therefore, we enumerate all possible 
Eulerian circuits $T\in \calT$ 
by sequentially choose the vertex labeling of $v_{i}$ from $i=0$ to $i=\ell-1.$
Given the vertex labelings of $(v_0, \ldots, v_{i-1})$, 
the number of choices of the vertex labeling of $v_{i}$ is at most $n$ 
for even $i$ and $2$ for odd $i.$ Hence, $|\calT| \le (2n)^{\ell/2}$,
which further implies that $|\calB_\ell^*| \le (2n)^{\ell/2}$.

\section{Proof of Theorem \ref{thm:LP_opt}}
\label{sec:proof_theorem1}
In this section, 
we prove that the optimal solution of the fractional $2$-factor 
coincides with $x^*$ with high probability, provided that $\alpha_n-\log n \to \infty$. 
This is the bulk of the paper.

\subsection{Graph Notations}\label{sec:graph_notations}
We describe several key graph-theoretic notations used in the proof. 
We start with multigraphs. 
Formally,
a multigraph $G$ is an ordered pair $(V,E)$  with a vertex set $V=V(G)$ and
an edge \emph{multiset} $E=E(G)$ consisting of subsets of $V(G)$ of size $2$. 
Note that by definition multigraphs do not have self-loops. 
A multi-edge is a set of edges in $E(G)$ with the same end points.  
The multiplicity of an edge is its multiplicity as an element in $E(G)$.   
We call a multi-edge single and double if its edge multiplicity
is $1$ and $2$, respectively. Note that a double edge $(u,v)$ 
refers to the set of two edges connecting vertices $u$ and $v$. 
We say a multigraph $G$ is bicolored, if every distinct element in $E(G)$ is colored in 
either red or blue and the repeated copies of an element all have the same color.

 For two multigraphs $G$ and $H$ on the same set of vertices,
we define $G-H$ to be the multigraph induced by the edge multiset $E(G) \setminus E(H)$.
The union of multigraphs $G$ and $H$  is the multigraph $G\cup H$ with vertex set $V(G) \cup V(H)$ and 
edge multiset $E(G) \cup 
E(H)$.\footnote{Here the union of multisets is defined 
so that the multiplicity of each elements adds up. For example, $\{a,a,b\} \cup \{a,b,c\}=\{a,a,a,b,b,c\}$.}
By definition, the multiplicity of an element in $E(G) \cup E(H)$ is the sum of its
multiplicity in $E(G)$ and $E(H)$.  When $E(G) \cap E(H) =\emptyset$, $G\cup H$ is called an edge-disjoint union. When 
 $V(G) \cap V(H) =\emptyset$, $G\cup H$ is called an vertex-disjoint union.

 A \emph{walk} in a multigraph $G$ is a sequence  $(v_0, v_1, \ldots, v_m)$ of vertices (which may repeat)   such that
$(v_{i-1},v_i) \in E(G)$ for $ 1 \le i\le m$.
A \emph{trail} in a multigraph $G$ is a walk 
$(v_0, v_1, \ldots, v_m)$ such that for all $1 \le i \le m$, the number of times that edge $(v_{i-1},v_i)$ appears in the walk is
no more than its edge multiplicity in $E(G)$. 
 %
 A trail is \emph{closed} if the starting and ending vertices are the same. 
 A \emph{circuit} is a closed trail. 
 An \emph{Eulerian trail} in a multigraph $G$ is a trail $(v_0, v_1, \ldots, v_m)$ such that
for every $e \in E(G)$, the number of times that it appears in the trail coincides with its edge multiplicity in 
$E(G)$. An \emph{Eulerian circuit} is a closed Eulerian trail. 
A \emph{path} is a trail with no repeated vertex. A \emph{cycle} consists a path
plus an edge from its last vertex to the first.



\subsection{Proof Outline} 
 Let $x^*=(x^*_e)$ denote the adjacency vector of the hidden Hamiltonian cycle (ground truth). 
The feasible set of the F2F LP \prettyref{eq:F2F} is the \emph{F2F polytope}:
\begin{equation}
\calQ \triangleq 
\sth{ x \in [0,1]^{\binom{n}{2}}: 
x(\delta(v)) =2, \forall v \in [n]}.
\label{eq:F2Fpoly}
\end{equation}

To prove that $x^*$ is the unique optimal solution to the F2F LP with high probability,
it suffices to show that  $\langle w, x - x^* \rangle < 0$ holds with high probability for any vertex (extremal 
point) of the F2F polytope $x$ other than $x^*$.   It turns out that the vertices 
of  the F2F polytope $\calQ$ has the following simple characterization~\cite{balinski1965integer,BC99,SWZ13}. 
First of all, for any vertex $x$, its fractional entry must be a half-integer, i.e., 
\begin{equation}
x_e \in \{0,1/2,1\}, \quad \forall e. 
\label{eq:halfint}
\end{equation} 
Furthermore, if we define the \emph{support graph} of $x$ 
as be the graph with vertex set $[n]$
and edge set $\{e: x_e \neq 0\}$, then
each connected component of
the support graph of $x$  must be one of the following two cases: it is
either
\begin{enumerate}
\item a cycle of at least three vertices with 
$x_e=1$ for all edges $e$ in the cycle, 
\item or consists of an even number of odd-sized cycles  with $x_e=1/2$ for all edges $e$ in the cycles that are connected by
paths of edges $e$ with $x_e=1$.  In this case, if we remove the edges in the odd cycles, the resulting graph is a spanning disjoint set of paths formed by edges $e$ with $x_e=1$. 
\end{enumerate}
It turns out that, among the aforementioned characterizations of the vertices of the F2F polytope, 
our analysis of the LP relaxation uses only the half-integrality property \prettyref{eq:halfint}.

\begin{figure}[H]
\centering
\begin{tabular}{cc}
 \begin{tikzpicture}[scale=1,every edge/.append style = {thick,line cap=round}]
 \bluesingleedge{(-1,2)}{(1,2)}
 \redsingleedge{(-1,2)}{(-1,-2)}
 \redsingleedge{(1,2)}{(1,-2)}
 \bluesingleedge{(-1,-2)}{(1,-2)}
   \redsingleedge{(-1,2)}{(0,1)}
   \redsingleedge{(1,2)}{(0,1)}
     \redsingleedge{(-1,-2)}{(0,-1)}
   \redsingleedge{(1,-2)}{(0,-1)}
   \bluesingleedge{(0,1)}{(0,-1)}
    \node at (0,0) [left]  {$1$};
     \node at (-1,0) [left]  {$1$};
     \node at (1,0) [right]  {$1$};
       \node at (0,2) [below]  {$\frac{1}{2}$};
       \node at (0,-2) [above]  {$\frac{1}{2}$};
       \node at (-0.3,1.2) [left]  {$\frac{1}{2}$};
       \node at (0.3,1.2) [right]  {$\frac{1}{2}$};
       \node at (-0.3,-1.2) [left]  {$\frac{1}{2}$};
       \node at (0.3,-1.2) [right]  {$\frac{1}{2}$};			
\end{tikzpicture}
\qquad \qquad &
\begin{tikzpicture}[scale=1,every edge/.append style = {thick,line cap=round}]
 \bluesingleedge{(3,2)}{(5,2)}
  \bluesingleedge{(3,-2)}{(5,-2)}
   \redsingleedge{(3,2)}{(4,1)}
   \redsingleedge{(5,2)}{(4,1)}
   \redsingleedge{(3,-2)}{(4,-1)}
   \redsingleedge{(5,-2)}{(4,-1)}
   \bluedoubleedge{(4,1)}{(4,-1)}
 \end{tikzpicture}
\\
(a) \qquad \qquad  & (b)  
\end{tabular}
\caption{(a): The support graph of a fractional 
vertex $x$ of the F2F polytope with
$n=6$. The edges in the support graph of $x^*$
are highlighted in red. (b): The multigraph representation of $y=2(x-x^*)$.}
\end{figure}
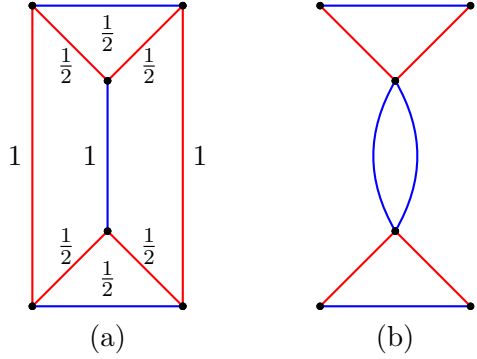

To capture the difference between a given vertex $x$ and the true solution $x^*$, we use the following multigraph representation: define $y \in \reals^{\binom{n}{2}}$ by 
\begin{equation}
y = 2(x-x^*),
\label{eq:y2x}
\end{equation}
with $y_{e} = 2 \left( x_{e }-x^*_{e} \right) \in \{0, \pm 1, \pm 2\}$.
Define a multi-graph $G_y$ whose adjacency matrix is $|y|$ with isolated vertices removed 
and each edge $e$ is colored red if $y_e<0$ or blue if $y_e>0$.
In particular, the edge multiplicity of $G_y$ is at most $2$. Comparing to \prettyref{eq:y_def_2f}, the extra factor of 2 in \prettyref{eq:y2x} is to ensure that $y$ is still integral;
as a consequence, 
$G_y$ may be a multigraph with multiplicity $2$ instead of a simple graph. 
For any given bicolored multigraph $F$, we define its weight as  
$$
w(F) \triangleq \sum_{\text{blue $e \in E(F)$}} w_e - \sum_{\text{red $e \in E(F)$}}  w_e \, ,
$$
where the summation above includes all repeated copies of $e$ in $E(F).$
Then $w(G_y)=\iprod{w}{x-x^*}$. Hence, to prove that $x^*$ is the unique optimal solution to the LP program with high probability, 
it reduces to showing that $w(G_y) <0$ for all possible $G_y$ constructed from the
extremal point $x \neq x^*$ with high probability. 

Instead of first calculating the probability of $w(G_y) \leq 0$ and then taking a union bound on all possible $G_y$,
our proof crucially relies on a decomposition of $G_y$ into 
some suitably defined simpler graphs. In the next subsection, we will describe a family 
$\calF^*$ of graphs, and show that every  possible $G_y$ can be decomposed as a union of
graphs in $\calF^*$: $G_y = \cup_{i=1}^m F_i$ with $F_i \in \calF^*$ for each $1 \le i \le m$. 
Since the multiplicity of $e$ in $E(G_y)$ is equal to the sum of multiplicities of $e$ in $E(F_i)$ over $ i \in [m]$, it follows that 
$$
w(G_y) = \sum_{i=1}^m w (F_i).
$$
Therefore, to  show  that $w(G_y) <0$ for all possible $G_y$ with high probability, it suffices
to show $w(F) <0$ for all graphs $F$ in family $\calF^*$.

We remark that in the analysis of the 2F ILP in \prettyref{sec:2f}, we have $y=x-x^*$ as opposed to $y=2(x-x^*)$ and thus $G_y$ is a balanced simple graph. Consequently, we can simply decompose $G_y$ into its connected components which are connected, balanced simple graphs. In contrast, here the decomposition of $G_y$ as a multigraph is much more sophisticated due to the existence of double edges. In particular, the weight of a double edge in $F$ appears twice in $w(F)$ and hence its variance 
is twice the total variance of two independent edge weight. For this reason,
to control the deviation of $w(F)$ from its mean, it is essential to account for the contribution of double edges and single edges separately, which, in turn, requires us to separate the double edges from single edges in our decomposition.

\subsection{Edge Decomposition} \label{sec:edge_decomposition}

Our decomposition of $G_y$ relies on the notion of balanced multigraph and alternating Eulerian circuit. 
A bicolored multi-graph is \emph{balanced} if for every vertex the number of red incident edges is equal to the number of blue incident edges. 
Since $x(\delta(v)) =2$ and $x^*(\delta(v))=2$ for every vertex $v$, it follows that $y(\delta(v))=0$
and thus $G_y$ is balanced. As a result, the vertices in $G_y$ all have even degree (in fact, either $2,4,6$ or $8$). Therefore each connected component of $G_{y}$ has an Eulerian circuit. 
Recall that Eulerian circuit is alternating if the edges in the Eulerian circuit alternate in colors. 
In view of \prettyref{lmm:eulerian}, each connected component of $G_{y}$ has an Eulerian circuit. 
In the remainder, we suppress the subscript $y$ in $G_y$ whenever
the context is clear.

Next, we describe a family $\calF$ of graphs and show that $G$ is a union of graphs in this family. 
First, we need to introduce a few notations. 
For any pair of two vertices $u, v$ in graph $G$,
vertex identification (also known as vertex contraction) produces a graph
by removing all edges between $u,v$ and replacing $u,v$ with a single vertex $w$ incident
to all edges formerly incident to either $u$ or $v$. 
When $u$ and $v$ are adjacent, \ie, sharing two endpoints of edge
$e,$ vertex identification specializes to edge contraction of $e$ and the resulting graph is
denoted by $G \cdot e$; visually, $e$ shrinks to a vertex. 
Note that edge contraction may introduce multi-edges.
 We define a \emph{stem} as a path $(v_0, v_1, \ldots, v_{k-1})$ for some $k$
 distinct vertices such that $(v_{i-1},v_i)$ is a double edge for all $1\le i \le k$
 and the double edges alternate in colors.
 The two endpoints $v_0$ and $v_{k-1}$ of the stem are identified as the tips of the stem. 
  We say a tip of the stem is red if it is incident to the red double edge; otherwise we say it is blue. 
  Given a stem and an even cycle  
 $C_0$ consisting of only single edges of alternating colors,
 we define the following \emph{blossoming} procedure to connect the stem with $C_0$: 
 first contract any single blue (red) edge in $C_0$ to a vertex $v$ 
 and attached to $v$ the stem by identifying $v$ with a blue (red) tip of the stem. 
 The resulting graph known as \emph{flower} has an alternating circuit and 
 the contracted $C_0$ is called \emph{blossom}. 
 The tip of the stem not incident to the blossom is called the tip of the flower. 
 We say a flower is red (blue) if its tip is red (blue).
 For example, a red flower are shown 
	\prettyref{fig:sub}.
Similar notions of stem, flower, blossom were introduced
 in~\cite{edmonds65} in the context of simple graphs.

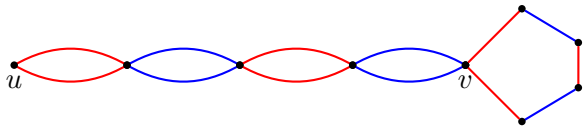
\begin{figure}[ht]
	\centering
	\begin{tikzpicture}[scale=1.5,every edge/.append style = {thick,line cap=round},baseline=(anc.base)]
	\reddoubleedge{(-1,0)}{(0,0)}
\bluedoubleedge{(0,0)}{(1,0)}
\reddoubleedge{(1,0)}{(2,0)}
\bluedoubleedge{(2,0)}{(3,0)}
\node at (-1,0) [below] {$u$};
\node at (3,0) [below] {$v$};
\altred
\draw (3,0) node[dot] {} edge (3.5,0.5) node[dot] {} edge (4,0.2) node[dot] {} edge (4,-0.2) node[dot] {} edge (3.5,-0.5) node[dot] {} edge (3,0) node[dot] {};
\node (anc) at (0,-1) {};
\end{tikzpicture}
	\caption{A red flower consisting of a stem of $4$ alternating double edges followed by a blossom of $5$ single edges.}
	\label{fig:sub}
\end{figure}


Then we introduce a family $\calU$ of balanced graphs.  
We start with an even cycle $G_0$ in alternating colors.
At each step $t \ge 1$, construct a new balanced graph $G_t$ from $G_{t-1}$ as follows.
Fix any cycle consisting of at least $4$ edges in $G_{t-1}$.
In this cycle, pick any edge and
apply the  following \emph{flowering} procedure: 
contract the red (blue) edge to one vertex $w$ and attach to $w$ a flower by identifying $w$ with 
the root of a blue (red) flower.\footnote{Since we allow contracting an 
edge incident to a stem, it is possible to have a vertex with 
multiple stems attached.}
Let $\calU$ denote the collection of all graphs 
obtained from applying the flowering procedure recursively for finitely many times.
In particular, $\calU$ includes all even cycles in alternating colors.
For example, the graph in \prettyref{fig:ex} is in $\calU$ which is obtained by
starting with a $10$-cycle and applying the flowering procedure $4$ times.
By construction, any graph $H  \in \calU$ must contain even $\ell \ge 4$ number of single edges. 

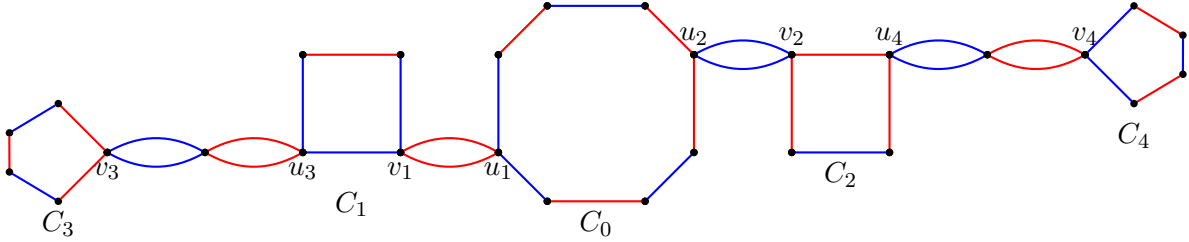
\begin{figure}[ht]%
\centering
\begin{tikzpicture}[scale=1.3,every edge/.append style = {thick,line cap=round},baseline=(anc.base)]
	\bluesingleedge{(-0.5,1)}{(0.5,1)}
	\redsingleedge{(0.5,1)}{(1,0.5)}
	\redsingleedge{(1,0.5)}{(1,-0.5)}
	\bluesingleedge{(1,-0.5)}{(0.5,-1)}
	\redsingleedge{(0.5,-1)}{(-0.5,-1)}
		\bluesingleedge{(-0.5,-1)}{(-1,-0.5)}
			\bluesingleedge{(-1,-0.5)}{(-1,0.5)}
	\redsingleedge{(-1,0.5)}{(-0.5,1)}
	\node at (0, -1)[below]{$C_0$};
	\node at (1,0.5) [above] {$u_2$};
	\bluedoubleedge{(1,0.5)}{(2,0.5)}
	\node at (2,0.5) [above] {$v_2$};
	\redsingleedge{(2,0.5)}{(3,0.5)}
	\redsingleedge{(3,0.5)}{(3,-0.5)}
	\bluesingleedge{(3,-0.5)}{(2,-0.5)}
	\redsingleedge{(2,-0.5)}{(2,0.5)}
	\node at (2.5, -0.5)[below]{$C_2$};
	
	\node at (-1,-0.5) [below]{$u_1$};
	\reddoubleedge{(-1,-0.5)}{(-2,-0.5)}
	\node at (-2,-0.5) [below]{$v_1$};
	\bluesingleedge{(-2,-0.5)}{(-3,-0.5)}
	\bluesingleedge{(-3,-0.5)}{(-3,0.5)}
	\redsingleedge{(-3,0.5)}{(-2,0.5)}
	\bluesingleedge{(-2,0.5)}{(-2,-0.5)}
	\node at (-2.5, -0.8)[below]{$C_1$};

	\node at (3,0.5) [above]{$u_4$};
	\bluedoubleedge{(3,0.5)}{(4,0.5)}
	\reddoubleedge{(4,0.5)}{(5,0.5)}
	\node at (5,0.5) [above]{$v_4$};
	\node at (5.5, -0.1)[below]{$C_4$};
	
	\node at (-3,-0.5) [below]{$u_3$};
	\reddoubleedge{(-4,-0.5)}{(-3,-0.5)}
	\bluedoubleedge{(-5,-0.5)}{(-4,-0.5)}
	\node at (-5,-0.5) [below]{$v_3$};
	\node at (-5.5, -1)[below]{$C_3$};
	
	\altblue
	\draw (3+2,0+0.5) node[dot]{} edge (3.5+2,0.5+0.5) node[dot]{} edge (4+2,0.2+0.5) node[dot]{} edge (4+2,-0.2+0.5) node[dot]{} edge (3.5+2,-0.5+0.5) node[dot]{} edge (3+2,0+0.5) node[dot]{};
	\draw (-5,-0.5) node[dot]{} edge (-5.5,0) node[dot]{} edge (-6,-0.3) node[dot]{} edge (-6,-0.7) node[dot]{} edge (-5.5,-1) node[dot]{} edge (-5,-0.5) node[dot]{};

\end{tikzpicture}
\caption{Example of graph in family $\calU$}%
\label{fig:ex}%
\end{figure}

\begin{center}
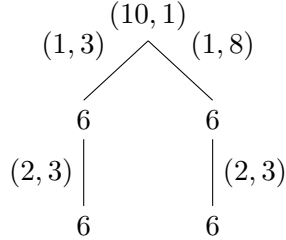
\begin{figure}[ht]%
\centering
\begin{tikzpicture}[level distance=40pt,sibling distance=36pt]
\Tree [.$(10,1)$ 
\edge node[auto=right]{$(1,3)$}; 
[.$6$ 
\edge node[auto=right]{$(2,3)$}; 
$6$ ] 
\edge node[auto=left]{$(1,8)$};
[.$6$ 
\edge node[auto=left]{$(2,3)$}; 
$6$ ]]
\end{tikzpicture}
\caption{Tree representation of graph in \prettyref{fig:ex}.}%
\label{fig:tree}%
\end{figure}
\end{center}

Alternatively, note that each graph in the family $\calU$ can be viewed as 
cycles interconnected by stems. 
Thus, we can represent $G_t$ using a tree $T_t$,
whose nodes correspond to even cycles and 
links corresponds to stems. 
Next we describe this enumeration scheme in details. 
Every node in the tree represents a cycle in alternating colors, 
with a mark $\ell$ being the length of the cycle. 
The cycle corresponding to the root node is assumed to have 
a fixed ordering of edges, and the root node has an extra mark which is $1$ if  
the color of the first edge in the corresponding cycle is blue and $0$ otherwise. 
Every link $(u,v)$ represents a stem consisting of only double edges of alternating colors 
with mark $(k, i)$, where $k$ is the length of the stem, and $i$ is the index of the contracted edge
of the parent vertex $u$. 
We start with tree $T_0$ with a single root node corresponding to $G_0=C_0$ with the mark being the length of 
$C_0$. At each step  $t \ge 1$, we view the flowering procedure as 
growing to a new tree $T_t$ as follows. For any
vertex $u$ in $T_{t-1}$ that corresponds to an alternating cycle $C$,
a new vertex $v$ that corresponds to an alternating cycle $C'$, and 
a stem, we connect $u$ and $v$ with a edge corresponding to the stem.
The edge is marked with $(k,i)$, where $k$ is the length of the stem, and $i$
is the index of the contracted edge in $C$. Then the edges in the alternating cycle $C'$
is indexed by $1, 2,\ldots$ by starting from the contracted edge in $C'$ and traversing
 $C'$ in clockwise direction.


Finally, we need to introduce the
notion of homomorphism between two bicolored multigraphs $H$ and $F$. There exist multiple definitions of 
homomorphism between multigraphs; here we follow the convention in \cite[Section 5.2.1]{Lovasz12}.
A node-and-edge homomorphism $H \to F$ is a pair of 
vertex map $\phi: V(H) \to V(F)$ and \emph{bijective}\footnote{Let $A$ and $B$ denote two multisets. 
Let $A'$ and $B'$ denote the set of distinct elements in $A$ and $B$, respectively. We say $\psi: A \to B$
is bijective if $\psi: A' \to B'$ is bijective and for every element $a \in A'$, the multiplicity of $a$ in $A$
is the same as the multiplicity of $\psi(a)$ in $B$. For example, if $A= \{a,a, b, c\}$ and $ B= \{x, x, y, z\}$. Let
$\psi(a)=x$, $\psi(b)=y$, and $\psi(c)=z$, then $\psi: A \to B$ is bijective.} edge map 
$\psi: E(H) \to E(F)$ such that if $e \in E(H)$ connects $i$ and $j$, then
$\psi(e)$ connects $\phi(i)$ and $\phi(j)$ and has the same color as 
$e$.  We say $H$ is homomorphic to $F$ if such a node-to-edge homomorphism exists.
By construction, an edge $e$ is incident to $u$ in $H$ 
if and only if $\psi(e)$ is incident to
$\phi(u)$ in $F$. Therefore, if $H$ is homomorphic to $F$, then they are either both balanced or both unbalanced.
Moreover, since $\psi$ is bijective,  $H\to F$ is edge-multiplicity preserving, \ie, 
the multiplicity of $\psi(e)$ in $E(F)$ is the same  
as the multiplicity of $e$ in $E(H)$. Hence, the number of double  (single) edges in $H$ and $F$ are the same.
Furthermore, note that for two node-and-edge homomorphisms $(\phi,\psi): H \to F$
and $(\phi, \psi'): H \to F'$ with the same vertex map $\phi$, it holds that $F=F'$. 
Hence, when the context is clear, we simply write $\phi: H \to F$ or $\phi(H)=F$ 
by suppressing the underlying edge map.


Let $\calF$ denote the collection of all graphs $F$ such that
$H \to F $ for some $H \in \calU$. In particular, $\calF \supseteq \calU$, and this inclusion is strict as the example in \prettyref{fig:homo} shows.
\begin{figure}[ht]%
\centering
\begin{tikzpicture}[scale=1,every edge/.append style = {thick,line cap=round},font=\small]
	\node[coordinate,label=above:$2$] (v2) at (0,0)  {};
	\node[coordinate,label=above:$1$] (v1) at (-1,1)  {};
	\node[coordinate,label=below:$3$] (v3) at (-1,-1)  {};
	\node[coordinate,label=above:$9$] (v9) at (1,1)  {};
	\node[coordinate,label=below:$8$] (v8) at (1,-1)  {};
	\node[coordinate,label=right:$11$] (v11) at (1,0)  {};
	\node[coordinate,label=above:$10$] (v10) at (2.5,1)  {};
	\node[coordinate,label=below:$7$] (v7) at (2.5,-1)  {};
	\node[coordinate,label=left:$12$] (v12) at (2.5,0)  {};
	\node[coordinate,label=above:$4$] (v4) at (3.5,0)  {};
	\node[coordinate,label=above:$5$] (v5) at (4.5,1)  {};
	\node[coordinate,label=below:$6$] (v6) at (4.5,-1)  {};
   \reddoubleedge{(v2)}{(v11)}
	\reddoubleedge{(v4)}{(v12)}
	\redsingleedge{(v9)}{(v10)}
	\redsingleedge{(v8)}{(v7)}
	\redsingleedge{(v3)}{(v1)}
	\redsingleedge{(v6)}{(v5)}
		\bluesingleedge{(v1)}{(v2)}
		\bluesingleedge{(v2)}{(v3)}
		\bluesingleedge{(v11)}{(v9)}
		\bluesingleedge{(v11)}{(v8)}
		\bluesingleedge{(v12)}{(v10)}
		\bluesingleedge{(v12)}{(v7)}
		\bluesingleedge{(v4)}{(v5)}
		\bluesingleedge{(v4)}{(v6)}
		\node at (1.5,-2) {$H$};
 \end{tikzpicture}
~~~\raisebox{6em}{{\Large $\xrightarrow{~~\phi~~}$}}~~~
\begin{tikzpicture}[scale=1,every edge/.append style = {thick,line cap=round},font=\small]
	\node[coordinate,label=above:$2$] (v2) at (0,0)  {};
	\node[coordinate,label=above:$1$] (v1) at (-1,1)  {};
	\node[coordinate,label=below:$3$] (v3) at (-1,-1)  {};
	\node[coordinate,label=above:$9$] (v9) at (0.5,1)  {};
	\node[coordinate,label=below:$8$] (v8) at (0.5,-1)  {};
	\node[coordinate,label=right:$11$] (v11) at (1,0)  {};
	\node[coordinate,label=above:$10$] (v10) at (1.5,1)  {};
	\node[coordinate,label=below:$7$] (v7) at (1.5,-1)  {};
	\node[coordinate,label=above:$4$] (v4) at (2,0)  {};
	\node[coordinate,label=above:$5$] (v5) at (3,1)  {};
	\node[coordinate,label=below:$6$] (v6) at (3,-1)  {};
   \reddoubleedge{(v2)}{(v11)}
	\reddoubleedge{(v4)}{(v11)}
	\redsingleedge{(v9)}{(v10)}
	\redsingleedge{(v8)}{(v7)}
	\redsingleedge{(v3)}{(v1)}
	\redsingleedge{(v6)}{(v5)}
		\bluesingleedge{(v1)}{(v2)}
		\bluesingleedge{(v2)}{(v3)}
		\bluesingleedge{(v11)}{(v9)}
		\bluesingleedge{(v11)}{(v8)}
		\bluesingleedge{(v11)}{(v10)}
		\bluesingleedge{(v11)}{(v7)}
		\bluesingleedge{(v4)}{(v5)}
		\bluesingleedge{(v4)}{(v6)}
		\node at (1,-2) {$F$};
 \end{tikzpicture}
\caption{Example of a graph $F$ in $\calF$ but not in $\calU$. Here $F$ is homomorphic to $H\in\calU$, where the homomorphism $\phi$ maps both vertices $11$ and $12$ to $11$.}%
\label{fig:homo}%
\end{figure}
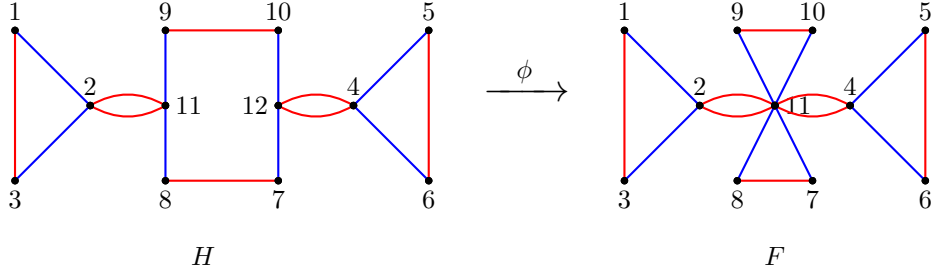
The next lemma shows that $\calF$ includes all connected balanced simple graphs.
This result serves as the base case of the induction proof of the decomposition lemma.

\begin{lemma} \label{lmm:decom_simple}
An alternating cycle is homomorphic to any connected balanced simple graph $G$ with equal number of edges.
In particular, $G \in \calF$. 
\end{lemma}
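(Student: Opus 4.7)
The plan is to construct the homomorphism explicitly using the alternating Eulerian circuit guaranteed by Lemma \ref{lmm:eulerian}. Since $G$ is connected and balanced, Lemma \ref{lmm:eulerian} supplies an alternating Eulerian circuit $T = (v_0, v_1, \ldots, v_{m-1}, v_m = v_0)$, where $m = |E(G)|$ because $G$ is simple (so the Eulerian circuit visits each edge exactly once). Balancedness forces the number of red edges to equal the number of blue edges, hence $m$ is even, which means the alternating cycle $C$ of length $m$ actually exists in the prescribed alternating coloring.

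Next I would label the vertices of $C$ as $u_0, u_1, \ldots, u_{m-1}$ (with $u_m = u_0$), with its edges colored so that $(u_i,u_{i+1})$ has the same color as $(v_i,v_{i+1})$ for each $i$; this is consistent because both sequences of edges alternate, so it suffices that the starting colors agree. Then define $\phi(u_i) = v_i$ for $0 \le i \le m-1$ and $\psi((u_i,u_{i+1})) = (v_i,v_{i+1})$. The map $\psi$ is a bijection from $E(C)$ to $E(G)$ precisely because $T$ is Eulerian and every edge of $G$ has multiplicity one. By construction $\psi$ is color-preserving, and whenever $(u_i,u_{i+1}) \in E(C)$ one has $\psi((u_i,u_{i+1}))$ joining $\phi(u_i)=v_i$ and $\phi(u_{i+1})=v_{i+1}$ in $G$. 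This verifies the definition of node-and-edge homomorphism from $C$ to $G$.

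For the second assertion, I would simply note that every even alternating cycle belongs to $\calU$ (it is the base case $G_0 = C_0$ in the construction of $\calU$), so the existence of the homomorphism $C \to G$ with $C \in \calU$ yields $G \in \calF$ by definition of $\calF$. The only subtlety to double-check is the definition of ``bijective edge map'' for multigraphs given in the footnote: since here $C$ and $G$ are both simple, bijectivity reduces to an ordinary bijection between edge sets, which the Eulerian property supplies directly. I do not foresee a real obstacle — the lemma is essentially a repackaging of Lemma \ref{lmm:eulerian} phrased in the homomorphism language needed later.
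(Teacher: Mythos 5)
Your proof is correct and follows essentially the same route as the paper: invoke Lemma~\ref{lmm:eulerian} to obtain an alternating Eulerian circuit, map the $m$-cycle onto it by indexing, and observe that simplicity of $G$ makes the induced edge map a bijection. The only additions you make (observing that $m$ is even and spelling out why alternating cycles lie in $\calU$) are reasonable clarifications, not a different argument.
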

\begin{proof}
By \prettyref{lmm:eulerian}, $G$ has an Eulerian circuit 
$T = (v_0, v_1, \ldots, v_{m-1}, v_m=v_0)$ of alternating colors 
where $m$ is the total number of edges in $G$ and vertices $v_i$'s may repeat.
Let $C$ denote any alternating cycle with $m$ edges. 
We write $C=(u_0, u_1, \ldots, u_{m-1}, u_m=u_0)$ such that 
the edge $(u_0,u_1)$ has the same color as $(v_0,v_1)$. 
Then we define a pair of vertex and edge map $(\phi, \psi)$ from $C$ to $G$
such that $\phi(u_i )=v_i$ and $\psi\left((u_i, u_{i+1} ) \right)=( v_{i},v_{i+1} )$ for all $0 \le i \le m-1$. 
Since both $G$ and $C$ are simple graphs,  $\psi:E(C) \to E(G)$ is bijective.
Hence, $(\phi, \psi):C \to G$ is a node-and-edge homomorphism and the conclusion follows. 
\end{proof}

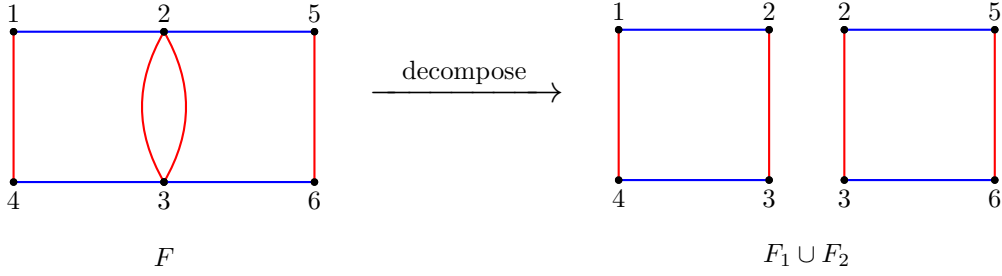
\begin{figure}[ht]%
\centering
\begin{tikzpicture}[scale=1,every edge/.append style = {thick,line cap=round},font=\small]
  \node[coordinate,label=above:$2$] (v2) at (0,1)  {};
  \node[coordinate,label=above:$1$] (v1) at (-2,1)  {};
  \node[coordinate,label=below:$3$] (v3) at (0,-1)  {};
  \node[coordinate,label=below:$4$] (v4) at (-2,-1)  {};
  \node[coordinate,label=above:$5$] (v5) at (2,1)  {};
  \node[coordinate,label=below:$6$] (v6) at (2,-1)  {};
   \reddoubleedge{(v2)}{(v3)}
    \bluesingleedge{(v1)}{(v2)}
    \bluesingleedge{(v3)}{(v4)}
    \bluesingleedge{(v2)}{(v5)}
    \bluesingleedge{(v3)}{(v6)}
    \redsingleedge{(v1)}{(v4)}
    \redsingleedge{(v5)}{(v6)}
    \node at (0,-2) {$F$};
 \end{tikzpicture}
~~~\raisebox{6em}{{\Large $\xrightarrow{~~\text{decompose}~~}$}}~~~
\begin{tikzpicture}[scale=1,every edge/.append style = {thick,line cap=round},font=\small]
  \node[coordinate,label=above:$2$] (v2) at (0,1)  {};
  \node[coordinate,label=above:$1$] (v1) at (-2,1)  {};
  \node[coordinate,label=below:$3$] (v3) at (0,-1)  {};
  \node[coordinate,label=below:$4$] (v4) at (-2,-1)  {};
  \node[coordinate,label=above:$2$] (v7) at (1,1)  {};
  \node[coordinate,label=below:$3$] (v8) at (1,-1)  {};
  \node[coordinate,label=above:$5$] (v5) at (3,1)  {};
  \node[coordinate,label=below:$6$] (v6) at (3,-1)  {};
 \redsingleedge{(v2)}{(v3)}
 \redsingleedge{(v7)}{(v8)}
    \bluesingleedge{(v1)}{(v2)}
    \bluesingleedge{(v3)}{(v4)}
    \bluesingleedge{(v7)}{(v5)}
    \bluesingleedge{(v8)}{(v6)}
    \redsingleedge{(v1)}{(v4)}
    \redsingleedge{(v5)}{(v6)}
    \node at (0.5,-2) {$F_1 \cup F_2$};
 \end{tikzpicture}
\caption{Example of a graph $F$ which is not in $\calF$, but 
can be decomposed as $F=F_1 \cup F_2$ with $F_1,F_2 \in \calF$.}%
\label{fig:F_not_in_family}%
\end{figure}

In contrast to connected balanced simple graphs, if a balanced graph $G$ contains double edges then certainly no alternating
cycle is homomorphic to $G$. 
What's more, it is possible that $G$ is not homomorphic to any graph  in the class $\calU$, \ie, $G$ may not belong to $\calF$. See \prettyref{fig:F_not_in_family} for such an example. 
Nevertheless, the next lemma shows that 
if $G$ has edge multiplicity at most $2$, then it 
can be decomposed as a \emph{union of elements} in $\calF$.

	\begin{lemma}[Decomposition]
	\label{lmm:decomp}	
    Every balanced multigraph $G$ with edge multiplicity at most $2$ 
    can be decomposed as a union of elements in $\calF$.
	\end{lemma}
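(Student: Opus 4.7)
The plan is to induct on $|E(G)|$, with the trivial base case $|E(G)| = 0$. Since a disjoint union of decompositions of the connected components of $G$ gives a decomposition of $G$, I may assume $G$ is connected. If $G$ has no double edges, then $G$ is a connected balanced simple graph and $G \in \calF$ directly by \prettyref{lmm:decom_simple}, giving a one-piece decomposition.

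Otherwise, I apply \prettyref{lmm:eulerian} to obtain an alternating Eulerian circuit $T = (v_0, v_1, \ldots, v_m = v_0)$ of $G$. Each double edge $e^* = (u, v)$ appears in $T$ at two positions $i < j$ with a common color $c$. I will call $e^*$ \emph{parallel} in $T$ if the two copies are traversed in the same direction (that is, $v_{i-1} = v_{j-1}$ and $v_i = v_j$) and \emph{antiparallel} otherwise; the main argument splits into two cases based on this dichotomy.

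If some double $e^*$ is parallel in $T$, I split $T$ at the two copies of $e^*$ into the two closed alternating trails $T_1 = (e_i, \ldots, e_{j-1})$ and $T_2 = (e_j, \ldots, e_m, e_1, \ldots, e_{i-1})$. The closure at the common tail of the two $e^*$-copies and the opposite colors of the wrap edges (color $c'$ next to $e^*$'s color $c$) ensure each $T_k$ is a legitimate closed alternating trail. Each induced subgraph $\calG(T_k)$ is then balanced, inherits multiplicity at most $2$, and has strictly fewer edges than $G$; the inductive hypothesis decomposes each $\calG(T_k)$ into elements of $\calF$. In the alternative case, every double is antiparallel in every alternating Eulerian circuit of $G$, and I will argue that $G \in \calF$ already, so no further decomposition is needed. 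The witness $H \in \calU$ is obtained by \emph{unfolding} $T$: each antiparallel double $e^*$ together with its middle sub-trail $M_{e^*} = (e_{i+1}, \ldots, e_{j-1})$ (an odd-length alternating closed walk at $v$) plays the role of a flower in $H$ --- the double $e^*$ acts as a stem of length $1$, and $M_{e^*}$, appropriately contracted, serves as the blossom; doubles nested inside $M_{e^*}$ give further recursive flowerings; and the portion of $T$ outside all middle trails becomes the initial even alternating cycle $C_0$. The natural vertex map identifies each unfolded copy of a vertex with its image in $G$, and the edge map is a color- and multiplicity-preserving bijection on distinct edges by construction.

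The hard part will be verifying the antiparallel case: rigorously checking that the unfolded structure always produces a valid $H \in \calU$ (obtainable by iterated flowerings starting from an even alternating cycle), and that the induced edge map is bijective on distinct elements with correct multiplicities and consistent colors throughout the recursion. A potentially simpler route is to show directly that whenever $G \notin \calF$ at least one alternating Eulerian circuit of $G$ exhibits a parallel double, which would reduce everything to the first case and sidestep the structural reconstruction.
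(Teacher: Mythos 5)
Your first case (a parallel/unidirectional double) is handled correctly and is identical to the paper's Case 1: split the alternating Eulerian circuit at the two same-direction copies of $e^*$, obtain two shorter closed alternating trails each of multiplicity at most $2$ (the two copies of $e^*$ land in different trails, so each has strictly fewer doubles), and recurse.

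The gap is in the antiparallel case, and it is a genuine one even though you flag it yourself. You claim that if every alternating Eulerian circuit has all doubles antiparallel, then $G\in\calF$, so the decomposition is a singleton. The paper does \emph{not} prove (or use) this stronger statement; its Case~2 is handled by a further two-way split that is structurally different from your unfolding sketch. First, two antiparallel doubles can \emph{cross} in the circuit (this is the paper's Case~2.1, where the two circuits $L$ and $R$ flanking the maximal double-trail $S$ share an edge); your ``middle subtrail'' picture implicitly assumes a laminar family of intervals, which fails under crossings. One can plausibly repair this by observing that if two antiparallel doubles cross in $T$, reversing the segment between the two copies of one of them turns the other into a parallel double, reducing to Case~A --- but this argument is nowhere in your proposal, and without it Case~B is incomplete. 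Second, even with a laminar structure, the boundary of a maximal double-trail $S$ may itself abut another double whose two copies straddle $S$'s boundary in a way that does not extend $S$ (the paper's Step~1 of Case~2.2); the paper's treatment there \emph{peels off} a disjoint sub-circuit $T-T'$ and recurses separately, which at face value produces a multi-piece decomposition, not a single $H\in\calU$. Your proposal does not engage with this situation at all. Finally, even in the clean nested case, the verification that the unfolded object is a valid member of $\calU$ (that the outer circuit yields an even alternating cycle $C_0$, that each middle subtrail yields an odd blossom attachable by a legal flowering step, and that the induced edge map is a color- and multiplicity-preserving bijection) is exactly where the paper invests substantial work: the expansion of a boundary vertex to an artificial edge $e$, the inductive decomposition of $L'$ and $R'$, the identification of the pieces $P$ and $Q$ containing $e$ and $e'$, and the homomorphism-stitching construction of $M_0=(Q_0\cdot e'_0)\cup S_0\cup(P_0\cdot e_0)$. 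None of that is replaced by the unfolding sketch, so as written the proposal proves Case~A and defers the heart of the lemma.
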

	

\begin{proof}
It suffices to prove the lemma for the case where 
$G$ is connected. 
By \prettyref{lmm:eulerian}, $G$ has an Eulerian circuit 
$T = (v_0, v_1, \ldots, v_{m-1}, v_m=v_0)$ of alternating colors 
where $m$ is the total number of edges and vertices $v_i$ may repeat.  
We proceed by induction on the number of double edges $k$ in $G$.
If $k=0$, $G$ is simple. By \prettyref{lmm:decom_simple}, 
$G \in \calF$ and thus the conclusion holds. 



Suppose the conclusion holds for $k \ge 0$, we aim to prove it also holds for $k+1$. 
We call a double edge $(u,v)$ bidirectional if the Eulerian
circuit $T$ traverses it twice in two different directions, \ie,
there exists an $s$ and $t \ge s+3 $ such that $(v_{s-1}, v_{s}) = (u,v)$ and $(v_{t-1},v_{t})=(v,u)$.
Similarly, we call a double edge $(u,v)$ unidirectional if $T$
traverses it twice in the same direction. We divide our induction into two cases according
to whether there exists a unidirectional double edge.

\paragraph{Case 1: There exists at least one unidirectional double edge.}
Let $(u,v)$ be an arbitrary unidirectional double edge. See \prettyref{fig:F_not_in_family} for
an illustration. By definition,
in the alternating Eulerian circuit $T=(v_0,v_1,\ldots, v_{m-1}, v_0)$, 
there must exist an $s$ and $t \ge s+3$, such that $(v_{s-1}, v_{s}) =(v_{t-1}, v_{t})=(u,v)$.
Since $T$ is alternating, it follows that $T'\triangleq (v_{s-1}, v_{s}, \ldots, v_{t-1} ) $ is also an alternating  circuit.
Also define $T-T'$ to be the resulting graph by deleting the edges in the circuit $T'$ from $T$, \ie,
$$
T-T' = \left(v_0,v_1,v_{s-1}, v_{t}, \ldots, v_{m-1}, v_0 \right).
$$
It follows that $T-T'$ is also a circuit in alternating colors. 
Hence both $T'$ and $T-T'$ are balanced multigraphs with
edge multiplicity at most $2$. Also, since the two edges connecting $u$ and $v$
appear separately in $T'$ and $T-T'$, both 
$T'$ and $T-T'$ have at most $k$ double edges.
Applying the induction hypothesis to each connected component of $T'$ and $T-T'$, 
we conclude that both $T'$ and $T-T'$ can be decomposed as a union of elements in 
$\calF$. Hence, $T$ can be decomposed as a union of elements in $\calF$.




\paragraph{Case 2: All double edges are bidirectional.}
 In this case, we pick an arbitrary bidirectional double edge $(u,v)$. 
In the Eulerian circuit $T$, we find a trail $S$ of maximal length
 that contains the double edge $(u,v)$ and consists of only double edges in alternating colors (see \prettyref{fig:decomp_first_layer}). 
 More precisely, we find the smallest $s$ and the largest $t \ge s+1$ such that
$(v_{i-1}, v_{i})$ all have edge multiplicity $2$ for $ s \le i \le t $, and $(v_{i-1}, v_i)=(u,v)$ for 
some $ s \le i \le t$, and there exists $t' \ge t+3$ and $s' \ge t'$ such that
$$
\left( v_{t' } , v_{t'+1}, \ldots, v_{s'-1}  \right) = \left( v_{t }, v_{t-1}, \ldots, v_{s-1} \right), 
$$
in which the Eulerian circuit $T$ traverses $S$ in two different directions.


By the maximality of $S$, 
$v_{t+1}$ and $v_{t'-1}$ must be  two distinct vertices. 
Since the Eulerian circuit $T$ has alternating colors, $( v_t, v_{t+1})$ and $(v_{t'-1}, v_{t'})$
must have the same color, and thus the circuit 
$$
R=(  v_t, v_{t+1}, \ldots, v_{t'}=v_t )
$$ 
must have
an odd length. Entirely analogously,  $v_{s-2} $ and $ v_{s'}$ must be two distinct vertices,
and $(v_{s-2}, v_{s-1})$ and $(v_{s'-1}, v_{s'})$ must be of the same color; 
thus the circuit 
$$
L= ( v_0, \ldots, v_{ s-1 }, v_{ s' }, \ldots, v_{m-1}, v_m=v_0)
$$ 
must
have an odd length. In particular, both $L$ and $R$ cannot be empty. Therefore we have 
$T=L \cup S  \cup R$ as shown in \prettyref{fig:decomp_first_layer}.
We further consider two subcases according to whether $L$ and $R$ share any edge or not.
\begin{figure}[ht]%
\centering
\begin{tikzpicture}[scale=1.5,every edge/.append style = {thick,line cap=round},baseline=(anc.base)
]
  \reddoubleedgeD{(-1,0)}{(0,0)}
\bluedoubleedgeD{(0,0)}{(1,0)}
\reddoubleedgeD{(1,0)}{(2,0)}
\bluedoubleedgeD{(2,0)}{(3,0)}
\node at (-0.9,0) [below=.1cm] {$v_{s'-1}$};
\node at (-0.9,0) [above=.1cm] {$v_{s-1}$};
\node at (3,0) [below=.1cm] {$v_{t'}$};
\node at (3,0) [above=.1cm] {$v_t$};
\redsingleedgeD{(3,0)}{(3.5,0.5)}
\redsingleedgeD{(3.5,-0.5)}{(3,0)}
\node at (3.5,0.5) [above] {$v_{t+1}$};
\node at (3.5,-0.5) [below] {$v_{t'-1}$};
\path (3.5,0.5) edge[blue,dashed,bend left=110] (3.5,-0.5); 
\bluesingleedgeD{(-1.5,0.5)}{(-1,0)}
\bluesingleedgeD{(-1,0)}{(-1.5,-0.5)}
\node at (-1.5,0.5) [above] {$v_{s-2}$};
\node at (-1.5,-0.5) [below] {$v_{s'}$};
\path (-1.5,0.5) edge[red,dashed,bend right=110] (-1.5,-0.5); 
\node at (3.5,0)[right=0.5cm]{$R$};
\node at (-1.5,0) [left=0.5cm] {$L$};
\draw[dashed] (-1.2,-0.5) rectangle (3.2,0.5);
\node at (1,-0.6) {$S$};
\end{tikzpicture}
\caption{The decomposition $T=L\cup S \cup R$ where $L$ and $R$ are circuits of alternating colors shown in dashed lines}%
\label{fig:decomp_first_layer}%
\end{figure}
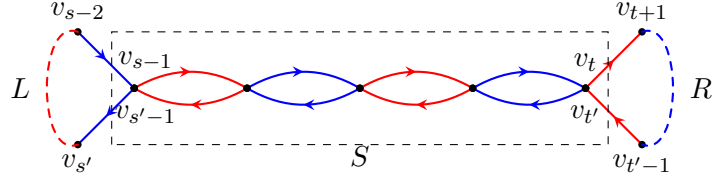

{\bf Case 2.1: $E(L) \cap E(R) \neq \emptyset$.}
In this case, there exists a bidirectional 
double edge whose two simple edges appear separately in $E(L)$ and $E(R)$.
In particular, there exist $ p \in \{t+1,\ldots,t'\}$  and
$q \in \{1,\ldots, s-1\} \cup \{s',\ldots,m\}$ 
such that $(v_{p-1}, v_p)=(v_{q}, v_{q-1})$.

Define a circuit $T'=(v_{s-1}, \ldots, v_{p-1}, v_{q-1}, v_{q-2}, \ldots, v_{s'-1})$ by first traversing 
$S$ to $R$ until reaching $v_{p-1}=v_q$ and then traversing $L$ in the reverse direction until reaching
$v_{s'-1}=v_{s-1}$. Note that $T'$ is an alternating circuit. It follows that both $T'$ and
$T-T'$ are balanced multigraphs
with edge multiplicity at most $2$. Moreover, for each double edge in $S$, its two simple edges appear separately in $T'$ and $T-T'$.
Thus, both $T$ and $T-T'$ have at most $k$ double edges. 
Applying the induction hypothesis to each connected component of $T'$ and $T-T'$, 
we conclude that both $T'$ and $T-T'$ can be decomposed as a union of elements in 
$\calF$. Hence, $T$ can be decomposed as a union of elements in $\calF$.

{\bf Case 2.2:  $E(L) \cap E(R) = \emptyset$.} In this case, $T=L \cup S  \cup R$ is an edge-disjoint
union. 


\underline{Step 1}. First we argue that, without loss of generality, we can and will assume that 
$(v_t, v_{t+1})$, $(v_{t'-1}, v_{t'})$, $(v_{s-2}, v_{s-1})$, and $(v_{s'-1}, v_{s'})$ in \prettyref{fig:decomp_first_layer} are all single edges.
Below we only consider $(v_t, v_{t+1})$. Suppose $(v_t, v_{t+1})$ is not a single edge. Then by assumption, it must be a bi-directional double edge. 
Thus there exists a $t''\in (t,t')$
such that  $(v_{t''-1}, v_{t''}) = (v_{t+1}, v_t)$.
Thus $v_t=v_{t'}=v_{t''}$.
Let $T'=(v_0, \ldots, v_{t''},v_{t'+1}, \ldots, v_0)$ and $T-T'=(v_{t''},\ldots,v_{t'})$ (cf.~\prettyref{fig:counter_example}). Then $T'$ is balanced and thus $T-T'$ is also balanced.
Moreover, $T-T'$ has at most $k$ double edges. Applying the induction hypothesis to each connected component of $T-T'$,
we conclude that $T-T'$ can be decomposed as a union of elements in $\calF$. Then it remains to
show $T'$ can be decomposed as a union of elements in $\calF$. Note that  in $T'$ the trail of maximal length that
contains the double edge $(u,v)$ is given by $(v_{s-1}, \ldots, v_{t+1})$. Redefine $S$ by including
the double edge $(v_t, v_{t+1})$ and redefine $v_t$ and $R$ accordingly. 
Applying the above procedure in finitely many times ensures that $(v_t, v_{t+1})$ becomes a single edge. 

\begin{figure}
\centering
\begin{tikzpicture}[scale=1.5,every edge/.append style = {thick,line cap=round},baseline=(anc.base),font=\small]
	\draw[dashed](-1.5,0)--(-1,0);
	\reddoubleedgeD{(-1,0)}{(0,0)}
\bluedoubleedgeD{(0,0)}{(1,0)}
\reddoubleedgeD{(1,0)}{(2,0)}
\bluesingleedgeD{(1,0)}{(1.5,-0.5)}
\redsingleedgeD{(1.5,-0.5)}{(1,-1)}
\bluesingleedgeD{(1,-1)}{(0.5,-0.5)}
\redsingleedgeD{(0.5,-0.5)}{(1,0)}
\bluesingleedgeD{(2,0)}{(2.5,0.5)}
\redsingleedgeD{(2.5,0.5)}{(2.5,-0.5)}
\bluesingleedgeD{(2.5,-0.5)}{(2,0)}
\begin{pgfonlayer}{background}
        \draw[rounded corners=0.5em,line width=0.5em,black!20,cap=round]
								(1,0)-- (1.5,-0.5) -- (1,-1) -- (0.5,-0.5) --cycle;
    \end{pgfonlayer}
\node at (1,0.1) [above] {$v_t$};
\node at (1,-0.1) [below] {$v_{t''}$};
\node at (0,0.1) [above] {$v_{t-1}$};
\node at (0,-0.1) [below] {$v_{t'+1}$};
\node at (2,0.1) [above] {$v_{t+1}$};
\node at (2,-0.1) [below] {$v_{t''-1}$};
\end{tikzpicture}
\caption{Example of the decomposition in Step 1 in Case 2.2. The highlighted circuit is $T-T'$, which needs to be decomposed so that the double edge $(v_t,v_{t+1})$ can be included in the stem $S$.}
\label{fig:counter_example}
\end{figure}
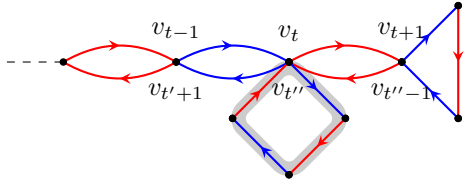


\underline{Step 2}. 
Next we capitalize on the fact that $T=L \cup S  \cup R$ is an edge-disjoint
union to complete the proof of the decomposition.

In $R$, we expand the vertex $v_t$ to an edge $e=(u_t, v_t)$ whose color is different from the
color of edge $(v_t, v_{t+1})$. In particular, we add a distinct vertex $u_t$ and edge $e$, and
reconnect $v_{t'-1}$ from $v_{t'}$ to $u_t$.  Let $R'$ be the resulting circuit, \ie,
$R' = (  u_t, v_t, v_{t+1}, \ldots, v_{t'-1}, u_t)$. 
Then $R'$ is a circuit of alternating colors.  Hence $R'$ is a balanced multigraph
with edge multiplicity at most $2$.  Moreover, $R'$ has at most $k$ double edges.
Applying the induction hypothesis on each connected component of $R'$, we conclude that 
$R'$ can be decomposed as a union of elements in $\calF$. 
In this union, denote by $P$ the element in $\calF$ that contains the edge $e$. 
Since $u_t$ is only incident to $v_t$ and $v_{t'-1}$, it follows that $P$ must also contain
the edge $(v_{t'-1},u_t)$. Furthermore,  since $(v_{t'-1},v_{t'})$ is a single edge in $R$,
it follows that $v_{t'-1}$ and $v_{t}$ are disconnected in $P$
and thus $e$ is contained in a cycle of length at least $4$ in $P$. 
Let $P\cdot e$ be the resulting graph by contracting $e$ in $P$ to  vertex $v_t$.



Analogously, in $L$, we expand the vertex $v_{s-1}$ to an edge $e'=(u_{s-1}, v_{s-1})$ 
whose color is different from the color of edge $(v_{s-2}, v_{s-1})$. Let $L'$ be the resulting circuit, \ie,
\[
L' = (v_0, \ldots, v_{s-2}, u_{s-1}, v_{s-1}, v_{s'}, \ldots, v_{m-1},v_0).
\]
 Then 
$L'$ is a circuit of alternating colors. By the same argument as in the case of $R'$, we
get that $L'$ can be decomposed as a union of elements in $\calF$.
In this union, denote by $Q$ the element in $\calF$ that contains the edge $e'$. 
Then $e'$ is contained in a cycle of length at least $4$ in $Q$. 
Let $Q\cdot e'$ be the resulting graph by contracting $e'$ in $Q$ to  vertex $v_{s-1}$. 

To show $T$ is an  union of elements in $\calF$, 
it suffices to show that $M \triangleq (Q\cdot e') \cup S \cup (P \cdot e) \in \calF$. 
Note that vertices may repeat in $S$. Nevertheless, there exists a 
homomorphism $(\phi_S, \psi_S)$ such that $S_0 \to S$ for some stem $S_0$. Let $w'$ denote an endpoint of $S_0$ such that 
$\phi_S(w')=v_{s-1}$ and  $w$ denote the other endpoint of $S_0$ such that $\phi_S(w)=v_{t}$.
Moreover, since $P \in \calF$,
it follows that there exists a homomorphism $(\phi_P, \psi_P)$ such that $P_0 \to P$
for some $P_0 \in \calU$. Similarly, there exists a homomorphism $(\phi_Q, \psi_Q)$ such that
$Q_0 \to Q$ for some $Q_0 \in \calU$. 
Note that $e$ (resp.~$e'$) is a simple edge in $P$ (resp.~$Q$).
Therefore, there exist simple edge $e_0 \in E(P_0)$ and $e'_0 \in E(Q_0)$) such that $\psi_P(e_0)=e$ and $\psi_Q(e_0')=e' $.
By the tree representation of graphs in $\calU$,  $P_0$ can be represented as a 
tree $I$ with vertex $i$ corresponding to the cycle $C_i$ that contains the edge $e_0$.
Entirely analogously, $Q_0$ can be represented as a tree $J$ with vertex $j$
corresponding to the cycle $C_j$ that contains the edge $e'_0$. 
We form a new tree $K$ by connecting vertex $i$ and  vertex $j$ with an edge corresponding to 
 stem $S_0$ and identifying the root vertex of $I$ as the root vertex of $K$.    
 In other words, let $Q_0 \cdot e'_0$ denote the graph obtained by contracting $e'_0$ in $Q_0$ to vertex $w'$,
and $P_0 \cdot e_0$ denote the graph obtained by contracting $e_0$ in $P_0$ to vertex $w$.
Note that $e_0$ (resp.~$e_0'$) is contained in a cycle of length at least $4$ in $P_0$ (resp.~$Q_0$).
Let $M_0 \triangleq (Q_0 \cdot e'_0) \cup S_0 \cup (P_0 \cdot e_0)$.  
Then $K$ is a tree representation of  $M_0$ starting from a cycle represented by the root vertex of tree $I$
and thus $M_0 \in \calU$.




Finally, it remains to show that $M_0$ is homomorphic to $M$.
Since the vertex sets $V(P_0)$, $V(Q_0)$, and $V(S_0)$ are disjoint,
we can define a vertex map $\phi: V(M_0)\to V(M)$ as follows: 
for any vertex $v \in V(M_0)$, 
\begin{align*}
\phi(v) &=
  \begin{cases}
   \phi_P(v)      & \text{if } v \in V(P_0) \\
   \phi_Q(v)       & \text{if } v \in V(Q_0) \\
   \phi_S(v) & \text{if } v \in V(S_0).
  \end{cases}
  \end{align*}
  Moreover, since $M_0=(Q_0 \cdot e'_0) \cup S_0 \cup (P_0 \cdot e_0)$ is an edge-disjoint union,  we can define an edge map $\psi: E(M_0)\to E(M)$ as follows: 
	for  any edge $e \in E(M_0)$,
\begin{align*}
\psi(e) &=
  \begin{cases}
   \psi_P(e)     & \text{if } e \in E\left(P_0 \cdot e_0 \right) \\
   \psi_Q(e)       & \text{if } e \in E\left(Q_0 \cdot e'_0 \right) \\
   \psi_S(e) & \text{if } e \in E(S_0).
  \end{cases}
  \end{align*}
By assumption, $E(L) \cap E(R) =\emptyset$. 
Therefore, $(Q\cdot e') \cup S \cup (P \cdot e)$ 
is also an edge-disjoint union. As a consequence, 
$\psi: E(M_0) \to E(M)$
is bijective, and for any $e \in E(M_0)$, 
the multiplicity of $\psi(e)$ is the same as the multiplicity of $e$.
Thus $(\phi,\psi): M_0 \to M$ is a homomorphism. Hence,
$M\in \calF$, concluding the proof.
\end{proof}

For $k \ge 0$ and $\ell \geq 3$, we define $\calU_{k,\ell} \subset \calU$
as the bicolored balanced multigraphs $H \in \calU$ with \emph{$k$ double edges} and \emph{$\ell$ single edges}.
The following lemma upper bounds the number of unlabeled graphs  in $\calU_{k,\ell}$.

	\begin{lemma}[Enumeration of isomorphism classes]
	\label{lmm:enum1}	
	Let $k \geq 0$ and even $\ell \geq 4$. 
  Then the number of unlabeled graphs in $\calU_{k,\ell}$ is at most $17^{k} 4^{\ell}$.
	\end{lemma}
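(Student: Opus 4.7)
The plan is to bound $|\calU_{k,\ell}|$ by encoding each isomorphism class via the tree representation introduced in the paper. Every $H\in\calU_{k,\ell}$ corresponds (up to iso) to a tuple consisting of: (i) a rooted plane tree $T$ with $m$ nodes (cycles) and $m-1$ edges (stems); (ii) even cycle lengths $\ell_i\ge 4$ with $\sum_i \ell_i = \ell + 2(m-1)$, since each stem absorbs one edge from its parent cycle (by contraction to accommodate the child) and one edge from its child cycle (by blossoming); (iii) positive stem lengths $k_j$ with $\sum_j k_j = k$; (iv) an attachment pattern on each cycle $i$ marking which of the $\ell_i$ positions host stems (totaling $d_i + \mathbf{1}_{i\neq 0}$ marked per cycle, where $d_i$ is the number of children of node $i$); and (v) a color bit for the root. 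Since the flowering procedure recovers $H$ from such data, the number of valid tuples is an upper bound on $|\calU_{k,\ell}|$.

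I would then bound each factor via standard estimates: plane rooted trees with $m$ nodes by the Catalan bound $C_{m-1}\le 4^{m-1}$; cycle-length compositions (via the substitution $\ell_i = 2a_i + 4$) by $\binom{\ell/2-2}{m-1}\le 2^{\ell/2}$; attachment patterns by $\prod_i\binom{\ell_i}{d_i+\mathbf{1}_{i\neq 0}}\le 2^{\sum_i \ell_i} = 2^{\ell+2(m-1)}$ using $\binom{n}{r}\le 2^n$; stem-length compositions by $\binom{k-1}{m-2}$; and the color bit by $2$. Collecting powers of $2$, the $m$-th summand becomes $2^{3\ell/2+1}\cdot 16^{m-1}\binom{k-1}{m-2}$. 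Summing via the binomial identity $\sum_{r\ge 0}\binom{k-1}{r}16^r = 17^{k-1}$ yields
\[
|\calU_{k,\ell}| \;\le\; 2^{3\ell/2+1}\cdot 16\cdot 17^{k-1} \;=\; 2^{3\ell/2+5}\cdot 17^{k-1}
\]
for $k\ge 1$. Comparing with the target $17^k \cdot 4^\ell = 17^k\cdot 2^{2\ell}$, the ratio reduces to $2^{5-\ell/2}\le 17$, which holds since $\ell\ge 4$ gives $2^3 = 8\le 17$. The edge case $k=0$ forces $m=1$ (no stems, no attachments), yielding $|\calU_{0,\ell}|\le 2 \le 4^\ell$.

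The main obstacle is bookkeeping rather than a deep idea: one must verify that the encoding is complete (each iso class corresponds to at least one tuple in the enumeration, which follows from the paper's construction of $\calU$), and then check that the $m$-summation collapses cleanly into $17^{k-1}$ rather than leaving residual $m$-dependence. The factor $4^\ell$ in the final bound arises from combining the cycle-length compositions ($\le 2^{\ell/2}$) with the attachment-pattern bound ($\le 2^\ell$), while the factor $17^k$ comes from the identity $\sum_r \binom{k-1}{r}16^r = (1+16)^{k-1} = 17^{k-1}$, where the $16$ packs together the Catalan bound ($4^{m-1}$) and the $m$-dependent part of the attachment bound ($4^{m-1}$); the leftover constant $32 = 2^5$ is absorbed by the slack $2^{5-\ell/2}\le 17$ valid for $\ell\ge 4$.
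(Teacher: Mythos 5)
Your proposal is correct and follows essentially the same route as the paper: enumerate the tree representation of each $H\in\calU_{k,\ell}$, bound the number of choices of cycle lengths, stem lengths, attachment positions and the root color bit, and close the sum over the number $m$ of cycles via $\sum_{r}\binom{k-1}{r}16^{r}=17^{k-1}$. The only cosmetic difference is bookkeeping: you factor out a Catalan bound $C_{m-1}\le 4^{m-1}$ for the plane-tree shape and sharpen the cycle-length composition count using $\ell_i\ge 4$ and evenness, whereas the paper lets the contraction-index subsets $S_i$ encode the tree shape implicitly and uses the looser bound $\binom{\ell+2m-3}{m-1}\le 2^{\ell+2m-3}$; both land comfortably under $17^k4^\ell$.
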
	
\begin{proof}
If $k=0$, then by \prettyref{lmm:decom_simple}, all graphs in $\calU_{0,\ell}$ are isomorphic to an alternating $\ell$-cycle. Thus
the number of unlabeled graphs in $\calU_{0,\ell}$ is $1$.  Hence the conclusion trivially holds.  
Next we focus on the case where $k \ge 1$. Note that if two unlabeled multigraphs in $\calU$ have the same tree representation, 
they must be the same.  
Hence it suffices to upper bound 
the number of possible tree representations of multigraphs $H \in \calU_{k,\ell}$.


Fix a marked tree $T$ of $m$ vertices that is a tree representation of
 a multigraph $H \in \calU_{k,\ell}$. Since $k \ge 1$, it follows that $2 \le m \le k+1$. 
Starting with the root node, we order the nodes and links in $T$ via breadth first search. 
For $i \in [m]$, let $\ell_i$ denote the length
of the cycle represented by node $i$ 
and $S_i$ denote the set of indices of contracted edges in the cycle.
For $j \in [m-1]$, let $k_j$ the length of the stem represented by link $j$.
Let $c$ denote the extra mark of the root vertex. 
Note that from $\{ c, (\ell_i, S_i)_{i=1}^m, (k_j)_{j=1}^m \}$, we
can uniquely determine the marked tree $T$. In particular, $\ell_i$
determines the mark of vertex $i$ and $k_j$ determines the first mark of 
edge $j$. Moreover, $S_i$ determines the number of children of node $i$ 
and the second mark on every link connecting $i$ to its child. 
Therefore, to bound the number of all possible marked trees $T$, 
it suffices to bound the number of all possible
$\{ c, (\ell_i, S_i)_{i=1}^m, (k_j)_{j=1}^{m-1} \}$.
Note that 
$$
\sum_{i=1}^m \ell_i = \ell + 2(m-1), \quad \sum_{j=1}^{m-1} k_j = k. 
$$
Hence, the number of possible choices of sequences $(\ell_1, \dots, \ell_m)$ is at most $\binom{\ell+2m-3}{m-1}$,
and the number of possible choices of sequences $(k_1, \dots, k_{m-1})$ is at most $\binom{k-1}{m-2}$. 
Moreover, for each $i$, there are at most $2^{\ell_i}$
different choices of $S_i$. Hence, the number of all possible
$\{ c, (\ell_i, S_i)_{i=1}^m, (k_j)_{j=1}^m \}$ is at most $2^{\ell+2m-1} \binom{\ell+2m-1}{m-1} \binom{k-1}{m-2}$.
Therefore, 
the number of possible tree representations of multigraphs $H \in \calU_{k,\ell}$ 
is at most 
\begin{align*}
\sum_{m = 2}^{k+1} 2^{\ell+2m-1}  \binom{\ell+2m-3}{m-1} \binom{k-1}{m-2} 
&\le \sum_{m = 2}^{k+1} 2^{\ell+2m-1}   2^{\ell+2m-3} \binom{k -1}{m-2} \\
& = 2^{2\ell+4} \sum_{m = 2}^{k+1} 2^{4(m-2)} \binom{k-1}{m-2}  \le 2^{2\ell+4} 17^{k-1} \le 4^\ell 17^k.
\end{align*}

\end{proof}


Define\footnote{The constraint that $|E(F)| \leq 4n$ is due to the fact that, for any vertex $x$ of the F2F polytope, the multigraph $G_y$ obtained from $y=2(x-x^*)$ has maximal degree at most 8.}
\[
 \calF^* = \{ F \in \calF: V(F) \subset [n] \text{ and } \left|E(F)\right| \le 4n \text{ and for every red edge $e \in E(F),$ $x^*_e=1$ } \}.
\]
Given $H \in \calU$, we say a homomorphism $\phi: H \to F$ is \emph{compatible with $x^*$} if 
$\phi(H) \in \calF^*$.
Denote by $\Phi^\ast_{H}$ the set of all homomorphism $\phi: H \to F$ that is compatible with $x^*$.
Then 
$$
\calF^\ast = \left\{ \phi(H): \phi \in \Phi^\ast_{H}, H \in \calU  \right \}.
$$






In the following, we upper bound the number of elements in $\Phi^\ast_{H}$ for a given
$H \in \calU$. We need to set up a few notations. Let $H_d$ and $H_s$ denote the subgraph of $H$ induced by all the double edges and all the single edges, respectively. 
Then we have an edge-disjoint union $H=H_d \cup H_s$. 
For a vertex map $\phi \in \Phi^*_{H}$, let $\phi_d$ and $\phi_s$ 
denote $\phi$ restricted to $V(H_d)$ and $V(H_s)$, respectively.
Note that $\phi_d (v) = \phi_s(v)$ for all $v \in V(H_d) \cap V(H_s)$.
We write $\phi=(\phi_s,\phi_d)$.  

\begin{lemma}[Enumeration of homomorphisms]
	\label{lmm:enum2}	
Let $ k \ge 0$ and let $\ell \geq 4$ be an even integer.
Fix a bicolored balanced multigraph $H \in \calU_{k,\ell}$.
 \begin{itemize}
\item There exists an integer $0 \le r \le \ell/2$ such that
\begin{align}
\log |\Phi^*_{H_d}|
\le \frac{1}{2} (k+r) \log (2n),  \label{eq:count_2_double}
\end{align}
where
$$
\Phi^*_{H_d} \triangleq \left\{ \phi_d: \exists \phi_s, \text{ s.t. } (\phi_d,\phi_s) \in \Phi^*_{H} \right\};
$$
\item For any fixed vertex map $\phi_d: V(H_d) \to [n]$,
\begin{align}
\log \left| \left\{ \phi_s:  \left(\phi_s, \phi_d \right) \in \Phi^*_{H} \right\} \right|
\le \left( \ell/2-r\right)  \log n +  \left( \ell/2+k  \right) \log 2. \label{eq:count_2_single}
\end{align}

\end{itemize}
\end{lemma}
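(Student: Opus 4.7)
The plan is to separately bound $|\Phi^\ast_{H_d}|$ and, for each fixed $\phi_d$, the number of extensions $\phi_s$, then identify a common $r\in[0,\ell/2]$ that makes both inequalities hold. Two structural features of $H\in\calU_{k,\ell}$ drive the analysis: (i) the double-edge subgraph $H_d$ decomposes as a disjoint union of $c$ stems, i.e., paths of alternating-colored double edges, whose endpoints (tips) lie in the single-edge cycles comprising $H_s$; and (ii) balance at each tip forces the two incident cycle edges to share a color opposite to that of the stem's incident double edge. A direct consequence is the identity $\tau^R=c+\Delta$ (and $\tau^B=c-\Delta$) for the number of red, resp.\ blue, tips, where $\Delta=k_R-k_B$.

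For $|\Phi^\ast_{H_d}|$, I would enumerate $\phi_d$ by walking along each stem from a designated tip and counting image choices step by step: a red double-edge step contributes at most $2$ (the next image is a neighbor on $C^*$), while a blue step contributes at most $n$. Multiplying over the $c$ stems gives a bound of the form $n^c\cdot 2^{k_R}n^{k_B}$. Shared tips between stems reduce the number of independent starting points, and length-one red arcs between adjacent blue tips in a cycle (which belong to $H_s$ but impose a $C^\ast$-constraint on $\phi_d$ for any valid extension) further tighten the bound by replacing certain $n$ factors with $2$. Rewriting the resulting product in the form $(2n)^{(k+r)/2}$ picks off a candidate value of $r$ in terms of $c$, $\Delta$, and the number of such length-one arcs.

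For the extensions $|\{\phi_s\}|$ with $\phi_d$ fixed, each cycle $C_i$ is split by its fixed tips into arcs whose edges alternate in color (except at the tip endpoints, where balance forces two same-color incident edges). For an arc of length $q$ with $a$ red and $b$ blue edges and both endpoints fixed, I would decompose the arc into blue-separated segments of red-only walks on $C^*$: the red segments contribute $2^{r_j}$ walks each, the blue edges give $n$ free choices of destination, and the final segment is forced to land on the fixed endpoint. A symmetry-of-walks argument on $C^*$ (summing over the last free intermediate) shows the resulting arc count is at most $2^a n^{b-1}$ when $b\ge1$, and at most $2^a$ when $b=0$. Aggregating across arcs and cycles and substituting the tip-color identities $\tau^R=c+\Delta$, $\tau^B=c-\Delta$ produces a bound of the claimed shape $n^{\ell/2-r}\cdot 2^{\ell/2+k}$ for the same $r$ chosen above.

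The main obstacle is verifying that the value of $r$ extracted from these two calculations lies in $[0,\ell/2]$ and simultaneously satisfies both inequalities. The constraint $r\le\ell/2$ is controlled by the structural bound $\ell\ge 3(c+1)$ (each of the $c+1$ cycles in the tree representation contributes at least three edges after the contractions producing $\calU$) together with $|\Delta|\le c$ (each stem contributes at most $\pm 1$ to $\Delta$). The most delicate point is the coordinated treatment of the length-one red arcs: each such arc simultaneously loosens the arc-count bound (since $b=0$) and tightens $|\Phi^\ast_{H_d}|$ (by turning a factor of $n$ into a factor of $2$), and it is precisely this matched refinement that permits a single $r$ to work for both inequalities.
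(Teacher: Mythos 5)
Your proposal follows the same strategy as the paper: enumerate $\phi_d$ by walking along stems from designated tips, enumerate $\phi_s$ by dividing each blossom cycle into arcs between fixed tips, and extract a common parameter $r$. Your identity $\tau^R=c+\Delta$ is correct (it is the paper's $2n_1+n_2 = n_0 + (n_1-n_3)$ in disguise). However, the heart of the lemma --- the coordinated bookkeeping that produces a \emph{single} value of $r\in[0,\ell/2]$ satisfying both inequalities simultaneously --- is deferred rather than carried out, and a couple of structural claims are imprecise.

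Concretely: your claim (ii), that balance forces the two cycle edges at a tip to share a color opposite to the stem's incident double edge, fails when a tip carries more than one stem, a case the paper explicitly allows (edge contraction may be applied to an edge already incident to a stem). Likewise $H_d$ is not in general a vertex-disjoint union of stems, since stems can share tips. Second, your raw bound $n^c\cdot 2^{k_R}n^{k_B}$ on $|\Phi^\ast_{H_d}|$ is not of the form $(2n)^{(k+r)/2}$: equating exponents of $n$ and $2$ would force $c=\Delta$, which is false. The improvement must come from refined counting --- shared tips, stems consisting of a single red double edge, and the red arcs in the cycles that link two such ``Type II'' tips and propagate further $C^\ast$-constraints. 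The paper formalizes this through ``red components'' of an auxiliary graph whose vertices are the Type~II red paths joined by single-red-double-edge stems, and only after that bookkeeping does the exponent come out as $(k+r)/2$ for an explicit $r=n_3+n_4-n_1$. Third, you do not address $r\ge 0$; the paper derives $r\ge m_0/2$ from a separate combinatorial inequality ($\ell-m_0\ge 2(\ell_b-n_4)$, because each Type~I red path avoids tips), while $r\le\ell/2$ follows from $\ell_b\ge n_4$ rather than from $\ell\ge 3(c+1)$. In short, the outline has the right shape, but the two coupled refinements you defer are exactly where the lemma lives, so as written the argument is incomplete.
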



\begin{proof}
 Let $n_0$ denote the number of stems  and $m_0$ denote the
number of distinct vertices that are tips of some stem in $H$. 
Then $n_0 \le k$ and $m_0 \le 2n_0$. Recall that in the tree representation of $H$, each link corresponds to a stem in $H$ and
each node corresponds to a cycle. Hence, there are $n_0+1$ nodes
in the tree. Each cycle after contraction will have at least $3$ single edges in $H$,
and thus the total number of single edges $\ell \ge 3(n_0+1)$.

We will use the following crucial property in enumerating the vertex maps  $\phi: V(H)\to [n]$ 
that are compatible with $x^*$. 
Given a red edge $(u,v)$, $\phi(u)$ and $\phi(v)$ must be neighbors in the cycle 
corresponding to $x^*$.   
Therefore, once the label $\phi(u)$  is fixed, there are at most 
two different choices for $\phi(v)$. 
For every vertex  in $H$, we will
assign a mark in a certain order such that the
mark on a vertex is an upper bound on 
the number of choices of its labeling given the labelings of 
previous vertices. In particular, 
for every red edge $(u,v)$, if one endpoint has been assigned a mark before the other endpoint, 
then we assign mark $2$ to the other endpoint.

First, we count the vertex maps for double edges $\phi_d: V(H_d) \to [n]$ that are
compatible with $x^*$. Fix a stem $S=(v_0, \ldots, v_{s})$ of  
$s$ double edges in alternating colors. We sequentially choose
  the vertex labelings from $v_0$ to $v_{s}$. 
  We distinguish three types of stems.

\begin{enumerate}[Type 1:]
	\item Both tips are red. In this case, 
	 the number of different values of $\phi_d(v_i)$ given $\left( \phi_d(v_0), \ldots, \phi_d(v_{i-1}) \right)$  is 
at most  $n$ for even $i$ and $2$ for odd $i$.  See \prettyref{fig:stemtype1}.



 \item The two tips have different colors.
 Without loss of generality, assume $(v_0,v_1)$ is a red double edge.   
 Again  the number of choices of $\phi_d(v_i)$ given $\left( \phi_d(v_0), \ldots, \phi_d(v_{i-1}) \right)$ 
is at most  $n$ for even $i$ and $2$ for odd $i$. 
See \prettyref{fig:stemtype2}.

\item Both tips are blue. 
In this case,  
  the number of choices of $\phi_d(v_0)$  is 
at most  $n$, and the remaining part of stem $(v_1,\ldots,v_s)$ is of Type II. 
Thus, for $i \ge 1$, the number of possible maps $\phi_d(v_i)$ is given $\left( \phi_d(v_0), \ldots, \phi_d(v_{i-1}) \right)$ at most  $n$ for odd $i$ and $2$ for even $i$. 
See \prettyref{fig:stemtype3}.

\end{enumerate}
In summary, the number of possible vertex labelings for a stem of length $s$ of Type $i$ is at most 
\[
n^{\frac{s+i}{2}} 2^{\frac{s+2-i}{2}} , \quad i=1,2,3.
\]
 
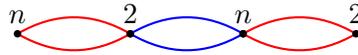
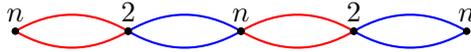
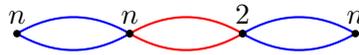
\begin{figure}[ht]
\centering
\subfigure[Type 1.]{\label{fig:stemtype1}
\begin{tikzpicture}[scale=1.5,every edge/.append style = {thick,line cap=round}]
  \reddoubleedge{(-1,0)}{(0,0)}
\bluedoubleedge{(0,0)}{(1,0)}
\reddoubleedge{(1,0)}{(2,0)}
\node at (-1,0) [above] {$n$};
\node at (0,0) [above] {$2$};
\node at (1,0) [above] {$n$};
\node at (2,0) [above] {$2$};
\end{tikzpicture}
}

\subfigure[Type 2.]{\label{fig:stemtype2}
\begin{tikzpicture}[scale=1.5,every edge/.append style = {thick,line cap=round}]
  \reddoubleedge{(-1,0)}{(0,0)}
\bluedoubleedge{(0,0)}{(1,0)}
\reddoubleedge{(1,0)}{(2,0)}
\bluedoubleedge{(2,0)}{(3,0)}
\node at (-1,0) [above] {$n$};
\node at (0,0) [above] {$2$};
\node at (1,0) [above] {$n$};
\node at (2,0) [above] {$2$};
\node at (3,0) [above] {$n$};
\end{tikzpicture}
}

\subfigure[Type 3.]{\label{fig:stemtype3}
\begin{tikzpicture}[scale=1.5,every edge/.append style = {thick,line cap=round}]
  \bluedoubleedge{(-1,0)}{(0,0)}
\reddoubleedge{(0,0)}{(1,0)}
\bluedoubleedge{(1,0)}{(2,0)}
\node at (-1,0) [above] {$n$};
\node at (0,0) [above] {$n$};
\node at (1,0) [above] {$2$};
\node at (2,0) [above] {$n$};
\end{tikzpicture}
}
\caption{Three types of stems. The number on  each vertex shows the number of different choices for its labeling under $\phi_d$.
}
\label{fig:stemtype}
\end{figure}


 Suppose in $H$ there are 
$n_i$ stems of Type $i$, for $i=1,2,3$.
Then the total number of 
 different vertex maps for $H_d$ is at most 
\[
n^{(k+n_1+2n_2+3n_3)/2} 2^{(k+n_1-n_3)/2}
\] 
This bound can be further improved by taking into account the fact that some of the tips are either identical or connected by a red single edge.
In the following we give a tighter upper bound by incorporating these constraints.
 
Let $\ell_r$ and $\ell_b$ (resp.~$k_r$ and $k_b$) denote the number of red and blue single (resp.~double) edges in $H$ respectively. 
 Then we have $k_r+k_b=k$
 and $k_r-k_b=n_1-n_3$. Hence, 
\begin{align}
k_r =  ~ \frac{1}{2}(k+n_1-n_3), \quad k_b = ~\frac{1}{2}(k -n_1+n_3).
\label{eq:krkb}
\end{align}
Analogously, $\ell_r+\ell_b =\ell$ and  $2k_r+\ell_r=2k_b+\ell_b$ by the balancedness of $H$. Hence,
 \begin{align}
\ell_r = \frac{\ell}{2} -n_1+n_3, \quad \ell_b = ~\frac{\ell}{2} +n_1-n_3.
\label{eq:lrlb}
\end{align}
Moreover, the number of red (resp.~blue) tips is $2n_1+n_2$ (resp.~$n_2+2n_3$). 
To count the total number of vertex maps for $H_d$, it suffices to count the labelings of tip and non-tip vertices separately. We will count  the labelings sequentially: 
first for tip vertices and then for non-tip vertices. 

\paragraph{Tips.}

Without loss of generality, we assume 
there is no cycle in $H$ consisting of only red single edges. Suppose, for the sake of contradiction, there is a cycle $(u_0, \ldots, u_{\ell_1-1}, u_0)$
consisting of $\ell_1$ red single edges. By the construction of $\calU$,
each vertex is attached to at least one flower. 
For any homomorphism $\phi: H \to F$,  
$(u_0, \ldots, u_{\ell_1-1}, u_0)$ is mapped to the ground truth cycle $x^*$. 
Therefore $\ell_1=n$. A flower has at least $5$ edges. Therefore,
the total number of edges (counting with multiplicity) in $E(F)$ is at least $n+5n=6n$, and thus 
$\phi$ is not compatible with $x^*$ by definition. 
Therefore, there is no homomorphism $H \to F$ compatible with $x^*$.

In a given cycle of $H$, we define a red path to be a maximal path consisting of all red single edges. 
We define a vertex as a red path of length $0$, if it is incident to two blue single edges in $H$.
Then all cycles in $H$ are segmented by blue single edges into a total of $\ell_b$ red paths (cf.~\prettyref{fig:cycle_binary} for an illustration). 
 We distinguish two types of red paths depending on whether it is incident to a double edge or not:
 \begin{enumerate}[Type I:]
 \item The red path is not incident to any double edge. In this case, the red path must consist of only one red single edge.
 \item The red path is incident to at least one double edge. In this case, at least one of the vertex in the path is a tip of some stem. 
 \end{enumerate}

To count the vertex maps for tips of stems, it suffices to focus on red paths of Type II. 
Note that two red paths of Type II may be connected by a stem consisting of only one red double edge.
To account for the total number of constraints induced by red paths, let $n_4$  denote the total number of red paths of Type II and let $m_1$  denotes the total number of stems consisting of one red double edge.
Define a graph whose vertices are red paths of Type II, and two red paths are connected if they are connected by a one red double edge. By definition of the tree representation, such a graph must be a forest; thus the number of connected components therein is $n_4-m_1$.
Call each connected component a \emph{red component}. 
The key observation is that, for each red component, once we fix the labeling for a given vertex in it, any remaining vertex has at most $2$ labelings. 
Moreover,  each red component must contain at least one vertex as the tip of some stem, and
each tip is contained in some red component. 
Therefore, we can assign marks to tips in two steps: First, for each red component, pick an arbitrary tip and assign
a mark $n$ to it; next, we assign mark $2$ to each of the remaining tips.
Since there are $n_4-m_1$ red component, there are in total $n_4-m_1$ tips having mark $n$. 

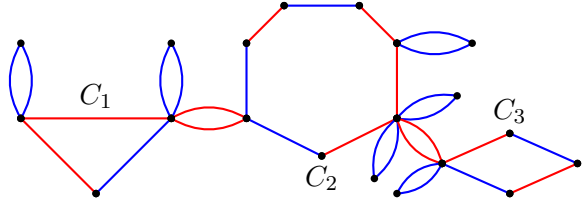
\begin{figure}[ht]
\centering
\begin{tikzpicture}[scale=1,every edge/.append style = {thick,line cap=round}]
  \bluesingleedge{(-1,1)}{(0,1)}
    \redsingleedge{(0,1)}{(0.5,0.5)}
      \redsingleedge{(0.5,0.5)}{(0.5,-0.5)}
           \redsingleedge{(0.5,-0.5)}{(-0.5,-1)}
            \bluesingleedge{(-0.5,-1)}{(-1.5,-.5)}
           \redsingleedge{(-1.5,0.5)}{(-1,1)} 
					\bluesingleedge{(-1.5,0.5)}{(-1.5,-0.5)} 
					\reddoubleedge{(-1.5,-0.5)}{(-2.5,-0.5)}
					\redsingleedge{(-2.5,-0.5)}{(-4.5,-0.5)}
					\redsingleedge{(-4.5,-0.5)}{(-3.5,-1.5)}
					\bluesingleedge{(-3.5,-1.5)}{(-2.5, -0.5)}
					\bluedoubleedge{(-2.5,-0.5)}{(-2.5,0.5)}
					\bluedoubleedge{(-4.5,-0.5)}{(-4.5,0.5)}
          \bluedoubleedge{(0.5,0.5)}{(1.5,0.5)}         
      \bluedoubleedge{(0.5,-0.5)}{(1.3,-0.2)}
       \bluedoubleedge{(0.5,-0.5)}{(0.2,-1.3)}
  \reddoubleedge{(0.5,-0.5)}{(1.1,-1.1)}
    
    \redsingleedge{(1.1,-1.1)}{(2,-0.7)}
    \bluesingleedge{(2,-0.7)}{(2.9,-1.1)}
    \redsingleedge{(2.9,-1.1)}{(2,-1.5)}
     \bluesingleedge{(2,-1.5)}{(1.1,-1.1)}
       \bluedoubleedge{(1.1,-1.1)}{(0.5,-1.5)}
       
       \node at (-3.5,-0.5) [above] {$C_1$};
       \node at (-0.5,-1) [below] {$C_2$};
      \node at (2,-0.7) [above] {$C_3$};
     
\end{tikzpicture}
%
\caption{Three cycles with attached double edges in $H$. There are one red path with length $2$ of type II in cycle $C_1$.
There are three red paths in cycle $C_2$:  one with length $1$ of type I, one with length $0$ and the other of length $3$ both of type II.
There are two red paths in cycle $C_3$: one with length $1$ of type I and the other with length $1$ of type II. 
In total, there are  $2$ red components formed by $4$ red paths of type II: 
one formed by the red path in cycle $C_1$ and the red path with length $0$ in cycle $C_2$,
and the other formed by the red path with length $3$ in cycle $C_2$ and the red path with length $1$ of type II in cycle $C_3$.
}
\label{fig:cycle_binary}
\end{figure}

\paragraph{Non-tips.}
Given the labeling of tips, we count the labeling of non-tips. Note that each non-tip vertex $u$ is incident to 
at least one red double edge. Consider two cases separately:
\begin{itemize}
\item If the red double edge is incident to a tip, 
since the labeling of tips have been assigned, there are at most $2$ possible labelings for $u$.
Hence we assign a mark $2$ to vertex $u$.

	\item If the red double edge is
in the interior of the stem, \ie, not incident to any tip, we assign  mark $n$ to one of its endpoint
and mark $2$ to the other endpoint.
Recall that $m_1$ denotes the number of stems consisting of only one
red double edge. Then there are exactly 
\[
k_r- 2(n_1-m_1) -n_2-m_1=k_r- 2n_1 -n_2+m_1 
\]
number of red
double edges in the interior of the stems. 


\end{itemize}


By now we have assigned marks -- either $n$ or $2$ -- to every vertex in $H_d$. 
Once we multiply all the marks together, we obtain an upper bound on the number of vertex maps $\phi_d$ for $H_d$.
Recall that the total number of stems is $n_0=n_1+n_2+n_3$ 
and $m_0$ denotes the number of distinct tip vertices in $H$. 
Then in total there are $k-n_0+m_0$ vertices in $H_d$. 
 Recall that there are $n_4-m_1$ marks of $n$ assigned to tips and $k_r-2n_1-n_2+m_1$ marks of $n$ assigned to non-tips. 
Hence, in total we have $k_r-2 n_1-n_2+n_4$ marks
of $n$. The rest of marks all take values $2$. 
In view of \prettyref{eq:krkb}, we have
$$
k_r - 2n_1 -n_2 + n_4 = \frac{1}{2} \left( k + n_1 - n_3\right) - 2n_1 -n_2 + n_4 = \frac{1}{2} \left( k -3n_1 -2 n_2 -n_3 + 2n_4 \right).
$$
Therefore, 
\begin{align}
\log \left| \left\{ \phi_d:  \phi \in \Phi^*_{H} \right\} \right|
 \le \frac{1}{2} \left( k -3 n_1 -2n_2 - n_3 +2 n_4 \right) \log (n/2) + \left( k-n_0 +m_0 \right) \log 2. 
\label{eq:phid0}
\end{align}

Next, we fix a vertex map $\phi_d$ for $H_d$ and 
count all the possible vertex maps for $H_s$ that
are compatible with $x^*$.
To this end, a key observation is that it suffices to count the possible labelings for the endpoints of red single edges in $H_s$. 
Indeed, for any vertex $u$ not incident to any red single edges, by the balancedness and connectedness of $H$, $u$ 
must be incident to at least one red double edge, which has been accounted for by $\phi_d$.  

For each red path of Type I, it consists of only one
red single edge and thus we assign mark $n$ to one of its endpoint and $2$ to its other endpoint. 
For each red path of Type II, since it contains at least one tip whose vertex map has already been fixed by
the vertex map for double edges, each of its other vertex has at most $2$ different vertex maps. Hence, we
assign mark $2$ to every vertex other than tips in the red path.  

Recall that there are $\ell_b - n_4$ red paths of Type I. Hence the total number of marks of value $n$, in view of \prettyref{eq:lrlb}, is
$$
\ell_b - n_4  = \frac{\ell}{2} + n_1 - n_3 - n_4 .
$$
The rest of marks are $2$. 
Recall that $m_0$ denotes the total number of distinct tip vertices in $H$, and each tip is incident to at least one single edge.
Note that there are in total $\ell$ vertices in $H_s$. Thus $m_0$ of them are tips.
Hence, there are $\ell-m_0$ vertices in $H_s$ we need to assign marks to and thus 
the total number of marks of $2$ in $H_s$ is at most $\ell-m_0 - (\ell_b-n_4)$.
In total, we have 
\begin{align}
\log \left| \left\{ \phi_s:  (\phi_s, \phi_d) \in \Phi^*_{H}  \right\} \right|
 \le &~ \left( \ell/2 + n_1 -n_3 -n_4 \right)  \log (n/2)  +\left( \ell-m_0   \right) \log 2
\nonumber \\
 =&~\left( \ell/2-r \right)  \log n  +\left( \ell/2-m_0  +r \right) \log 2, \label{eq:count_2_single_exact}
\end{align}
where we defined
$$
r \triangleq n_3+n_4 - n_1.
$$

Furthermore, recall that $n_0$ denotes the total number of stems.
Note that $n_4 \le m_0 \le 2n_0$, because every red path of Type II must contain at least one  tip,
and each stem has at most $2$ tip vertices that are distinct from the tip vertices of the other stems.  
Using $n_0=n_1+n_2+n_3$ and $n_4 \le 2n_0$, we get that $r \ge -3n_1 -2n_2 -n_3 + 2n_4$.
In view of \prettyref{eq:phid0}, we have
\begin{align}
\log \left| \left\{ \phi_d:  \phi \in \Phi^*_{H} \right\} \right|
 & \le \frac{1}{2} \left( k + r \right) \log (n/2) + \left( k-n_0 +m_0 \right) \log 2  \nonumber \\
 & =  \frac{1}{2} \left( k + r \right) \log n + \frac{1}{2} \left( k -r -2n_0+ 2m_0 \right) \log 2  \nonumber \\
 & \le   \frac{1}{2} \left( k + r \right) \log n + \frac{1}{2} \left( k  -r + m_0 \right)  \log 2,
 \label{eq:count_2_double_exact}
\end{align}
where the last inequality holds because $m_0 \le 2n_0$.

Next we show that $m_0/2 \le r \le \min\{ \ell/2, m_0+n_0 \}$.
Indeed, by definition, we have $r \le n_3 +n_4 \le n_0+m_0$. Since $ 0 \le \ell_b-n_4 = \ell/2 - r$, 
it follows that $r \le \ell/2$. Furthermore, for any red path of Type I,
it consists of precisely one
red single edge, where neither of the two endpoints is a tip. 
Thus $\ell-m_0 \ge 2(\ell_b-n_4)$. Recall that $\ell_b-n_4= \ell/2 + n_1 - n_3 - n_4=\ell/2-r$.
It follows that $r \ge m_0/2$.

Finally, the desired \prettyref{eq:count_2_double} follows from \prettyref{eq:count_2_double_exact}
in view of $m_0/2 \le r$ and the desired \prettyref{eq:count_2_single} follows from 
\prettyref{eq:count_2_single_exact} in view of $r \le m_0+n_0 \le m_0+k.$
\end{proof}


\subsection{Proof of \prettyref{thm:LP_opt}}
We prove that if 
\begin{align}
\alpha - \log n \ge 16 \log 17, \label{eq:alpha_condition}
\end{align}
then
\begin{align}
\min_{x^*} \prob{ \hat x_{\rm F2F} = x^* } \ge 1- 8 \exp \left( - (\alpha-\log n) /8 \right). \label{eq:2f2_success_prob}
\end{align}
Then \prettyref{thm:LP_opt} readily follows by taking $\alpha-\log n \to +\infty.$
By \prettyref{lmm:decomp}, 
$$
G_y= \bigcup_{i=1}^m F_i, \quad F_i \in \calF
$$ 
for each $1 \le i \le m$ and some finite $m$. 
Note that for each red edge $e$ in $G_y,$ $x^*_e=1$.
Therefore, $F_i \in \calF^*$. 
Thus, to prove \prettyref{eq:2f2_success_prob}, it suffices to show
\begin{align}
\prob{ \max_{F \in \calF^*} w(F) <0 } \ge 1- 8 \exp \left( - (\alpha-\log n) /8 \right). \label{eq:2f2_success_prob_2}
\end{align}

Fix $k \ge 0$ and $\ell \ge 4$,  define
\begin{align*}
 \calF^*_{k, \ell} = \big\{ F \in \calF^*: E(F) & \text{ consists of $k$ double edges and $\ell$ single
 edges
 } 
 \big \}.
\end{align*}
Then 
\begin{align}
\calF^*_{k,\ell} = \{ \phi(H): H \in \calU_{k,\ell} \text{ and } \phi\in \Phi^*_H \}
\label{eq:calF_kell_map}
\end{align}
and 
$$
\calF^\ast=\bigcup_{k\ge 0} \bigcup_{\ell \ge 4} 
\calF^*_{k,\ell}.
$$

In view of \prettyref{eq:calF_kell_map}, we have
\begin{align}
\max_{F \in \calF^\ast_{k,\ell} } w(F)
= \max_{H \in \calU_{k,\ell}} \max_{\phi \in \Phi^\ast_{H}}  w( \phi(H) )
\label{eq:minF_H}
\end{align}
We first show a high-probability bound to the inner maximum for a given
$H \in \calU_{k,\ell}$. Since maximizing over $\phi$ is equivalent to 
first maximizing over $\phi_d$ and then maximizing over $\phi_s$ for a fixed $\phi_d$,
it follows that 
\begin{align*}
\max_{\phi \in \Phi^\ast_{H}}  w ( \phi(H) ) 
=  \max_{\phi_d  \in \Phi^\ast_{H_d}}  
\left(  w \left( \phi_d(H_d) \right)  + 
\max_{\phi_s: (\phi_d,\phi_s) \in \Phi^\ast_{H} }  
w \left( \phi_s (H_s ) \right)  \right).
\end{align*}
Recall that $X_i$'s and $Y_i$'s are two independent sequences of random variables, where $X_i$'s are i.i.d.~copies of $\diff P/\diff Q$ under distribution $P$ and $Y_i $'s are i.i.d.~copies of $\diff P/\diff Q$ under
distribution $Q$. 
Recall that $\ell_r$ and $\ell_b$ (resp.~$k_r$ and $k_b$) denote the number of red and blue single (resp.~double) edges in $H$ respectively. Let $\delta=k_b-k_r = (\ell_r-\ell_b)/2$. Then $\delta \le \min\{k,\ell/2\}$.
In view of \prettyref{eq:krkb} and \prettyref{eq:lrlb}, for a fixed $\phi_d$, 
$$
w \left( \phi_d(H_d) \right) \eqdistr 2 \left(\sum_{i=1}^{(k+\delta)/2} Y_i -\sum_{i=1}^{(k-\delta)/2} X_i  \right)
$$ 
and for a fixed $\phi_s$, 
$$
w \left( \phi_s (H_s ) \right) \eqdistr \sum_{i=1}^{\ell/2-\delta } Y_i - \sum_{i=1}^{\ell/2+ \delta} X_i  \; ,
$$ 
where $\eqdistr$ denotes equality in distribution. 
Moreover, for a fixed $\phi_d$, $w( \phi_d(H_d) )$ is the sum of the weights on double edges,
which is independent of 
the collection of $ w \left( \phi_s (H_s ) \right)$ ranging over
all possible $\phi_s$ such that $(\phi_s,\phi_d) \in \Phi^*_H$. 

Recall from \prettyref{lmm:enum2} that there exists an integer $0 \le r \le \ell/2$ such that
\begin{align*}
\log | \Phi^*_{H_d} | & \le  \frac{1}{2} (k+r) \log (2n), \\
\log \left| \left\{ \phi_s:  \left(\phi_s, \phi_d \right) \in \Phi^*_{H} \right\} \right|
& \le \left( \ell/2-r \right) \log n +  \left( k+\ell/2 \right) \log 2.
\end{align*}
Invoking the large deviation bound \prettyref{lmm:LD} in \prettyref{app:LD} with $s=(k-\delta)/2$, $t=\ell/2-\delta$, $u=\delta$, and $v=r-\delta$, 
and noting that
\begin{align*}
s+u+v/2 & = (k+r)/2 \\
t-v & = \ell/2 - r \\
s+t+u-v/2  & = (k+\ell-r)/2 \ge k/2 + \ell/4,  
\end{align*}
and \prettyref{eq:alpha_condition},
we get 
\begin{equation}
 \prob{ \max_{\phi \in \Phi^\ast_{H}}  w ( \phi(H) )  \ge 0 }
 \le 5 \exp \left( - (\alpha-\log n) (k/8+\ell/16) \right). 
\end{equation}

It follows that
\begin{align*}
 \prob{\max_{F \in \calF^*_{k,\ell} } w(F) \ge 0 }
 & \overset{(a)}{\le} 5 | \calU_{k,\ell} | \exp \left( - (\alpha-\log n) (k/8+\ell/16) \right) \\
 & \overset{(b)}{\le} 5 \times 17^k 4^{\ell} \exp \left( - (\alpha-\log n) (k/8+\ell/16) \right) \\
 & \overset{(c)}{\le} 5 \exp \left( - (\alpha-\log n) (k/16+\ell/32) \right),
\end{align*}
where $(a)$ follows from union bound; 
 $(b)$ follows from \prettyref{lmm:enum1};
 $(c)$ holds because $\alpha-\log n \ge 16 \log 17$ by assumption \prettyref{eq:alpha_condition}. 
 Taking another union bound over $k \ge 0$ and $\ell \ge 4$, we get 
 \begin{align*}
 \prob{\max_{F \in \calF^* }  w(F) \ge 0  }
 =&~ \prob{\max_{k\geq 0,\ell\geq 4}  \max_{F \in \calF^*_{k,\ell}}  w(F)   \ge 0  } \\
 \leq & ~ \sum_{k \ge 0} \sum_{\ell \ge 4} \prob{\max_{F \in \calF^*_{k,\ell}} w(F) \ge 0}  \\
 \le  & ~ 5 \sum_{k \ge 0} \sum_{\ell \ge 4} \exp \left( - (\alpha-\log n) (k/16+\ell/32) \right)\\
 \le & ~  \frac{5}{1-e^{-(\alpha-\log n)/8} } \frac{e^{-(\alpha-\log n)/8} }{1-e^{-(\alpha-\log n)/32}} \\
 \le & ~ \frac{5}{1-1/17} \frac{e^{-(\alpha-\log n)/8}}{1-1/4 } \le 8 e^{-(\alpha-\log n)/8}.
 \end{align*}
 Therefore, we arrive at the desired \prettyref{eq:2f2_success_prob_2}, completing  
 the proof of \prettyref{thm:LP_opt}.

\section{Information-theoretic Necessary Conditions}
\label{sec:IT}

We first present a general necessary condition needed for \emph{any} algorithm to succeed in recovering the hidden Hamiltonian cycle 
with high probability.  Recall that $X$ and $Y$ are two independent random variables distributed as the log likelihood ratio $\log (\diff P/\diff Q)$ under $P$ and $Q$, respectively. 

\begin{theorem}[Information-theoretic conditions] \label{thm:converse}
If there exists a sequence of estimators $\hat{C}$ such that $\min_{C^*} \pprob{ \hat{C} = C^*} \to 1$ as $n \to \infty,$ 
then
\begin{equation}
 \sup_{\tau \in \reals} \left\{ \log \prob{X \le \tau} + \log \prob{ Y \ge \tau} \right\} + \log n  \le  O(1).
\label{eq:ITlimit_converse_full}
\end{equation}

\end{theorem}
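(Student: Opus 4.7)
The plan is to prove the contrapositive: assume there is a sequence $\tau_n \in \reals$ with $\log \P\{X \le \tau_n\} + \log \P\{Y \ge \tau_n\} + \log n \to \infty$, and exhibit an alternative Hamiltonian cycle that beats $C^*$ in likelihood with probability tending to $1$. A preliminary reduction lets me work with the MLE: if some $\hat C$ attains $\min_{C^*} \P\{\hat C = C^*\} \to 1$, then by Bayes-optimality of the MAP estimator under the uniform prior on Hamiltonian cycles together with the permutation symmetry of the model, the MLE (which coincides with MAP here) attains $\P\{\hat C_{\rm ML} = C^*\} \to 1$ for every fixed $C^*$, so it suffices to contradict MLE success.

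Fix $C^*=(v_1,\ldots,v_n,v_1)$ and, for each pair $(i,j)$ with $2 \le j-i \le n-2$, let $C^{(i,j)}$ denote the $2$-opt neighbor of $C^*$ obtained by removing $e_i = (v_i,v_{i+1})$, $e_j = (v_j,v_{j+1})$ and inserting $f^1_{i,j} = (v_i,v_j)$, $f^2_{i,j} = (v_{i+1},v_{j+1})$. Introduce the ``aligned'' event
\[
B_{i,j} = \bigl\{\, w_{e_i} < \tau_n,\ w_{e_j} < \tau_n,\ w_{f^1_{i,j}} > \tau_n,\ w_{f^2_{i,j}} > \tau_n \,\bigr\},
\]
so that $\P\{B_{i,j}\} = \P\{X < \tau_n\}^2 \P\{Y > \tau_n\}^2$, and note that $B_{i,j}$ implies $w_{f^1_{i,j}}+w_{f^2_{i,j}} > 2\tau_n > w_{e_i}+w_{e_j}$, hence $w(C^{(i,j)}) > w(C^*)$, forcing the MLE to reject $C^*$. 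The task reduces to showing $\P\bigl\{\bigcup_{i,j} B_{i,j}\bigr\} \to 1$.

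For this I run the second moment method on $Z = \sum_{i,j} \Indc_{B_{i,j}}$. The mean is $\Exp[Z] \asymp n^2 \bigl(\P\{X<\tau_n\}\,\P\{Y>\tau_n\}\bigr)^2$, which tends to infinity by hypothesis. Two distinct $2$-opt swaps can overlap only in two ways: (i) by sharing exactly one cycle edge, which gives $\Theta(n^3)$ ordered pairs each with joint probability $\P\{X<\tau_n\}^3 \P\{Y>\tau_n\}^4$; or (ii) by sharing exactly one non-cycle edge, which forces the two swaps to be cyclic shifts of one another and yields only $O(n^2)$ pairs with joint probability $\P\{X<\tau_n\}^4 \P\{Y>\tau_n\}^3$. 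All remaining distinct pairs are fully disjoint and contribute $\P\{B_{i,j}\}^2$. Dividing by $(\Exp[Z])^2 \asymp n^4 \P\{X<\tau_n\}^4 \P\{Y>\tau_n\}^4$,
\[
\frac{\var(Z)}{(\Exp[Z])^2} = O\!\left(\frac{1}{n\,\P\{X<\tau_n\}} + \frac{1}{n^2\,\P\{Y>\tau_n\}} + \frac{1}{\Exp[Z]}\right),
\]
and each term vanishes: the hypothesis $n\,\P\{X<\tau_n\}\,\P\{Y>\tau_n\}\to\infty$ together with $\P\{X\le\tau_n\},\,\P\{Y\ge\tau_n\}\le 1$ forces both marginals $n\,\P\{X<\tau_n\}\to\infty$ and $n\,\P\{Y>\tau_n\}\to\infty$. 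Paley--Zygmund then gives $\P\{Z>0\}\to 1$, the desired contradiction.

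The main obstacle is attaining the sharp $\log n$ factor rather than $\tfrac12\log n$. A simpler argument using just $\lfloor n/4\rfloor$ \emph{edge-disjoint} $2$-opt swaps (one per block of $4$ consecutive cycle vertices) makes the bad events independent but only establishes failure when $\log n + 2\log \P\{X<\tau\} + 2\log \P\{Y>\tau\} \to \infty$, which is exactly half of what the theorem claims. Recovering the tight constant forces one to deploy the \emph{full} $\Theta(n^2)$ family of $2$-opt alternatives, and the delicate step is verifying that the contribution of the $\Theta(n^3)$ swap pairs sharing a single cycle edge does not inflate $\Exp[Z^2]$ beyond $(1+o(1))(\Exp[Z])^2$; this is exactly what the marginal bound $n\,\P\{X<\tau_n\}\to\infty$ supplies.
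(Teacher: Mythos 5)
Your proof is correct, and it takes a genuinely different route from the one in the paper. You attack the full $\Theta(n^2)$ family of $2$-opt alternatives with the second-moment method, whereas the paper decouples the dependency by restricting the removed cycle-edge indices to odd values: the cycle-edge indicators then become independent and the set $I$ of odd indices with small $w_{i,i+1}$ is distributed as $\Binom(\lceil n/2\rceil, p)$, and conditionally on $I$ the cross-edge events for distinct pairs in $I\times I$ are i.i.d.\ with probability $\geq q^2$. That decoupling sidesteps all covariance bookkeeping at the cost of throwing away half the cycle edges (which does not affect the $\log n$ constant); your second-moment computation keeps everything but requires the careful enumeration of the two overlap types, which you carry out correctly: $\Theta(n^3)$ pairs sharing one cycle edge contribute $\asymp n^3 p^3 q^4$, $O(n^2)$ pairs sharing one cross edge contribute $\asymp n^2 p^4 q^3$, and the normalization $(\Exp Z)^2 \asymp n^4p^4q^4$ gives $\var(Z)/(\Exp Z)^2 = O\bigl(\tfrac1{np}+\tfrac1{n^2q}+\tfrac1{\Exp Z}\bigr)$, each term vanishing once the hypothesis $npq\to\infty$ is combined with $p,q\le1$ to extract $np\to\infty$ and $nq\to\infty$. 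Your identification of why the edge-disjoint construction only gives half the constant is exactly the right diagnosis and explains why the full quadratic family is necessary.

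One small wrinkle worth fixing: you define $B_{i,j}$ with strict inequalities $w_{e_i}<\tau_n$ and $w_{f^k_{i,j}}>\tau_n$ so as to conclude $w(C^{(i,j)})>w(C^*)$ strictly, but the theorem's statement (and hence your contrapositive hypothesis) is in terms of $\prob{X\le\tau}$ and $\prob{Y\ge\tau}$, and the two can differ if $X$ or $Y$ has an atom at $\tau_n$. You cannot simply perturb $\tau_n$ since that would degrade one of the two tails. The cleanest fix is to use weak inequalities throughout, so $B_{i,j}$ gives $w(C^{(i,j)})\ge w(C^*)$, and then invoke the tie argument the paper uses: on the event $Z>0$ the likelihood has at least two maximizers, so any estimator errs with conditional probability at least $1/2$, whence $\prob{\text{error}}\ge\tfrac12\prob{Z>0}\to\tfrac12$, which still contradicts $\min_{C^*}\prob{\hat{C}=C^*}\to1$.
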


Next, we state a regularity assumption on $P$ and $Q$ under which it immediately follows from \prettyref{thm:converse}
that $\alpha_n \ge (1+o(1)) \log n$ is necessary information-theoretically, thereby establishing the optimality of
F2F LP. 

\begin{assumption}\label{ass:chernoff_inverse}
$$
\sup_{\tau \in \reals} \left\{ \log \prob{X \le \tau} + \log \prob{ Y \ge \tau} \right\} 
\ge - \left( 1+o(1) \right) \alpha_n + o(\log n). 
$$
\end{assumption}

\begin{corollary} \label{corollary:converse}
Suppose \prettyref{ass:chernoff_inverse} holds. If there exists 
a sequence of estimators $\hat{C}$ such that $\min_{C^*} \pprob{ \hat{C} = C^*} \to 1$ as $n \to \infty,$ 
then
\begin{equation}
\alpha_n \geq \left(1+o(1) \right) \log n.
\label{eq:ITlimit_converse}
\end{equation}
\end{corollary}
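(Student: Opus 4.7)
The plan is to derive the corollary as an immediate consequence of \prettyref{thm:converse} combined with \prettyref{ass:chernoff_inverse}. There is essentially no obstacle here, since both ingredients are already in place: \prettyref{thm:converse} provides an \emph{upper} bound on the quantity $\sup_\tau \{\log \prob{X \le \tau} + \log \prob{Y \ge \tau}\}$ (namely $O(1) - \log n$) whenever exact recovery is possible, while \prettyref{ass:chernoff_inverse} provides a matching \emph{lower} bound on the same quantity in terms of the R\'enyi divergence $\alpha_n$. Chaining the two inequalities eliminates the supremum and leaves only $\alpha_n$ and $\log n$.

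Concretely, I would first apply \prettyref{thm:converse} to obtain
\[
\sup_{\tau \in \reals}\bigl\{\log \prob{X \le \tau} + \log \prob{Y \ge \tau}\bigr\} \le O(1) - \log n.
\]
Next, by \prettyref{ass:chernoff_inverse}, the left-hand side is at least $-(1+o(1))\alpha_n + o(\log n)$. Combining these two bounds yields
\[
-(1+o(1))\alpha_n + o(\log n) \le O(1) - \log n,
\]
which rearranges to $(1+o(1))\alpha_n \ge \log n - o(\log n) - O(1) = (1-o(1))\log n$, and hence $\alpha_n \ge (1+o(1))\log n$ as claimed.

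The only step that truly requires thought is the statement of \prettyref{thm:converse} itself (presumably proved earlier via a genie-argument/second-moment type lower bound on the probability that some edge swap yields a Hamiltonian cycle with larger likelihood than the planted one) and the verification of \prettyref{ass:chernoff_inverse} for concrete distributions such as Gaussian, Poisson, and Bernoulli. For the corollary proper, once those two are in hand, the argument is a one-line substitution; so in the write-up I would simply present the chain of inequalities above and then remark that \prettyref{ass:chernoff_inverse} is a standard tightness condition for the Chernoff bound, which can be verified case by case for the distributions of primary interest using the explicit formulas for $\alpha_n$ given in \prettyref{eq:alphaexample}.
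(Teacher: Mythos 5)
Your proposal is correct and matches the paper's (implicit) argument exactly: the paper simply asserts that the corollary "immediately follows" from Theorem~\ref{thm:converse} together with Assumption~\ref{ass:chernoff_inverse}, and the chain of inequalities you write out — upper-bounding $\sup_\tau\{\log\prob{X\le\tau}+\log\prob{Y\ge\tau}\}$ via Theorem~\ref{thm:converse}, lower-bounding it via the assumption, and solving for $\alpha_n$ — is precisely the one-line substitution the authors have in mind.
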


\prettyref{ass:chernoff_inverse} is very general and fulfilled when the weight distributions are either Poisson, Gaussian or Bernoulli as the following result shows (see \prettyref{app:assumption_verify} for a proof):
\begin{lemma}\label{lmm:assumption_verify}
\prettyref{ass:chernoff_inverse}  holds in the Gaussian case with $\calP=\calN(\mu,1)$ and
$\calQ=\calN(0,1)$, the Poisson case with $\calP=\Pois(\lambda)$ and $\calQ=\Pois(\mu)$ for
for $\lambda \ge \mu$ such that 
$$
\log (\lambda \mu) = o  \left( (\sqrt{\lambda} -\sqrt{\mu})^2 \right)  + o(\log n), 
$$
and the Bernoulli case with $\calP=\Bern(p)$ and $\calQ=\Bern(q)$ for $p\ge q$. 
\end{lemma}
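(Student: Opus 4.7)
The plan is to verify \prettyref{ass:chernoff_inverse} case by case by exhibiting for each family an explicit threshold $\tau$ and estimating $\prob{X \le \tau}\cdot\prob{Y \ge \tau}$ sharply enough to match the Bhattacharyya exponent $\alpha_n$ up to polylogarithmic slack. The delicate point throughout is the leading constant $1$ in front of $\alpha_n$: the naive Chernoff upper bound only certifies $\le e^{-\alpha_n}$ for this product, so a matching lower bound requires tail estimates that are tight to lower order. This rules out any generic large-deviation argument and forces distribution-specific calculations: Mills ratio for Gaussian and local Stirling for Poisson; the Bernoulli case is immediate by discreteness.

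\emph{Bernoulli.} Here $L = \log(dP/dQ)$ takes only two values $a = \log(p/q) \ge 0$ and $b = \log((1-p)/(1-q)) \le 0$. Enumerating the distinct regions of $\tau$, one obtains
\[
\sup_\tau \left\{ \log \prob{X \le \tau} + \log \prob{Y \ge \tau} \right\} = \max(\log q,\, \log(1-p)).
\]
Using $\sqrt{p},\sqrt{1-q} \le 1$, one has the elementary bound
\[
e^{-\alpha_n/2} = \sqrt{pq} + \sqrt{(1-p)(1-q)} \le \sqrt{q} + \sqrt{1-p} \le 2\sqrt{\max(q, 1-p)},
\]
so $\max(q, 1-p) \ge e^{-\alpha_n}/4$; the resulting $O(1)$ slack is $o(\log n)$.

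\emph{Gaussian.} Choose $\tau = 0$: under $P$ (resp.\ $Q$), $L \sim \calN(\mu^2/2, \mu^2)$ (resp.\ $\calN(-\mu^2/2, \mu^2)$), so $\prob{X \le 0} = \prob{Y \ge 0} = \Phi(-\mu/2)$. The Mills ratio bound $\Phi(-x) \ge \frac{x}{x^2+1}\cdot\frac{e^{-x^2/2}}{\sqrt{2\pi}}$ yields, for $\mu$ large,
\[
2 \log \Phi(-\mu/2) \ge -\mu^2/4 - 2\log \mu - O(1) = -\alpha_n - \log \alpha_n - O(1),
\]
and $\log \alpha_n = o(\alpha_n) + o(\log n)$ regardless of the scaling of $\alpha_n$. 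For bounded $\mu$ (including $\mu = 0$), both $\alpha_n$ and $2 \log \Phi(-\mu/2)$ are bounded, so the required inequality reduces to an $O(1) = o(\log n)$ statement.

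\emph{Poisson.} Take $k^\ast = \lceil \sqrt{\lambda \mu}\rceil$ and set $\tau = k^\ast \log(\lambda/\mu) - (\lambda-\mu)$. Since $L(k) = k\log(\lambda/\mu) - (\lambda-\mu)$ is monotone in $k$,
\[
\prob{X \le \tau}\cdot \prob{Y \ge \tau} \ge \Prob_P\{K = k^\ast\}\cdot\Prob_Q\{K = k^\ast\} = \frac{e^{-\lambda-\mu}(\lambda\mu)^{k^\ast}}{(k^\ast!)^2}.
\]
Applying Stirling to $(k^\ast!)^2$ and using $k^\ast = \sqrt{\lambda\mu} + O(1)$, the factor $(\lambda\mu/(k^\ast)^2)^{k^\ast}$ is $\Theta(1)$ and $e^{-\lambda-\mu+2k^\ast} = \Theta(1)\,e^{-(\sqrt{\lambda}-\sqrt{\mu})^2}$, producing a lower bound of the form $c\, e^{-\alpha_n}/(\sqrt{\lambda\mu}+1)$ for an absolute constant $c > 0$. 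Taking logarithms gives $-\alpha_n - \tfrac{1}{2}\log(\lambda\mu) - O(1)$, and the hypothesis $\log(\lambda\mu) = o((\sqrt{\lambda}-\sqrt{\mu})^2) + o(\log n)$ is precisely what absorbs this correction into $-(1+o(1))\alpha_n + o(\log n)$. The corner case $\sqrt{\lambda\mu} < 1$ (where $k^\ast \in \{0,1\}$) is handled directly: $e^{-\lambda-\mu}$ differs from $e^{-\alpha_n}$ only by the bounded factor $e^{-2\sqrt{\lambda\mu}} \in [e^{-2}, 1]$.

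The main obstacle is exactly this passage from the Chernoff upper bound to a matching lower bound with only polylogarithmic loss. It is trivial for Bernoulli because $L$ has only two atoms, but for Gaussian and Poisson one must invoke the sharp Mills ratio and Stirling, respectively; the assumption on $\log(\lambda\mu)$ in the Poisson case is designed precisely to control the polynomial prefactor that Stirling introduces.
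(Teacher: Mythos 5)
Your proposal is correct and follows essentially the same strategy as the paper's proof: choosing the critical threshold $\tau^* = \psi_P'(-1/2)$ in each case, then lower-bounding the two tails by the standard Gaussian tail (Mills ratio), by a single Poisson point mass plus Stirling, and by direct enumeration of atoms for Bernoulli. The only cosmetic differences are your use of $k^\ast = \lceil\sqrt{\lambda\mu}\rceil$ for both Poisson tails (the paper uses $\lfloor\cdot\rfloor$ and $\lceil\cdot\rceil$ for the two sides separately) and a slightly different rearrangement in the Bernoulli bound; neither changes the argument.
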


Let us now explain the intuition behind \prettyref{ass:chernoff_inverse}:
Denote the log moment generating function of $X$ and $Y$ as
\begin{equation}
\psi_P(\theta)=\log \eexpect{e^{\theta X}} , \quad \psi_Q(\theta)=\log \eexpect{e^{\theta Y} } = \psi_P(\theta-1).
\label{eq:logMGF}
\end{equation}
Denote the Legendre transform of $\psi_P$ and $\psi_Q$ as
 \begin{equation}
E_P(\tau) = \sup_{\theta \ge 0} \left\{  - \theta \tau - \psi_P( -\theta) \right \}, \quad
E_Q(\tau) = \sup_{\theta \ge 0} \left\{   \theta \tau - \psi_Q(\theta) \right \}.
\label{eq:legendre}
\end{equation}
Then Chernoff's inequality gives the following large deviation bounds: for any $\tau\in\reals$,
\begin{align}
 \prob{X \leq \tau} \leq \exp \left( - E_P(\tau) \right), \quad \prob{Y \geq \tau} \leq \exp \left( - E_Q(\tau) \right), \label{eq:chernoff_XY} 
\end{align}
Therefore,
$$
\sup_{\tau \in \reals} \left\{ \log \prob{X \le \tau} + \log \prob{ Y \ge \tau}  \right\}  \le -  \inf_{\tau \in \reals} \left\{ E_P(\tau) + E_Q(\tau)  \right\}.
$$
The infimum on the right-hand side is in fact equal to $\alpha_n$. Indeed,
\begin{align*}
\inf_{\tau\in\reals}E_P(\tau)+E_Q(\tau)
= & ~  \inf_{\tau\in\reals} \sup_{\theta_1,\theta_2\geq 0} \sth{- \theta_1 \tau - \psi_P(-\theta_1)  + \theta_2 \tau - \psi_Q(\theta_2) } \\
\geq & ~  \sup_{\theta_1,\theta_2\geq 0} \sth{\inf_{\tau\in\reals} (\theta_2- \theta_1) \tau - \psi_P(-\theta_1) - \psi_Q(\theta_2)}  \\
= & ~  \sup_{\theta \geq 0} \sth{- \psi_P(-\theta) - \psi_Q(\theta)} \\
=& ~  - \psi_P(-1/2) - \psi_Q(1/2) = -2 \log   \int \sqrt{\diff P \diff Q} =  \alpha_n,
\end{align*}
and the infimum over $\tau$ is in fact achieved by
$$
 \tau^*=\psi_P'\left(-1/2\right) =\psi'_Q\left(1/2 \right), 
$$ 
so that $ E_P\left(\tau^* \right) + E_Q\left(\tau^* \right)  = \alpha_n$. 
Hence, 
$$
\sup_{\tau \in \reals} \left\{ \log \prob{X \le \tau} + \log \prob{ Y \ge \tau} \right\}  \le -  E_P\left(\tau^* \right) - E_Q\left(\tau^* \right)  = - \alpha_n. 
$$
Therefore, the point of \prettyref{ass:chernoff_inverse} 
is to require that the large deviation exponents in Chernoff's inequalities \prettyref{eq:chernoff_XY} are asymptotically tight,
so that we can reverse the Chernoff bound in the lower bound proof.

\begin{proof}[Proof of \prettyref{thm:converse}]
To lower bound the worst-case probability of error, consider the Bayesian
setting where $C^*$ is drawn uniformly at random from all possible Hamiltonian cycles of 
$G.$ Since the prior distribution of $C^*$ 
is uniform, the ML estimator minimizes the error probability among all estimators.
Thus, without loss of generality, we can
assume the estimator $\hat{C}$ used is $\hat{C}_{\rm ML}$ and 
the true Hamiltonian cycle $C^*$ is
given by $(1,2,\ldots, n,1)$. Hence, by assumption $\pprob{\hat{C}_{\rm ML} = C^*} \to 1.$

Recall that the ML estimator is equivalent to finding a Hamiltonian cycle 
of the maximum weight. Given a Hamiltonian cycle $x$, 
define the simple graph $G_x$ with bicolored edge whose
  adjacency matrix is $|x-x^*|$ and each edge
  is colored in red if $(x-x^*)_e=-1$ and in blue if $(x-x^*)_e=+1$. 
  Also,  each edge $e$ has
  a weight $w_e(x-x^*)_e$ and hence $w(G_x)=\iprod{w}{x-x^*}.$
Note that if $G_x$ is a $4$-cycle of alternating colors given by $(i,i+1,j+1,j,i)$, then
 $x$ corresponds to a Hamiltonian cycle constructed 
 by deleting edges $(i,i+1),(j,j+1)$ in $C^*$ and adding edges $(i,j),(i+1,j+1)$ (see \prettyref{fig:c4} for an illustration).
Let $\calD$ denote the set of all possible $4$-cycles of alternating colors
given by $(i,i+1,j+1,j)$. Then $|\calD|=n(n-3)/2$, because for a given $i$, $j$ have
$(n-3)$ choices except $i-1,i,i+1.$

\begin{figure}[ht]
	\centering
	\begin{tikzpicture}[scale=1.5,every edge/.append style = {thick,line cap=round},font=\scriptsize]
\coordinate (i) at (112.5:1);
\coordinate (i1) at ({112.5-45}:1);
\coordinate (j) at (-112.5:1);
\coordinate (j1) at (-{112.5+45}:1);
	\foreach \a in {4,5,...,6}
	{
	\node[dot] at ({112.5+\a*360/8}:1) {};
	\path ({112.5+\a*360/8}:1) edge[red,postaction={on each segment={mid arrow=blue}}] ({112.5+(\a+1)*360/8}:1);
	}
	\foreach \a in {1,2,...,3}
	{
	\node[dot] at ({112.5+\a*360/8}:1) {};
	\path ({112.5+\a*360/8}:1) edge[red,postaction={on each segment={mid arrow=blue}}] ({112.5+(\a-1)*360/8}:1);
	}
\draw[red,dashed,thick] (i)--(i1);
\draw[red,dashed,thick] (j)--(j1);
\path node at (i) [above] {$i$} edge[blue,postaction={on each segment={early arrow=blue}}] (j1);
\path node at (i1) [above] {$i+1$} edge[blue,postaction={on each segment={early arrow=blue}}] (j);
\node at (j) [below] {$j+1$};
\node at (j1) [below] {$j$};
\end{tikzpicture}
~~~~~~~~ 
\raisebox{2em}{
	\begin{tikzpicture}[scale=1.5,every edge/.append style = {thick,line cap=round},font=\scriptsize]
\coordinate (i) at (0,1);
\coordinate (i1) at (1,1);
\coordinate (j) at (0,0);
\coordinate (j1) at (1,0);
\redsingleedge{(i)}{(i1)}
\redsingleedge{(j)}{(j1)}
\bluesingleedge{(i)}{(j)}
\bluesingleedge{(i1)}{(j1)}
\node at (i) [above] {$i$};
\node at (i1) [above] {$i+1$};
\node at (j) [below] {$j$};
\node at (j1) [below] {$j+1$};
\end{tikzpicture}
}
	\caption{The cycle $(1,2,\ldots,i,j,j-1,\ldots,i+1,j+1,j+2,\ldots,n)$ and the corresponding graph $G_x$ as a four-cycle.}
	\label{fig:c4}
\end{figure}
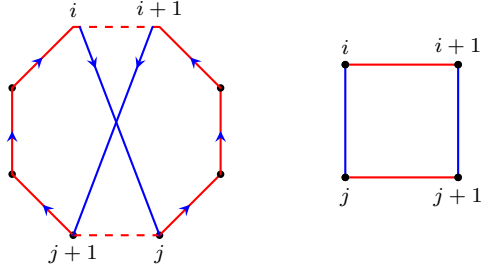

Define 
$$
S=\sum_{D \in \calD } \indc{w(D) \le 0}.
$$
If $S>0$, then there exists a Hamiltonian cycle $x \neq x^*$ whose weight
is at least as large as the weight of $C^*$; hence the likelihood function has
at least two maximizers, which in turn implies the probability of exact recovery by
ML estimator is at most $1/2.$
Therefore, $ \frac{1}{2} \prob{S>0} \le \prob{\text{ML fails} } =o(1)$. As
a consequence, $\prob{S=0} \to 1.$

To explain the intuition, suppose $w(D)$ are mutually independent for all $D \in \calD.$
Then
\begin{align*}
\prob{S=0} &= \prob{\forall D \in \calD, w(D) > 0} \\
&= \prod_{D \in \calD}  \prob{w(D)<0} \\
& \overset{(a)}{=} \left( 1 -  \prob{ Y_1 +Y_2 - X_1 - X_2 \ge 0} \right)^{|\calD|} \\
& \le \exp \left( - |\calD| \prob{ Y_1 +Y_2 - X_1 - X_2 \ge 0}  \right)
\end{align*}
where $(a)$ holds  because $w(D)$ has the same distribution 
as $Y_1+Y_2-X_1-X_2 $ and the last inequality holds in view of $1-x \le e^{-x}$.
In view of $\prob{S=0} \to 1,$ it follows from the last displayed equation that
$$
\log |\calD| + \log \prob{ Y_1 +Y_2 - X_1 - X_2 \ge 0} \to - \infty. 
$$
Furthermore,  for any $\tau \in \reals$, we have
\begin{align*}
\log \prob{ Y_1 +Y_2 - X_1 - X_2  \ge 0} 
& \ge \log \left( \prob{Y_1 \ge \tau } \prob{Y_2 \ge \tau}  \prob{X_1 \le \tau}
\prob{X_2 \le \tau } \right) \\
& =2 \log \prob{Y \ge \tau} + 2 \log \prob{X \le \tau}. 
\end{align*}
Combining the last two displayed equation and recalling 
that $|\calD|=n(n-3)/2$, we immediately get that 
$$
\log \prob{Y \ge \tau} + \log \prob{X \le \tau} + \log n  \to -\infty,
$$
Taking the supremum over $\tau \in \reals$ of the last displayed equation 
yields the desired \prettyref{eq:ITlimit_converse_full}.


However, $w(D)$ and $w(D')$ are dependent if $D$ and $D'$ share edges. To deal with this 
dependency, we focus on a subset of $\calD$.  
In particular, for any $\tau \in \reals$,  define 
$$
I=\left\{\text{odd } i: w_{i,i+1} \leq \tau \right\}
$$
and 
$$
J= \left\{ (i,j) \in I \times I: i \neq j, w_{i,j} + w_{i+1,j+1} \geq 2\tau  \right\}.
$$
Then for any $(i,j) \in J$, the alternating $4$-cycle given by $(i,i+1,j+1,j,i)$ belongs to 
$\calD$ and has a non-positive weight. Hence, $|J| \le S $ and thus 
$\prob{|J| =0} \ge \prob{S=0} \to 1.$

Note that $w_{i,i+1}$ has the same distribution as $X$. Thus for any $\tau \in \reals$, 
$$
\prob{ w_{i,i+1 } \leq \tau} = \prob{X \leq \tau} \triangleq p.
$$
Also, $w_{i,i+1}$ are mutually independent for different $i.$ 
Thus $| I | \sim \Binom ( \lceil n/2 \rceil, p) $. By the Chernoff's
bound for binomial distribution, 
$$
\prob{ | I | \le np/4 }  \le \exp \left( - np /8 \right).
$$
Thus, 
$$
\prob{|J| =0} \le \prob{   |J|=0, |I| > np/4 } + \prob{|I| \le np/4 }
\le \prob{   |J|=0 \mid |I| > np/4 }  + \exp \left( - np /8 \right).
$$
Let
$
q \triangleq \prob{Y \geq \tau}. 
$
Then $\prob{Y_1+Y_2 \geq 2\tau} \ge \prob{Y_1 \geq \tau} \prob{Y_2 \geq \tau} =q^2$
and hence
\begin{align*}
\prob{  |J|=0 \mid |I| > np/4  }  
& =\prob{ \forall i< j \in I, w_{(i,j)} + w_{(i+1,j+1)} < 2\tau  \mid |I| > np/4 } \\
& \overset{(a)}{\le} (1-q^2)^{\binom{np/4}{2}} \le e^{  -q^2 \binom{np/4}{2} },
\end{align*}
where $(a)$ holds because conditional on $I$, $w_{(i,j)} + w_{(i+1,j+1)}$ 
are i.i.d.~copies of $Y_1+Y_2.$
Combining the last three displayed equations yields that 
\begin{align}
\prob{|J| =0 } \le e^{  - q^2 \binom{np/4}{2} } +  e^{ - np /8 }. 
\end{align}
Recall that $\prob{|J|=0} \to 1$. It follows that
\begin{align}
e^{  - q^2 \binom{np/4}{2} } +  e^{ - np /8 } \ge 1+o(1). \label{eq:converse_key_equation}
\end{align}
Hence, 
$
\log n +  \log p +  \log q \le  O(1), 
$
or equivalently,
$$
\log \prob{X \le \tau} + \log \prob{Y \ge \tau} + \log n  \le O(1).
$$
Taking the supremum over $\tau \in \reals$ of the last displayed equation 
yields the desired \prettyref{eq:ITlimit_converse_full}. 


\end{proof}

\section{Reduction between Hamiltonian Cycle and Hamiltonian Path}
\label{sec:cycle_path}

In this section, we extend our results to the hidden
Hamiltonian path model. As opposed to \prettyref{def:model}, $G$ is a randomly weighted, 
undirected complete graph  with a  hidden Hamiltonian path $H^*$ 
such that every edge has an independent weight drawn from $\calP$ 
if it is on $H^*$ and from $\calQ$ otherwise.

The following theorem shows that the information-theoretic limits 
of hidden Hamiltonian cycle and path models coincide. 

\begin{theorem}\label{thm:optimal_threshold_path}
Consider the ML estimator $\hat{H}_{\rm ML} \in \arg \min_{H \in \calH} w(H)$, 
where $\calH$ denotes the set of all possible Hamiltonian paths in graph $G$ 
and $w(H)$ denote the total weights on the Hamiltonian path $H$. If  
$$
\alpha_n - \log n \to + \infty, 
$$
\ie, \prettyref{eq:ITlimit} holds, 
then $\min_{H^* \in \calH} \pprob{ \hat{H}_{\rm ML} = H^*} \to 1$ as $n\to \infty$.

Conversely, suppose \prettyref{ass:chernoff_inverse} holds. If there exists 
a sequence of estimators $\hat{H}$ such that $\min_{H^* \in \calH} \pprob{ \hat{H} = H^*} \to 1$ as $n \to \infty,$ 
then $$\alpha_n  \ge \left(1+o(1) \right) \log n,$$ \ie, \prettyref{eq:ITlimit_converse} holds. 
\end{theorem}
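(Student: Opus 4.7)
The plan is to establish both directions by adapting the arguments for the hidden Hamiltonian cycle (\prettyref{thm:converse} and the warm-up in \prettyref{sec:warmup}) to the path setting. The only substantive novelty is that the difference graph of two Hamiltonian paths can have a few unbalanced vertices corresponding to endpoint discrepancies.

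For the converse, I mirror the proof of \prettyref{thm:converse}. Taking $H^* = (1, 2, \ldots, n)$ without loss of generality, consider the family of alternative Hamiltonian paths obtained by 2-opt swaps: for each $1 \le i \le n-3$ and $i+2 \le j \le n-1$, deleting $(i, i+1), (j, j+1)$ and inserting $(i, j), (i+1, j+1)$ yields the path $H_{i,j} = (1, \ldots, i, j, j-1, \ldots, i+1, j+1, \ldots, n)$ whose symmetric difference with $H^*$ is precisely the alternating $4$-cycle shown in \prettyref{fig:c4}. There are $\binom{n-2}{2} = \Theta(n^2)$ such swaps, matching the cycle case. The Paley--Zygmund-type argument of \prettyref{thm:converse}---defining $I = \{\text{odd } i : w_{i, i+1} \le \tau\}$, controlling $|I|$ via Chernoff, and bounding the number of pairs $(i, j) \in I \times I$ with $w_{i, j} + w_{i+1, j+1} \ge 2\tau$---carries over verbatim to yield \prettyref{eq:ITlimit_converse_full}. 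Under \prettyref{ass:chernoff_inverse}, this gives $\alpha_n \ge (1 + o(1))\log n$.

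For the achievability, I plan to show $\iprod{w}{x_H - x_{H^*}} < 0$ for every Hamiltonian path $H \neq H^*$ with high probability via a direct union bound on the bicolored difference graph $G_y$, where $y = x_H - x_{H^*}$ and edges of $H \setminus H^*$ (resp.\ $H^* \setminus H$) are colored blue (resp.\ red). The key structural observations are: (i) $G_y$ has an equal number $\ell/2$ of blue and red edges, since $|E(H)| = |E(H^*)| = n-1$; and (ii) each vertex has signed (blue minus red) degree in $\{-1, 0, +1\}$, with $0$, $2$, or $4$ unbalanced vertices depending on how many endpoints $H$ and $H^*$ share. Decomposing $G_y$ into connected components, each is either balanced (admitting an alternating Eulerian circuit by \prettyref{lmm:eulerian}) or has exactly $2$ unbalanced vertices of opposite signed degrees (admitting an alternating Eulerian trail between them). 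Chernoff's bound at $\theta = 1/2$, using the identity $\psi_Q(1/2) = \psi_P(-1/2) = -\alpha_n/2$, yields $\prob{w(G_y) \ge 0} \le \exp(-\alpha_n \ell / 2)$ regardless of any per-component blue/red asymmetry. An enumeration via Eulerian circuit/trail traversal---each red step constrains the next vertex to an $H^*$-neighbor ($2$ choices), each blue step is free ($\le n$ choices), and any unbalanced trail must start at an endpoint of $H^*$ ($2$ global choices)---bounds the number of distinct valid $G_y$ with $\ell$ edges by $O(n^{\ell/2})$. Summing over $\ell \ge 2$, the failure probability is $O\bigl(\sum_{k \ge 1} (n/e^{\alpha_n})^k\bigr) \to 0$ under $\alpha_n - \log n \to \infty$.

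The main technical subtlety is that a naive per-connected-component union bound would be dominated by small unbalanced ``fragment'' components---e.g., a lone red edge of $H^*$ between two internal vertices (locally $(k_+, k_-) = (0, 2)$), of which there are $\Theta(n)$ instances---forcing the suboptimal threshold $\alpha_n \ge 2 \log n$. The resolution is to enumerate $G_y$ \emph{as a whole} and exploit the global edge-count balance $|H \setminus H^*| = |H^* \setminus H|$, which forces every such fragment to be paired with a counterbalancing component, keeping the $\ell = 2$ count at $O(n)$ rather than $\Theta(n^2)$ and thus recovering the sharp $\alpha_n - \log n \to \infty$ threshold. A careful implementation of this whole-$G_y$ enumeration, together with verifying that the Chernoff rate $\alpha_n \ell/2$ persists for alternating Eulerian trails (which follows from the key identity $\psi_Q(1/2) = \psi_P(-1/2)$), constitutes the bulk of the technical work.
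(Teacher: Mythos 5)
Your two halves fare differently.

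\textbf{Converse.} Mirroring \prettyref{thm:converse} via 2-opt swaps is a valid route, but the paper uses a simpler device specific to paths: for $i\in\{2,\ldots,n-1\}$, delete $(i,i+1)$ and add $(1,i+1)$, reattaching the right segment to the endpoint $1$. The resulting events $\calE_i=\{w_{1,i+1}\ge w_{i,i+1}\}$ are \emph{mutually independent} (the four edges involved for distinct $i$ are all distinct), so none of the dependency bookkeeping of \prettyref{thm:converse} is needed, and the bound $\log n+\log\prob{X\le Y}=o(1)$ drops out from a one-line product. Your route should still yield \prettyref{eq:ITlimit_converse_full}, but ``carries over verbatim'' glosses over whether the $I,J$ dependency argument still partitions cleanly once the ambient structure is a path rather than a cycle; the paper simply sidesteps the question.

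\textbf{Achievability.} Here your proposal is a genuinely different and substantially harder route than the paper's, and it is left with real gaps. The paper does not decompose any bicolored difference graph at all. It observes that both $H$ and $H^*$ have exactly $n-1$ edges, hence $|H\setminus H^*|=|H^*\setminus H|=\ell$, so $w(H)-w(H^*)\eqdistr\sum_{i=1}^\ell Y_i-\sum_{i=1}^\ell X_i$ is automatically balanced and Chernoff gives $e^{-\ell\alpha_n}$ (not $e^{-\ell\alpha_n/2}$). The number of Hamiltonian paths $H$ with $|H^*\setminus H|=\ell$ is bounded by $\binom{n-1}{\ell}2^{\ell+1}(\ell+1)!\le n^\ell 2^{\ell+1}(\ell+1)$ (choose the deleted $H^*$-edges, then orient and permute the $\le\ell+1$ surviving segments), and the union bound over $\ell\ge1$ converges under $\alpha_n-\log n\to\infty$. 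No Eulerian trails, no alternating circuits, no component decomposition.

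Your bicolored-$G_y$ sketch has three concrete unresolved issues. (a) You assert each component is balanced or has exactly two unbalanced vertices, but when $H$ and $H^*$ share no endpoints, all four unbalanced vertices can lie in a single component, which then admits no single alternating Eulerian trail; you would need a two-trail decomposition or a virtual-edge patch, neither of which you address. (b) Your claimed $O(n^{\ell/2})$ enumeration of entire multi-component $G_y$'s is not derived: each circuit or trail in the decomposition generically costs an extra factor of $n$ for its starting vertex, and the number of components is not bounded independently of $\ell$, so the naive traversal count is not $O(n^{\ell/2})$; one needs an additional argument (or a tighter enumeration that ties the component starts to the traversal, akin to \prettyref{lmm:enum2} in the F2F proof) to recover the claimed bound. (c) The ``pair each small unbalanced fragment with a counterbalancing component'' mechanism you invoke to rescue the $\alpha_n\ge 2\log n$ loss is asserted but never carried out, and it is exactly the kind of global constraint that the paper's direct path-count bypasses entirely by working with $\ell=|H^*\setminus H|$ from the outset. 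In short: your plan resembles the machinery built for the cycle LP, but the Hamiltonian-path MLE admits a much more elementary proof, which is what the paper uses.
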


\begin{proof}
We first prove the sufficiency part. 
We abbreviate the ML estimator $\hat{H}_{\rm ML}$ as $\hat{H}$.
Let $L=| H^* \setminus \hat{H} |$.  Since $|H^*|=|\hat{H}|=n-1$, it follows that $ 0 \le L \le n-1$
and $|H^\ast \Delta \hat{H} | = 2L$. To prove the theorem, it suffices to show that 
$\prob{L \ge 1} =o(1)$. 
For any $ 1 \le \ell \le n-1$, 
$$
\{ L=\ell \}  \subset \{ \exists \text{ Hamiltonian path $H$}:  | H^* \setminus H| = \ell, w(H^*) < w(H)  \}.
$$
Note that for each fixed $H \in \calH_n$ with $ | H^* \setminus H| = \ell$, 
$$
w(H) -w(H^*) \eqdistr \sum_{i=1}^{\ell}  Y_i - \sum_{i=1}^{\ell} X_i,
$$ 
where $\{X_i\}$ and $\{Y_i\}$ are independent and iid copies of the log likelihood ratio $\log (\diff P/\diff Q)$ under $P$ and $Q$, respectively. 
Also, there are at most $\binom{n-1}{\ell} 2^{\ell+1}  (\ell+1)! $ different choices of 
Hamiltonian paths $H$ with $ | H^* \setminus H| = \ell$. To see this, note that there
are at most $\binom{n-1}{\ell}$ different choices of edges in 
$H^* \setminus H$. For each such choice, 
there are at most $\ell +1$ disjoint paths in $H^* \cap H$;
thus there are at most $2^{\ell+1}$ different orientations of these paths,
and at most $(\ell+1)!$ different permutations of these paths. 

Applying union bound gives that
\begin{align*}
\prob{L=\ell} &\le \binom{n-1}{\ell} 2^{\ell+1}  (\ell+1)!  \; \times \; \prob{  \sum_{i=1}^{\ell}  Y_i - \sum_{i=1}^{\ell} X_i < 0}  \\
& \le  n^{\ell} 2^{\ell+1} (\ell+1) e^{ - \ell \alpha } \le  e^{ -\omega(\ell) } ,
\end{align*}
where the second inequality follows due to  $\binom{n-1}{\ell} (\ell+1)! \le n^{\ell} (\ell+1)$ and \prettyref{lmm:LD}; 
the last inequality follows from the assumption \prettyref{eq:ITlimit}.  Therefore, 
$$
\prob{ L \ge 1} =\sum_{\ell=1}^{n-1} \prob{L=\ell} \le \sum_{\ell=1}^{n-1} e^{-\omega(\ell) } = e^{-\omega(1)},
$$
completing the proof for sufficiency part.

Next we show the necessity part. To lower bound the worst-case probability of error,
consider the Bayesian setting where $H^*$ is uniformly chosen among all possible Hamiltonian
paths of $G.$ Since the ML estimator minimizes the error probability among all estimators, 
without loss of generality, we can
assume the estimator $\hat{H}$ used is $\hat{H}_{\rm ML}$ and 
the true Hamiltonian path $H^*$ is
given by $(1,2,\ldots, n)$. Hence, by assumption $\prob{\hat{H}_{\rm ML} = H^*} \to 1.$
Note that under the hidden Hamiltonian path model, the ML estimator is equivalent to 
finding the maximum  weighted Hamiltonian path.    
For $i \in \{2, \ldots, n-1\}$, define the event that 
$$
\calE_i =\{ w_{1, i+1} \ge w_{i, i+1} \}.
$$
Let $\calF=\cup_{i=2}^{n -1} \calE_i$, which implies the existence of vertex $i$ such that
the total weight of the Hamiltonian path $H^* \setminus \{ (i,i+1) \} \cup \{ (1, i+1)  \}$ is 
at least as large as the total weight of $H^*$. Hence, on $\calF,$ the likelihood function has
at least two maximizers, which in turn implies the probability of exact recovery by
ML estimator is at most $1/2.$ Hence $\frac{1}{2} \prob{\calF} \leq \prob{\text{ML fails}}=o(1)$.
As a consequence, $\prob{\calF^c} \to 1.$ Note that
\begin{align*}
\prob{\calF^c} & \overset{(a)}{=} \prod_{i=2}^{n -1} \prob{ \calE_i^c } \\
& \overset{(b)}= \left(1 - \prob{ X \le Y  } \right)^{ n -2 } \\
&  \overset{(c)}{\le}   \exp \left( - (n-2) \prob{ X \le Y }  \right)
\end{align*}
where $(a)$ holds due the independence among $\calE_i$;
$(b)$ holds because $\prob{\calE_i } =  \prob{ X < Y }$;
$(c)$ follows due to $1+x \le e^x$ for all $x \in \reals$. 
It follows that
$$
\log n + \log \prob{X \le Y} = o(1). 
$$
Note that 
\begin{align*}
\log \prob{X \le Y} & \ge \sup_{\tau \in \reals} \left\{ \log \prob{X \le \tau}  + \log \prob{Y\ge\tau} \right\} \\
& \ge -\left(1+o(1) \right) \alpha + o(\log n),
\end{align*}
where the last inequality is exactly \prettyref{ass:chernoff_inverse}. Combining the last  two displayed
equations yields that $\alpha  \ge (1+o(1)) \log n$. 
\end{proof}


Next, we show that given a polynomial-time estimator which achieves exact recovery
in one model, one can construct another polynomial-time estimator which achieves 
exact recovery for the other model whenever it is information-theoretically possible. 

\prettyref{alg:cycle_to_path} gives a procedure to construct a polynomial-time estimator of 
Hamiltonian path given a polynomial-time estimator of Hamiltonian cycle.

\begin{algorithm}
\caption{From Hamiltonian cycle estimator to Hamiltonian path estimator}\label{alg:cycle_to_path}
\begin{algorithmic}[1]
\STATE {\bfseries Input:} Graph $G$, an estimator $\hat{C}(G)$ of Hamiltonian cycle in $G.$
\STATE {\bfseries Output:} An estimator $\hat{H}(G)$ of Hamiltonian path in $G.$
\STATE Generate a random variable $W \sim \calP.$ 
\STATE For every edge $e=(i,j)$ in $G$, 
construct a weighted graph $G^{e}$ by replacing the 
edge weight on $e$ in $G$ by $W$. If $\hat{C} \left( G^{e} \right)$ 
is a Hamiltonian cycle containing $e$, let 
$\hat{H}^{e}$ denote the Hamiltonian path by deleting edge $e$ from $\hat{C} \left( G^{e} \right) $;
otherwise, let $\hat{H}^{e}=\emptyset$.
\STATE  Output
$$
\hat{H} \in \arg \min_{e: \hat{H}^{e}\neq \emptyset} w\left( \hat{H}^{e} \right),
$$
where $w\left( \hat{H}^{e}\right)$ denotes the sum of log likelihood weights on edges in $\hat{H}^{e}$.
\end{algorithmic}
\end{algorithm}

Similarly, \prettyref{alg:path_to_cycle} gives a procedure to construct a polynomial-time estimator of 
Hamiltonian cycle given a polynomial-time estimator of Hamiltonian path.

\begin{algorithm}
\caption{From Hamiltonian path estimator to Hamiltonian cycle estimator}\label{alg:path_to_cycle}
\begin{algorithmic}[1]
\STATE {\bfseries Input:} Graph $G$, an estimator $\hat{H}(G)$ of Hamiltonian path in $G.$
\STATE {\bfseries Output:} An estimator $\hat{C}(G)$ of Hamiltonian cycle in $G.$
\STATE Generate a random variable $W \sim \calQ.$ 
\STATE For every edge $e=(i,j)$ in $G$, 
construct a weighted graph $G^{e}$ by replacing the 
edge weight on $e$ in $G$ by $W$. If $\hat{H} \left( G^{e} \right)$ 
is a Hamiltonian path connecting vertices $i,j$, let 
$\hat{C}^{e}$ denote the Hamiltonian cycle by adding edge $e$ in $\hat{H} \left( G^{e} \right)$;
otherwise, let $\hat{C}^{e}=\emptyset$.
\STATE  Output
$$
\hat{C} \in \arg \min_{e: \hat{C}^{e}\neq \emptyset} w\left( \hat{C}^{e} \right),
$$
where $w\left( \hat{C}^{e}\right)$ denotes the sum of log likelihood weights on edges in $\hat{C}^{e}$.
\end{algorithmic}
\end{algorithm}



\begin{theorem}\label{thm:path_cycle_conversion}
Given an estimator $\hat{C}(G)$ with time complexity $O(T_n)$ such that $\pprob{\hat{C} = C^*} \to 1$
as $n \to \infty $ under the hidden Hamiltonian cycle model. Let  $\hat{H}(G)$ denote the output of
\prettyref{alg:cycle_to_path} with input $\hat{C}(G).$ Then $\hat{H}(G)$ has time complexity $O(n^2 T_n)$.
Moreover, under the hidden Hamiltonian path model if $\pprob{\hat{H}_{\rm ML} = H^*} \to 1$ as $n \to \infty$, 
then $\pprob{\hat{H} = H^*} \to 1$.  

Conversely, given an estimator $\hat{H}(G)$ with time complexity $O(T_n)$ such that $\pprob{\hat{H} = H^*} \to 1$
as $n \to \infty $ under the hidden Hamiltonian path model. Let $\hat{C}(G)$ 
denote the output of
\prettyref{alg:path_to_cycle} with input $\hat{H}(G).$ Then $\hat{C}(G)$ has time complexity $O(n^2 T_n)$.
Moreover,  under the hidden Hamiltonian cycle model if $\pprob{\hat{C}_{\rm ML} = C^*} \to 1$ as $n \to \infty$, 
then  $\pprob{\hat{C} = C^*} \to 1$.  
\end{theorem}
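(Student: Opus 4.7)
The plan is to exploit a single distributional coupling that underlies both directions of \prettyref{thm:path_cycle_conversion}. I will describe the cycle-to-path direction in detail; the reverse direction is entirely symmetric.

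For the cycle-to-path direction, I would fix any hidden Hamiltonian path $H^* \in \calH$ with endpoints $v_1$ and $v_n$, and let $e^* \triangleq (v_1,v_n)$ be the unique edge closing $H^*$ into a Hamiltonian cycle $C^* \triangleq H^* \cup \{e^*\}$. The first step is to verify the following distributional identity: under the path model with planted path $H^*$, the random graph $G^{e^*}$ constructed in \prettyref{alg:cycle_to_path} has exactly the law of the cycle model with planted cycle $C^*$. Indeed, the edges of $H^*$ already carry independent $\calP$-distributed weights; step 3 draws $W\sim\calP$ independently of $G$ and step 4 replaces the weight of $e^*$ (originally $\calQ$-distributed) by this $W$; and every edge outside $C^*$ keeps its original independent $\calQ$ weight. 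Hence $G^{e^*}$ is a fresh sample from the cycle model with hidden cycle $C^*$.

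Applying the hypothesis on $\hat C$ to this coupled sample yields $\prob{\hat C(G^{e^*})=C^*}=1-o(1)$, and on this event $e^*$ belongs to $\hat C(G^{e^*})$, so the candidate $\hat H^{e^*}=C^*\setminus\{e^*\}=H^*$ is produced by the algorithm. To conclude that the algorithm actually \emph{selects} $H^*$, I would invoke the path-MLE hypothesis: with probability $1-o(1)$, $H^*$ is the unique Hamiltonian path optimizing $w$ in $G$. Because every nonempty $\hat H^e$ is itself a Hamiltonian path in $G$, on the intersection of these two high-probability events the extremum over $e$ of $w(\hat H^e)$ is attained uniquely at $e=e^*$, forcing $\hat H=H^*$. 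A union bound then gives
$$
\prob{\hat H \ne H^*} \le \prob{\hat C(G^{e^*}) \ne C^*} + \prob{\hat H_{\rm ML} \ne H^*} = o(1)
$$
uniformly in $H^*$, and the runtime bound $O(n^2 T_n)$ is immediate since each of the $\binom{n}{2}=O(n^2)$ edges triggers a single call to $\hat C$.

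The path-to-cycle direction follows by the mirror coupling: given any hidden cycle $C^*$ and any edge $e=(i,j)\in C^*$, set $H^* \triangleq C^*\setminus\{e\}$, a Hamiltonian path with endpoints $i$ and $j$. Replacing the weight on $e$ by an independent $W\sim\calQ$ makes $G^e$ distributed exactly as the path model with planted $H^*$, so $\hat H(G^e)=H^*$ with high probability and the candidate $\hat C^e = H^*\cup\{e\}=C^*$ is produced. The same two-step argument---coupled success of $\hat H$ on $G^e$ plus uniqueness of the cycle MLE---combined with a union bound closes the proof. The only subtlety I anticipate is the independence of the auxiliary weight $W$ from $G$ that is required to make the coupling exact; this is guaranteed by step 3 of both algorithms, after which the remainder is routine bookkeeping.
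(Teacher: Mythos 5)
Your proposal is correct and follows essentially the same route as the paper's own proof: couple $G^{e^*}$ to the hidden cycle model so that $\hat C(G^{e^*})$ succeeds, observe that every nonempty $\hat H^e$ is a Hamiltonian path in $G$, and use the assumed uniqueness of the path MLE to conclude that the selected output must be $H^*$, with a union bound over the two failure events. The only (minor) difference is that you spell out explicitly why $G^{e^*}$ has exactly the law of the cycle model, whereas the paper simply asserts it; one small imprecision is your claim that the optimizing $e$ is \emph{unique}---several $e$ might yield $\hat H^e=H^*$, but since any optimizer then equals $H^*$, the conclusion is unaffected.
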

\begin{proof}
By construction in \prettyref{alg:cycle_to_path}, the time complexity of $\hat{H}(G)$ is $O(n^2T_n).$
We show that $\pprob{\hat{H} =H^*} \to 1$. 
Without loss of generality, assume that $H^*=(1,2,\ldots, n).$
Then according to \prettyref{alg:cycle_to_path}, $G^{e}$ with $e=(1,n)$
is a realization of the hidden Hamiltonian cycle model in \prettyref{def:model}
with ground truth $C^*=(1,2,\ldots, n, 1).$
By assumption, $\pprob{ \hat{C} \left(G^{e} \right) = C^* } \to 1.$
Hence, $\pprob{\hat{H}^{e} = H^*} \to 1.$ Define two events
\begin{align*}
E_1 =\left\{ \hat{H}^{e} = H^* \right\}  , \quad E_2  =\left\{ \hat{H}_{\rm ML} = H^*  \right\}.
\end{align*}
By assumption $\pprob{E_2} \to 1.$ Hence, 
$\pprob{E_1 \cap E_2} \to 1.$
On the event $E_2$, $H^*$ is the unique Hamiltonian path with the maximal weight.
Thus, on the event $E_1 \cap E_2,$
$$
\hat{H}^{e} = H^* = \arg \min_{H \in \calH} w(H),
$$
where $\calH$ denotes the set of all possible Hamiltonian paths. 
Since $\hat{H}^{e} \in \calH$ for all $\hat{H}_e \neq \emptyset$, it follows that on the event $E_1 \cap E_2,$
$
\hat{H} = \hat{H}^{e} = H^*.
$
Thus, $\pprob{\hat{H} = H^*} \to 1$.

Conversely, by construction in \prettyref{alg:path_to_cycle}, the time complexity of $\hat{C}(G)$ is $O(n^2T_n).$
The proof for $\pprob{\hat{C} =C^*} \to 1$ is completely analogous as before and omitted. 
\end{proof}

As an immediate corollary of \prettyref{thm:path_cycle_conversion} and \prettyref{thm:LP_opt}, 
we show the hidden Hamiltonian path can be exactly recovered with high probability in polynomial-time,
provided that $\alpha_n-\log n\to +\infty.$
\begin{corollary} 
If the optimal solution $x^*$ of the F2F LP \prettyref{eq:F2F} is an adjacency vector 
of a Hamiltonian cycle in $G$, let $\hat{C}(G)$ denote this Hamiltonian cycle; otherwise,
let $\hat{C}(G)$ denote any arbitrary Hamiltonian cycle in $G.$
Let $\hat{H}(G)$ denote the output of
\prettyref{alg:cycle_to_path} with input $\hat{C}(G).$ 
If $\alpha_n-\log n\to +\infty$, \ie, \prettyref{eq:ITlimit} holds, 
then $\pprob{\hat{H} = H^*} \to 1$ as $n \to \infty.$
\end{corollary}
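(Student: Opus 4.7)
The plan is to derive this corollary directly by combining \prettyref{thm:LP_opt}, \prettyref{thm:optimal_threshold_path}, and \prettyref{thm:path_cycle_conversion}. The core insight is that the F2F LP yields a polynomial-time Hamiltonian cycle estimator $\hat{C}$ that succeeds with high probability under $\alpha_n - \log n \to +\infty$, which is precisely the hypothesis required to invoke \prettyref{alg:cycle_to_path} and its correctness guarantee.

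First, I would verify that $\hat{C}(G)$, as defined in the corollary, meets the hypothesis of \prettyref{thm:path_cycle_conversion}. Specifically, when $G$ is a sample from the hidden Hamiltonian \emph{cycle} model (as opposed to the path model considered in the corollary), the F2F LP outputs an integral solution equal to the adjacency vector of $C^*$ with probability tending to $1$, by \prettyref{thm:LP_opt}, provided $\alpha_n - \log n \to +\infty$. On this event, $\hat{C}(G) = C^*$; hence $\hat{C}$ is a valid polynomial-time Hamiltonian cycle estimator that satisfies $\pprob{\hat{C} = C^*} \to 1$ under the hidden Hamiltonian cycle model. Note that the (possibly arbitrary) fallback choice of $\hat{C}(G)$ when F2F LP is fractional does not matter asymptotically, since it occurs on an event of vanishing probability.

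Next, to apply \prettyref{thm:path_cycle_conversion} in the direction ``cycle estimator $\Rightarrow$ path estimator,'' I need the additional premise that the ML Hamiltonian path estimator $\hat{H}_{\rm ML}$ succeeds with high probability under the hidden Hamiltonian path model. This is exactly the sufficiency part of \prettyref{thm:optimal_threshold_path}, which guarantees that $\min_{H^* \in \calH} \pprob{\hat{H}_{\rm ML} = H^*} \to 1$ under the same condition $\alpha_n - \log n \to +\infty$. Both premises of \prettyref{thm:path_cycle_conversion} are therefore in place.

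Finally, I would feed $\hat{C}$ into \prettyref{alg:cycle_to_path} to obtain the estimator $\hat{H}(G)$ described in the corollary. The key mechanism underlying \prettyref{alg:cycle_to_path}, already justified in the proof of \prettyref{thm:path_cycle_conversion}, is that when $H^* = (v_1, \ldots, v_n)$ and $e$ is the (unobserved) edge $(v_1, v_n)$, the modified graph $G^{e}$ obtained by overwriting the weight on $e$ with a fresh sample from $\calP$ is distributed exactly as a sample from the hidden Hamiltonian cycle model with cycle $(v_1, \ldots, v_n, v_1)$. Thus $\hat{C}(G^e)$ recovers this cycle with high probability, $\hat{H}^{e} = H^*$, and the final selection step of \prettyref{alg:cycle_to_path}, together with success of $\hat{H}_{\rm ML}$, forces $\hat{H}(G) = H^*$ on the intersection of these high-probability events. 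There is no serious obstacle here: the argument is a modular composition of established results, and the only point worth checking with care is that the fallback definition of $\hat{C}(G)$ on F2F-fractional instances is irrelevant because it contributes only a $o(1)$ probability of failure.
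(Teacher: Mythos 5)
Your proposal is correct and follows the same route as the paper: the corollary is an immediate composition of \prettyref{thm:LP_opt} (to show $\hat{C}$ is a successful cycle estimator), the sufficiency part of \prettyref{thm:optimal_threshold_path} (to verify the ML path estimator condition), and \prettyref{thm:path_cycle_conversion}. You have in fact been slightly more explicit than the paper by spelling out the role of \prettyref{thm:optimal_threshold_path}, which the paper leaves implicit in its one-line attribution.
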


\section{Empirical Results}
\label{sec:emp_results}
We empirically evaluate the performance of simple thresholding, greedy merging, belief propagation (BP) and F2F (LP) on both simulated and real datasets from DNA assembly.\footnote{Since nearest-neighbor algorithm is very similar to simple thresholding
and has the same performance guarantee as the
simple thresholding, we do not include nearest-neighbor in the empirical study.} Of the four, the first three algorithms output integer-valued solutions. 
However, the optimal solution of F2F may contain half-integral entries and we randomly round them to either $0$ or $1$ with equal probability.
We run BP for $10000$ iterations for all the experiments in this section. All the plots in this section can be reproduced from the following repository - \url{https://github.com/bagavi/HiC_LP}.

\subsection{Simulation}
\label{subsec:sim}
In this subsection, we consider a completely simulated setting of hidden Hamiltonian cycle graph consisting of $n=1000$ vertices,  whose edge weights are independently drawn from $\calP=\calN(\mu,1)$ if the edge on the Hamiltonian cycle $C^*$, and  $\calQ=\calN(0,1)$ otherwise. 

\prettyref{fig:simulations_a} plots the error probability $\prob{\hat{C} \neq C^*}$ 
of the output $\hat{C}$ for each of the four algorithms. As predicted by our theory
(cf. \prettyref{eq:alphaexample} in the Gaussian case), the error probabilities of F2F and BP exhibit a phase transition around $\mu^2=4\log n$. Moreover, the error probabilities of greedy merging and simple thresholding drop off to $f0$ around $\mu^2=6\log n$ 
and $\mu^2=8\log n,$ respectively, which is consistent with our theoretical performance analysis (cf.\ Table \prettyref{table:various_alg}).


\begin{figure}[h]
\centering
\includegraphics[width=0.9\columnwidth]{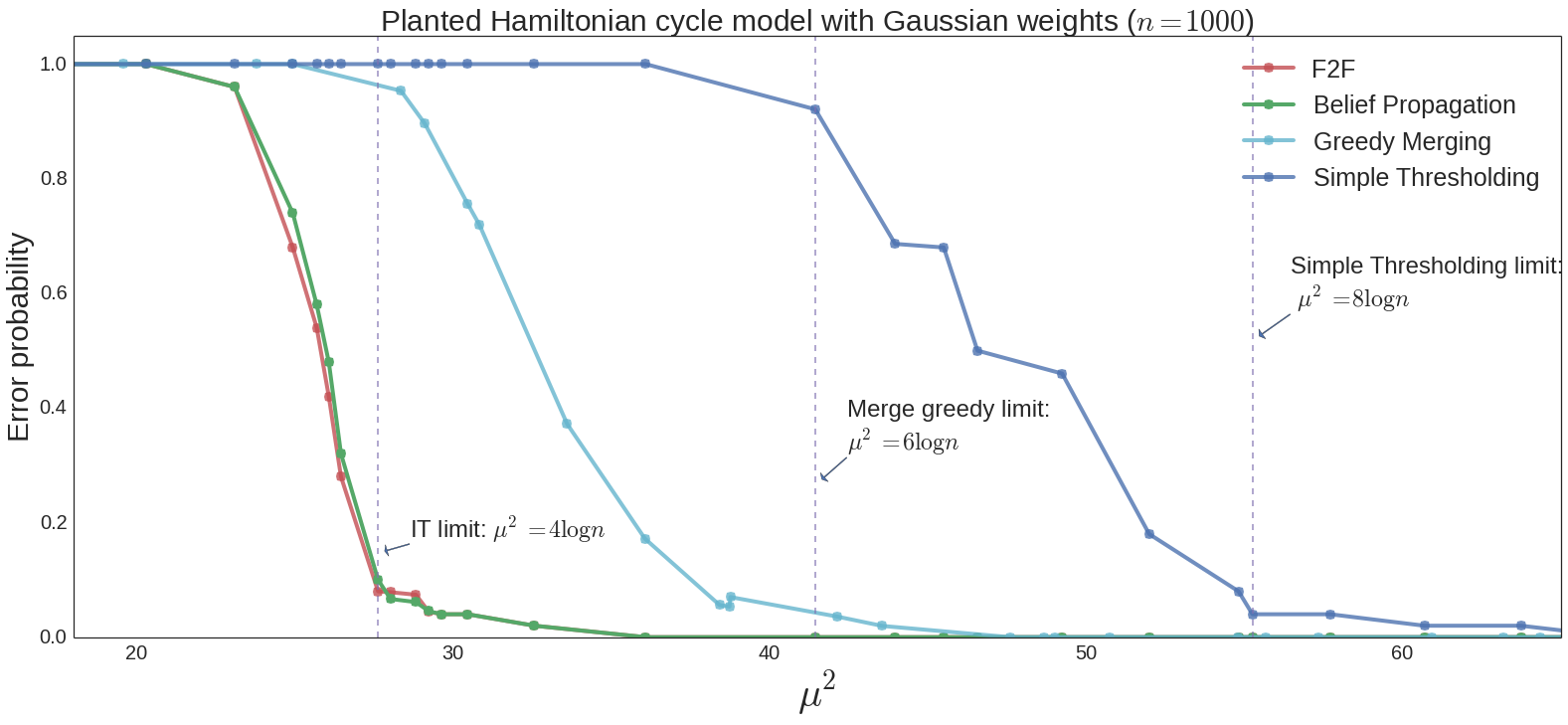}
\caption{Four algorithms are evaluated under simulated setting of hidden Hamiltonian cycle with $n=1000$ vertices, with  $\calP=\calN(\mu,1)$ and  $\calQ=\calN(0, 1)$,
for $\mu^2$ between $20$ and $65$. The y-axis is the probability of error, \ie, the probability that the hidden Hamiltonian cycle $C^*$ is not exactly recovered. For each value of $\mu^2$, the error probability is calibrated using $50$ independent runs.
 BP algorithm is run for $10000$ iterations. 
 }
\label{fig:simulations_a}
\end{figure}

\prettyref{fig:simulations_b} plots the fraction of misclassified edges $|\hat{C} \Delta C^*|/n$, of the output $\hat{C}$ for each of the four algorithms,
where $\Delta$ denotes the set difference. 
In other words, the fraction of misclassified edges is equal 
to the number of edges in $C^*$  that are absent in the $\hat{C}$ plus the number of edges in the $\hat{C}$ that are absent in $C^*$,
normalized by $n$.  From both Figures \ref{fig:simulations_a} and \ref{fig:simulations_b}, we observe that 
when the optimal solution of F2F is integral, the solution of BP coincides with the F2F solution, as predicted by
the result in~\cite{bayati2011belief}. However, when the optimal solution of F2F has half-integral entries, the solution of
BP often does not coincide with the F2F solution, and in fact has larger errors.  
The same phenomenon is also observed with real datasets as shown later.
\begin{figure}[ht]
\centering
\includegraphics[width=0.9\columnwidth]{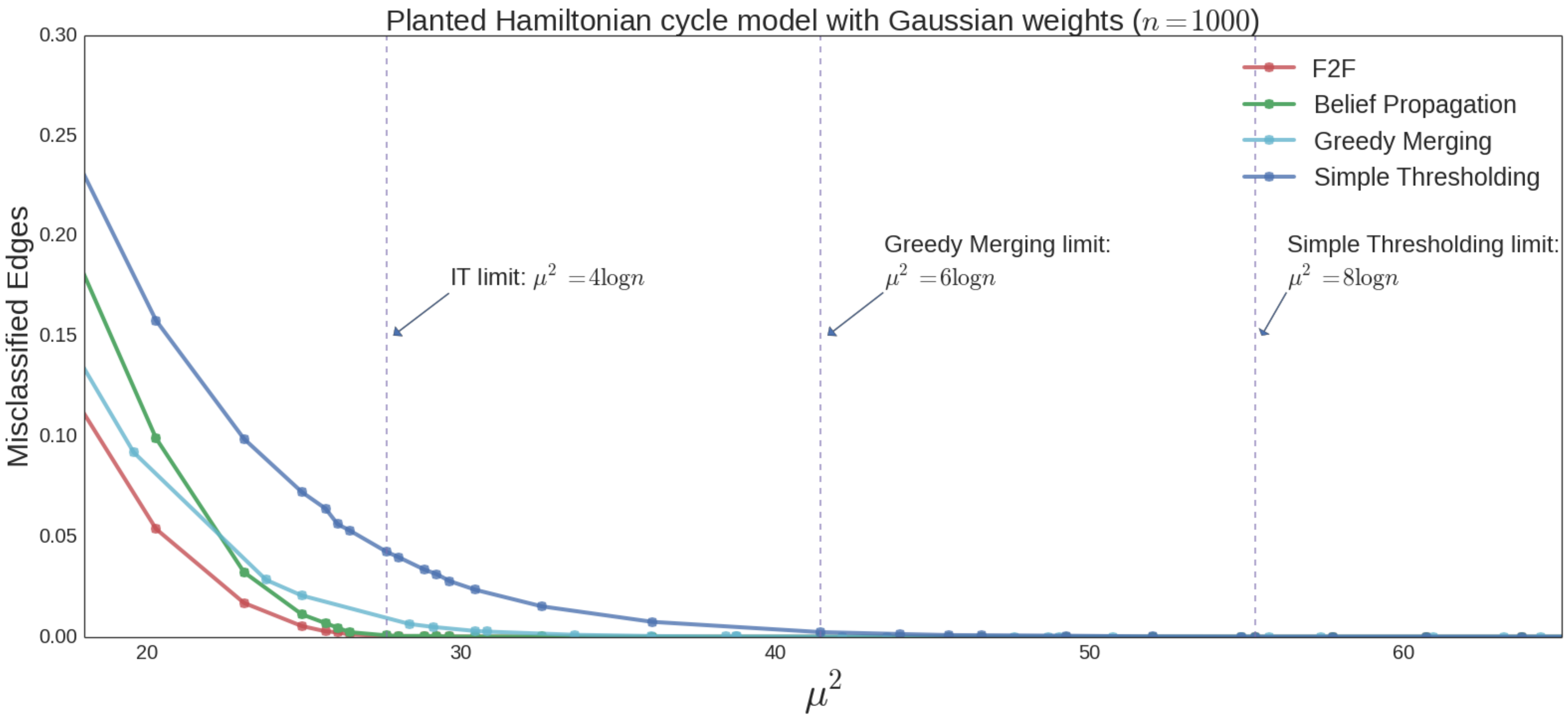}
\caption{This figure uses the same experiments in \prettyref{fig:simulations_a} to plot a different measure of error -- fraction of misclassified edges $|\hat{C} \Delta C^*|/n$. 
For each value of $\mu^2$, the error is averaged  over $50$ independent trials. BP algorithm is run for $10000$ iterations.
}
\label{fig:simulations_b}
\end{figure}

\subsection{DNA Assembly}
As described in Section \ref{sec:introduction}, ordering contigs using  HiC reads (or Chicago reads, a variant of HiC reads)~\cite{putnam2016chromosome, dudchenko2017novo} 
can be modeled as recovering hidden Hamiltonian cycle where the contigs are vertices of the graph, the hidden Hamiltonian cycle is the true ordering of the contigs on the genome, and the weights on the graph are the counts of the Hi-C reads (or Chicago) linking the contigs. 
In this section, we empirically compare the performance
of simple thresholding, greedy merging, BP and F2F 
on datasets from the following three organisms:
\begin{enumerate}
\item Homosapiens - Sequenced from a human in Iowa, USA. \cite{putnam2016chromosome}.
\item Aedes Aegypti \cite{dudchenko2017novo} - A breed of mosquito which spreads dengue, zika fevers, and found throughout the world.
\item Gasterosteus Aculeatus ~\cite{peichel2016improvement} - A variety of fish which is extensively studied to understand evolution and behavioral ecology.
\end{enumerate}
We choose these three datasets because the DNAs of these organisms have been sequenced and
assembled using other (expensive) sequencing technologies, and that gives us
the ground truth-- the true contig ordering. 
The information on data sources and the detailed procedure is provided in Appendix \ref{appendix:data}.
We evaluate the performance of all the four algorithms under two scenarios -- \textit{equal}-length contigs and \textit{variable}-length contigs. 

\subsubsection{DNA Assembly: Equal-length contigs}
For real datasets, shotgun reads are assembled to produce unordered contigs of variable lengths. However, in this subsection we consider a controlled experimental setting. We partition a chromosome into contigs of equal length  $100$ kbps and randomly shuffle these contigs. 
We choose this length because the span of HiC reads is typically between $0$ and $100$ kbps. As a consequence, 
neighboring pairs of contigs tend to have many HiC reads between them, while non-neighboring pairs
tend to have relatively fewer HiC reads, as postulated under our hidden Hamiltonian cycle model. 
Hence, we expect our algorithms -- developed for recovering the hidden Hamiltonian cycle -- to perform well
for finding the true ordering of contigs under this setting.


\paragraph*{Graph Construction} Firstly, the chromosome is partitioned into \textit{equal}-length contigs
of length $100$ kbps, and only the first $1000$ contigs are kept and shuffled randomly.
Secondly, the HiC reads are aligned to these $1000$ contigs. Lastly, we construct graph $G=([n],E)$,
where every vertex $i$ corresponds to a contig and
the weight on every edge $e=(i,j) \in E$  is
the number of HiC reads between the two contigs corresponding
to vertices $i,j$ . For example, the weighted adjacency matrix of $G$ for chromosome 1 of Homosapiens is
shown in \prettyref{fig:contact}.


\paragraph*{Performance evaluation}

The performance of all four algorithms is shown
in Figures \ref{fig:homosapiens}, \ref{fig:gasterosteus}, \ref{fig:aedes} for data from Homosapiens,  Gasterous Aculeatus, and Aedes Agyepti, respectively. We observe that across all chromosomes and organisms,
for a fixed coverage depth of HiC reads, F2F consistently has fewer errors
than the two greedy algorithms -- simple thresholding and greedy merging.
Moreover, BP also has fewer errors than the two greedy algorithms in most of the experiments. 
However, as mentioned previously, when the optimal solution of F2F has fractional entries, 
the BP solution no longer matches the optimal solution of F2F and has more errors.

\begin{figure}
\centering
\includegraphics[width=0.9\columnwidth]{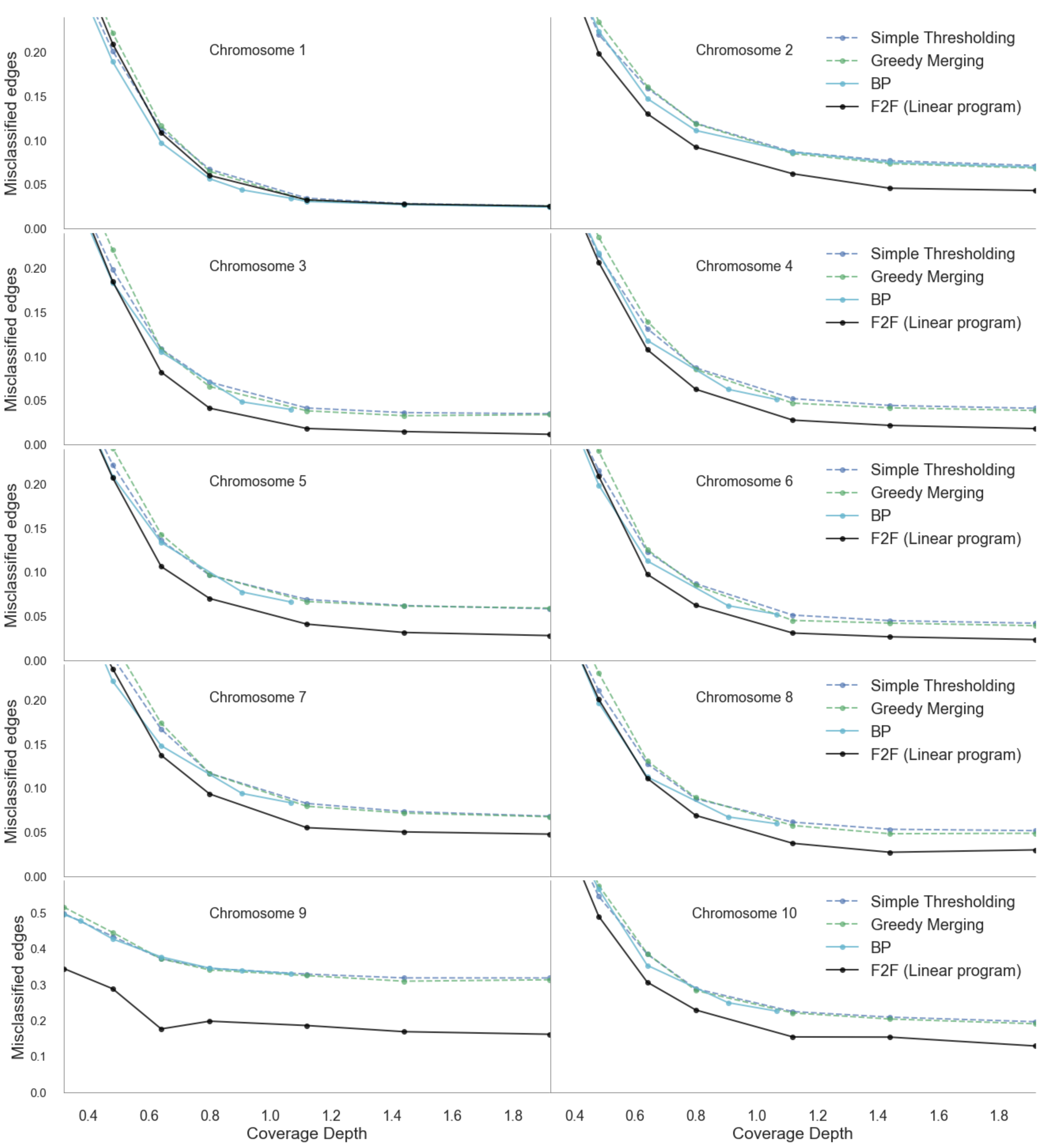}
\caption{Homosapiens [Equal length contigs]: The $x$-axis is the coverage depth of Hi-C reads i.e, expected number of Hi-C reads that include a base pair (nucleotide).
The $y$-axis is the fraction of misclassified edges.
For each value of coverage depth, the corresponding misclassification error is averaged over $25$ independent runs of experiment.
\label{fig:homosapiens}}
\end{figure}

\begin{figure}
\centering
\includegraphics[width=0.9\columnwidth]{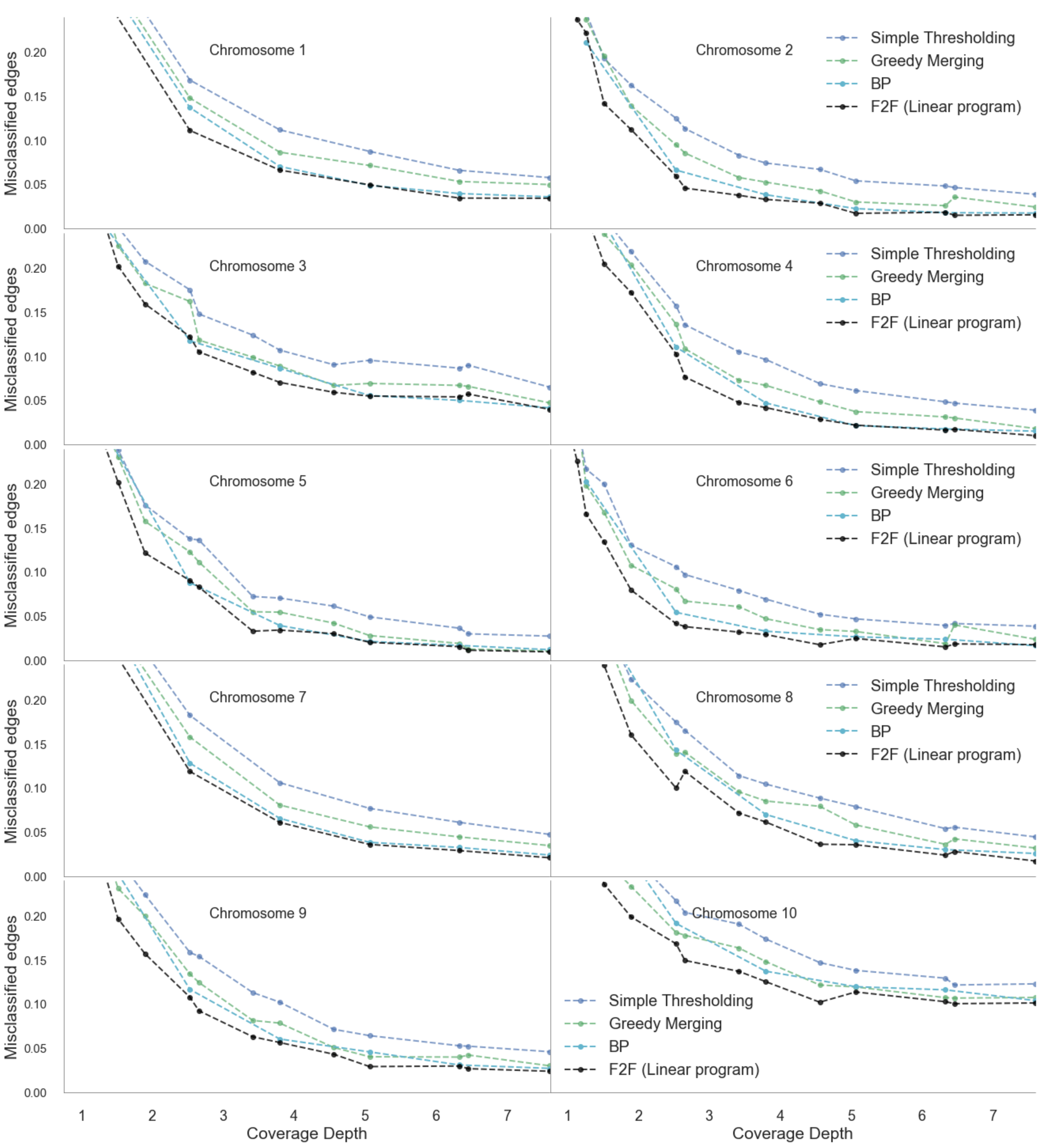}
\caption{Gasterosteus Aculeatus [Equal length contigs]: The $x$-axis is the coverage depth of Hi-C reads and the $y$-axis is the fraction of misclassified edges.
For each value of coverage depth, the corresponding misclassification error is averaged over $25$ independent runs of experiment.
\label{fig:gasterosteus}
}
\end{figure}

\begin{figure}
\centering
\includegraphics[width=0.9\columnwidth]{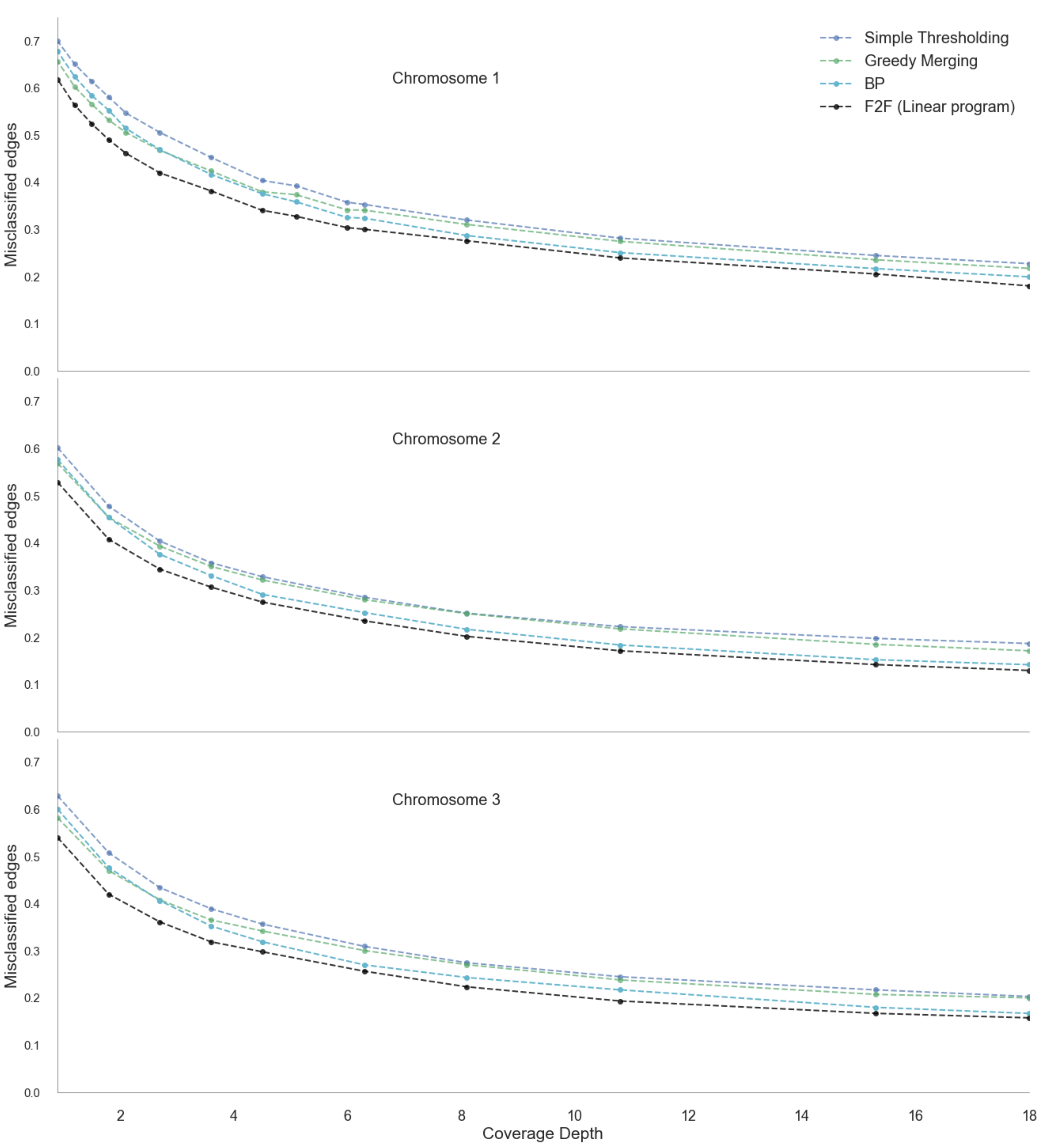}
\caption{Aedes Agyepti [Equal length contigs]: The $x$-axis is the coverage depth of Hi-C reads and the $y$-axis is the fraction of misclassified edges.
For each value of coverage depth, the corresponding misclassification error is averaged over $25$ independent runs of experiment.
\label{fig:aedes}
}
\end{figure} 

\subsubsection{DNA Assembly: Variable-length contigs}
We turn to variable-length contigs generated by shotgun assembly softwares~\cite{chapman2011meraculous, butler2008allpaths}. 
In this setting, HiC reads bias towards longer contigs. In other words, there tends to be more HiC reads linking
two long contigs, which is not captured by our hidden Hamiltonian cycle model. 
Hence, our algorithms developed for recovering the hidden Hamiltonian cycle may not perform well
for finding the true ordering of contigs. Nevertheless, we find that F2F and BP still consistently have few errors than the greedy methods, which are an integral part of the state-of-the art HiC assemblers~\cite{putnam2016chromosome, dudchenko2017novo, burton2013chromosome, ghurye2017scaffolding}. Thus we expect these assemblers can be possibly improved by replacing 
their greedy algorithm component by F2F or BP algorithm.

\paragraph*{Graph Construction}
Firstly, the unordered contigs are obtained as output of short-gun reads assemblers 
such as Meraculous ~\cite{chapman2011meraculous} and AllPaths~\cite{butler2008allpaths}. 
These obtained contigs have variable lengths ranging from $10$ kbps to $10$ mbps (million base pairs). 
Secondly, the HiC reads are aligned to these contigs to obtain the raw number of HiC reads between each pair of contigs. 
Lastly, we construct graph $G =([n], E)$, where every vertex $i$ correspond to a contig. 
The weight on every edge $e=(i,j) \in E$ is equal to $w_{ij} = b_i b_j N_{i,j}$
with $N_{i,j}$ denoting the \textit{number} of HiC reads between the two contigs corresponding to vertices $i,j$.
The values of $b_i$'s are chosen such that $\sum_{i} w_{i,j} = 1$ for all $j$, in order to remove the bias due to
the variable contig length. These values of $b_i$'s are calculated using an iterative algorithm, \href{https://github.com/hiclib/iced}{ICE}, described in `Online methods' section from ~\cite{imakaev2012iterative}.


\paragraph*{Performance evaluation}
The performance of all four algorithms is shown
in Figures \ref{fig:homosapiens_var}, \ref{fig:gasterosteus_var}, \ref{fig:aedes_var} for
the data from Homosapiens,  Gasterous Aculeatus, and Aedes Agyepti, respectively.  
Similar to the results in the setting with equal-length contigs, across all chromosomes and organisms,
for a fixed coverage depth of HiC reads, F2F consistently has fewer errors
than the two greedy algorithms. Moreover, BP also has fewer errors than greedy methods in most of the experiments. 
When the optimal solution of F2F has fractional entries, the BP solution does not match the optimal solution of F2F 
and has more errors.

\begin{figure}
\centering
\includegraphics[width=0.9\columnwidth]{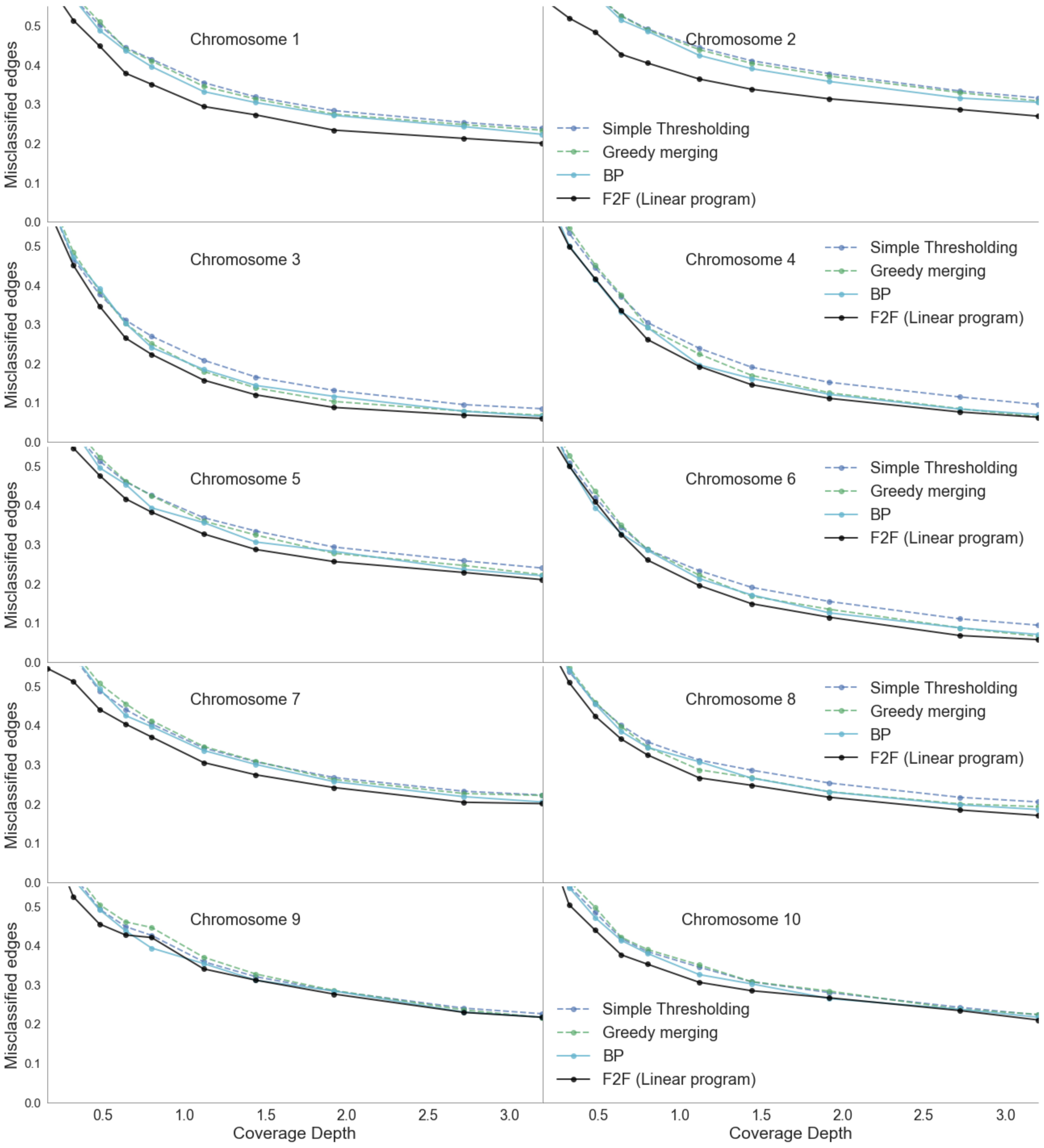}
\caption{Homosapiens [Variable length contigs]: The $x$-axis is the coverage depth of Hi-C reads i.e, expected number of HiC-reads that include a base pair (nucleotide). The $y$-axis is the fraction of misclassified edges.
For each value of coverage depth, the corresponding misclassification error is averaged over $25$ independent runs of experiment.
\label{fig:homosapiens_var}
}
\end{figure}

\begin{figure}
\centering
\includegraphics[width=0.9\columnwidth]{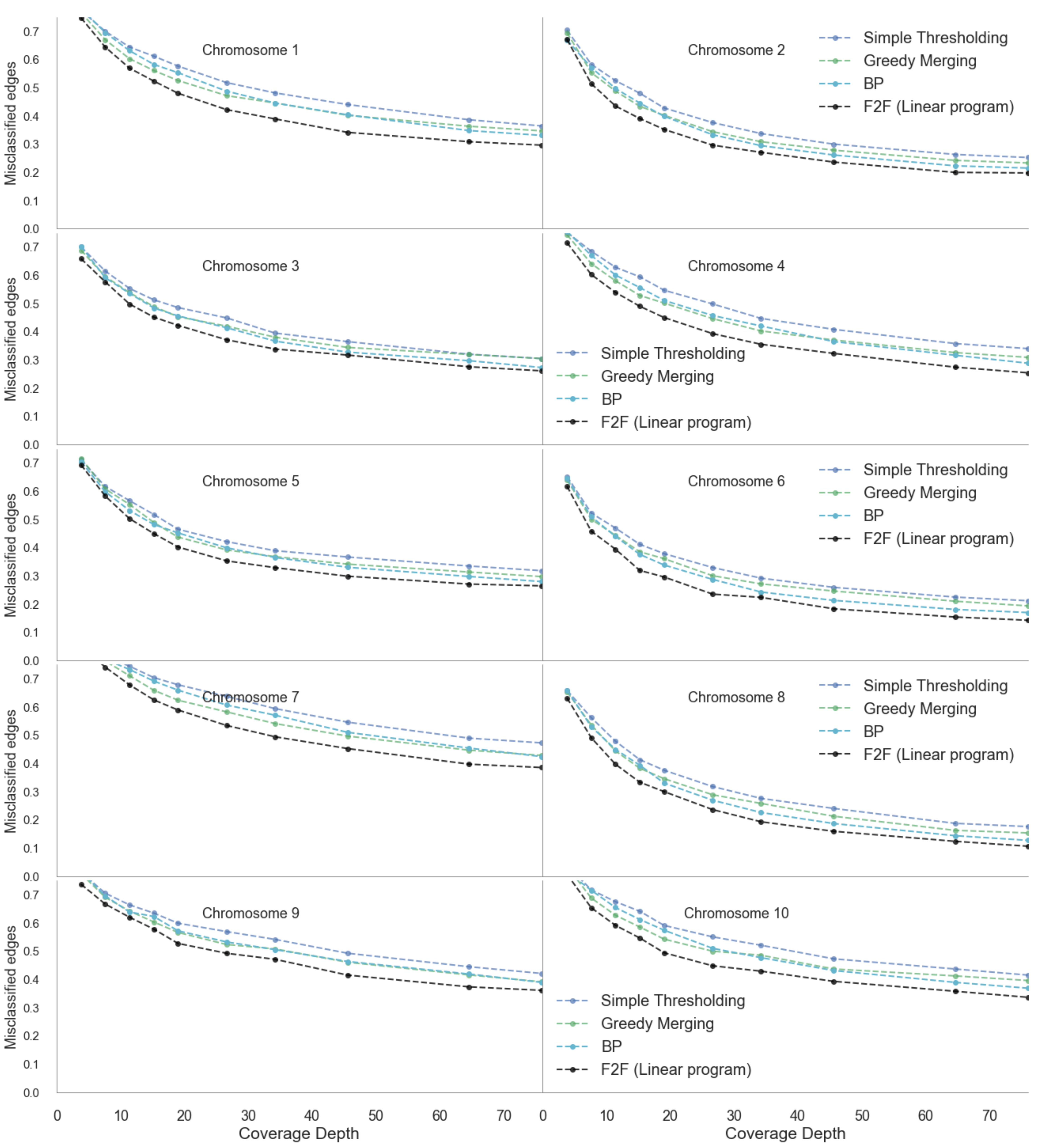}
\caption{Gasterosteus Aculeatus  [Variable length contigs]: The $x$-axis is the coverage depth of Hi-C reads and the $y$-axis is the fraction of misclassified edges.
For each value of coverage depth, the corresponding misclassification error is averaged over $25$ independent runs of experiment.
\label{fig:gasterosteus_var}
}
\end{figure}

\begin{figure}
\centering
\includegraphics[width=0.9\columnwidth]{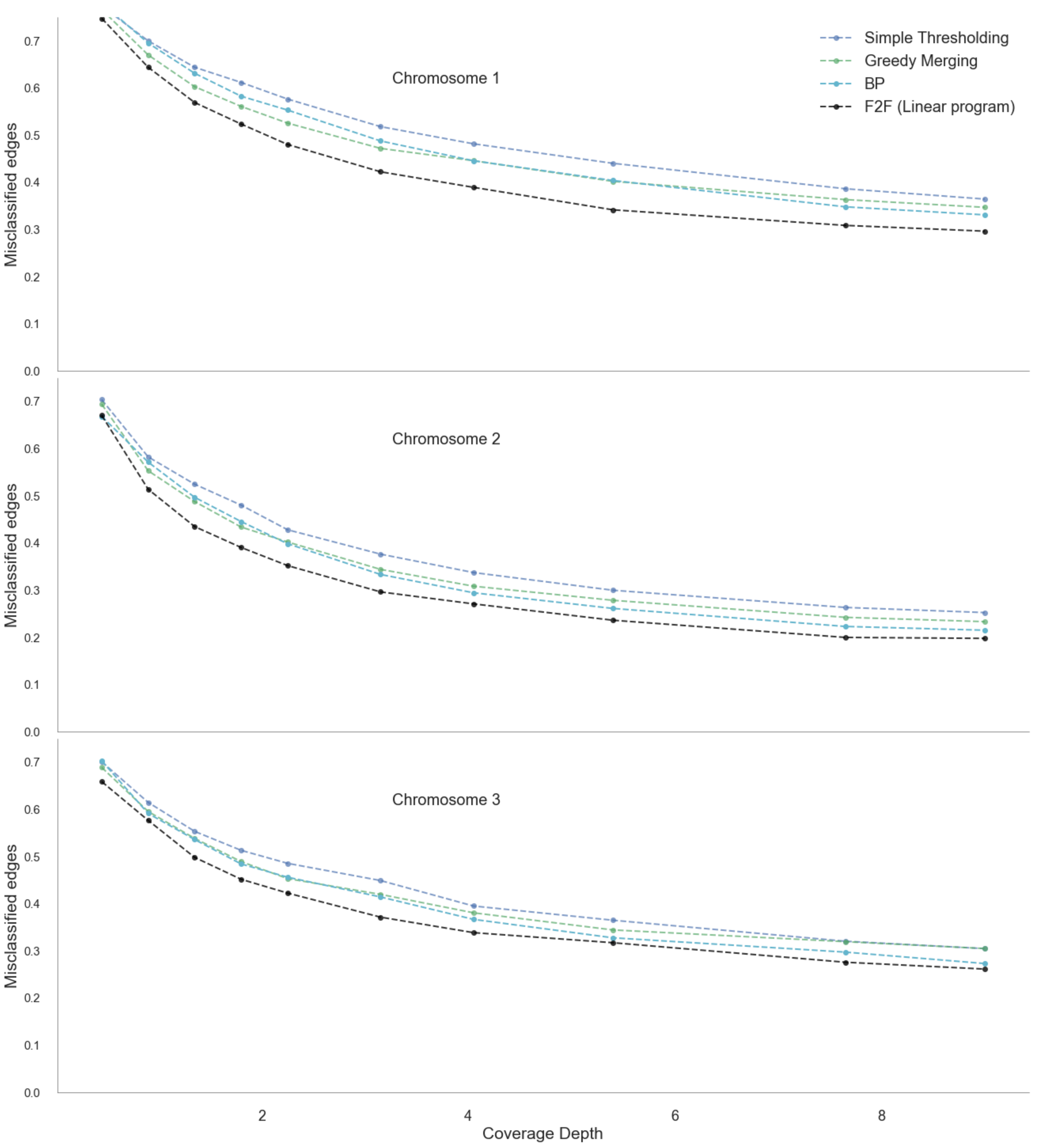}
\caption{Aedes Agyepti  [Variable length contigs]: The $x$-axis is the coverage depth of Hi-C reads and the $y$-axis is the fraction of misclassified edges.
For each value of coverage depth, the corresponding misclassification error is averaged over $25$ independent runs of experiment.
\label{fig:aedes_var}
}
\end{figure}

\newpage

\begin{appendices}

\section{Proof of \prettyref{lmm:eulerian}} \label{app:eulerian}
  By assumption, $G$ is connected and balanced. Therefore, every vertex in $G$ has an even degree.
  Hence, there exists an Eulerian circuit $T=(v_0,v_1,\ldots,v_{m-1},v_0)$, where $m$ is the number of edges in $G$. 
  Next we show that it can be modified, if necessary, to have alternating colors. 
  
  Suppose there exists a pair of two adjacent edges $(v_{i-1},v_i)$ and $(v_{i},v_{i+1})$ such that they have the same color,
  say blue. Then by the balancedness, vertex $v_i$ must be visited later on the circuit, say at step $t$, such that 
  $v_t=v_i$, and $(v_{t-1},v_t)$ and $(v_{t},v_{t+1})$ are both red. Consider the modified tour 
  $$
  T'=(v_0,\ldots,v_{i-1},v_i,v_{t-1}, v_{t-2}, \ldots, v_{i+1}, v_t, v_{t+1}, \ldots, v_{m-1}, v_0)
  $$ 
  formed by reversing the trail between $v_{i+1}$ and $v_{t-1}$. As a result, $(v_{i-1},v_{i})$ and $(v_i,v_{t-1})$ have
  alternating colors. Similarly for $(v_{i+1}, v_t)$ and $(v_t, v_{t+1})$. The ordering of colors on the other adjacent edges is same as that in $T$. Therefore, the number of pairs of adjacent edges of the same color in the modified Eulerian 
  circuit $T'$ decreases by $2$.  Hence one can apply this procedure repeatedly for a finite number of rounds
  to  transform $T$ to an alternating Eulerian circuit.

\section{Large Deviation Bounds}
\label{app:LD}

Recall that $X_i$'s and $Y_i$'s are two independent sequences of random variables, where $X_i$'s are i.i.d.\ copies of $\diff P/\diff Q$ under distribution $P$ and $Y_i $'s are i.i.d.\ copies of $\diff P/\diff Q$ under
distribution $Q$. 
Let $F$ denote the Legendre transform of the log moment generating function of $Y_1-X_1$, i.e.,
\begin{equation}
F(x) =  \sup_{\theta \ge 0} \sth{ \theta x  - \psi_P(-\theta) -\psi_Q(\theta) }.
\label{eq:F}
\end{equation}
Then from the Chernoff bound we have the following large deviation inequality: for any $ x\in\reals$ and any $\ell \in \naturals$,
\begin{equation}
\prob{\sum_{i=1}^\ell (Y_i - X_i) \geq \ell x }\leq \exp \left(-\ell F(x) \right).
\label{eq:LDXY}
\end{equation}
Moreover, it is straightforward to check that 
\begin{align}
F(0) 
=  -  \psi_P( -1/2) - \psi_Q( 1/2) =  \alpha_n \label{eq:Falpha}.
\end{align}
and $F'(0) = 1/2$. 

\begin{lemma}
\label{lmm:LD}
Fix $s,t\in\naturals$ and 
$u \in \integers$ such that $u \ge \min \{-s, -t/2\}$ and $v\in\integers $ such that $ -2(s+u) \le v \le t$. 
Let 
\[
A \triangleq \sum_{i=1}^{s+u} Y_i - \sum_{i=1}^{s} X_i, \quad B \triangleq \sum_{i=1}^{t} Y_i - \sum_{i=1}^{t+2u} X_i.
\]
Let $\{A_i: i \in [N]\}$ be identically distributed as $A$ and $\{B_{ij}:  i\in[N],j\in[M]\}$ as $B$,
where 
$$ 
M \leq n^{t - v} 4^{s+t+u-v/2}, \quad N \leq (2n)^{s+u + v/2} .
$$
Assume that for each $i$, 
	$A_i$ is independent of $\{B_{ij}: j \in [M]\}$. 
If 
\begin{equation}
\alpha_n  - \log n \geq \log 64, 
\label{eq:alphaabove}
\end{equation}
then
\begin{equation}
\prob{ \max_{i\in [N],j\in [M]}  2 A_i + B_{ij} \geq 0} \le 5 \exp \left( - (\alpha-\log n) (s+t+u-v/2)/4 \right).
\label{eq:LD}
\end{equation}
\end{lemma}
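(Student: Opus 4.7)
The plan is to control the maximum by combining a union bound over $i$ with a conditioning argument that exploits the stated independence of $A_i$ from $\{B_{ij}\}_{j}$. For each fixed $i$, conditional on $A_i = a$, the event $\{\exists j: 2A_i + B_{ij} \geq 0\}$ reduces to $\{\exists j: B_{ij} \geq -2a\}$, whose probability is at most $\min(1, M\,\Prob[B \geq -2a])$ by a union bound over $j$ (since $B_{ij}$ is marginally distributed as $B$). Applying Chernoff with $\theta = 1/2$ to $B$ and using $\Expect[e^{B/2}] = e^{-(t+u)\alpha_n}$ (which follows from $\psi_Q(1/2) = \psi_P(-1/2) = -\alpha_n/2$), then taking expectation over $A_i$ and a final union bound over the $N$ indices, I get
\begin{equation*}
\Prob\left[\max_{i,j} 2A_i + B_{ij} \geq 0\right] \leq N \, \Expect\!\left[\min\!\left(1,\, M e^{A - (t+u)\alpha_n}\right)\right].
\end{equation*}

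The next step is to bound the remaining expectation. Setting $c = (t+u)\alpha_n$ and splitting at the threshold $A = c - \log M$, I write $\Expect[\min(1, Me^{A-c})] \leq \Prob[A \geq c - \log M] + M e^{-c}\, \Expect[e^A \mathbf{1}\{A < c - \log M\}]$. Applying Chernoff with $\theta = 1/2$ to the first term and using the pointwise bound $e^A \leq e^{(c - \log M)/2} e^{A/2}$ on the event $\{A < c - \log M\}$ for the second, both terms evaluate to exactly $M^{1/2} e^{-r \alpha_n}$ with $r \triangleq s + t/2 + u$, since $\Expect[e^{A/2}] = e^{-(s + u/2)\alpha_n}$. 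This yields $\Prob[\max_{i,j} 2A_i + B_{ij} \geq 0] \leq 2 N M^{1/2} e^{-r \alpha_n}$.

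The last step is arithmetic: a direct exponent count using $N \leq (2n)^{s+u+v/2}$ and $M^{1/2} \leq n^{(t-v)/2}\, 2^{s+t+u-v/2}$ gives $NM^{1/2} \leq (4n)^{r}$, so the bound becomes $2 \exp(-r(\alpha_n - \log n - \log 4))$. The crucial identity is
\begin{equation*}
2r - L \;=\; s + u + v/2 \;\geq\; 0,
\end{equation*}
which uses the hypothesis $v \geq -2(s+u)$ to yield $r \geq L/2$ with $L = s+t+u-v/2$. Combined with $\alpha_n - \log n \geq \log 64$, which ensures $\alpha_n - \log n - \log 4 \geq (\alpha_n - \log n)/2$, this gives $r(\alpha_n - \log n - \log 4) \geq L(\alpha_n - \log n)/4$, yielding the desired bound of $5 \exp(-L(\alpha_n - \log n)/4)$.

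The main obstacle I anticipate is correctly identifying the sharp inequality $r \geq L/2$ (as opposed to the looser $r \geq L/4$ that one obtains without invoking the explicit constraint on $v$); this sharp form would only yield exponent $L(\alpha_n - \log n)/8$ otherwise, which is insufficient. The constraint $v \geq -2(s+u)$ is precisely where the structure of the underlying multigraph decomposition (in particular the fact that double edges must be covered by the vertex map twice) enters the large-deviation analysis, reflecting that the doubled term $2A$ has doubled variance but its Chernoff rate is still governed by the same R\'enyi exponent $\alpha_n/2$ at $\theta = 1/2$.
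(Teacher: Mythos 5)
Your proof is correct, and it takes a genuinely different route from the paper's while relying on the same underlying Chernoff exponents. Both arguments condition on $A$ and apply the union bound over $j$ through $\min(1, M\,\prob{B\geq -2A})$; the divergence is in how that capped expectation is bounded. The paper fixes a threshold $\tau$ (inflated by an ad hoc slack $\epsilon=(\alpha-\log n)/(4\log n)$), splits into $N\prob{A\geq\tau}+NM\prob{B\geq -2A,\,A\leq\tau}$, and then handles the second term by partitioning $(-\infty,\tau]$ into intervals of width $\delta$ and summing a geometric series. You instead evaluate $\Expect[\min(1,Me^{A-c})]$ directly by splitting at the natural threshold $A=c-\log M$ (where both regimes contribute identically) and using the pointwise trick $e^A\leq e^{(c-\log M)/2}e^{A/2}$ on the low side; this gives the exact estimate $2M^{1/2}e^{-r\alpha_n}$ with $r=s+t/2+u$ in one step, with no slack and no infinite sum. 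The remaining exponent arithmetic $NM^{1/2}\leq (4n)^r$ and the key inequality $2r-L=s+u+v/2\geq 0$ (which you correctly identify as the crux, coming from $v\geq -2(s+u)$; together with $v\leq t$ this gives $L/2\leq r\leq L$ with $L>0$) are the same as in the paper. Your version produces a leading constant of $2$ rather than $5$; both approaches land on the same final exponent $L(\alpha_n-\log n)/4$.
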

\begin{proof}

We abbreviate $\alpha_n$ as $\alpha.$
	Define the rate function for $A$ and $B$ as follows:
		for any $x\in\reals$, 
	\begin{align*}
	F_s(x) = & ~  \sup_{\theta \ge 0} \sth{ \theta x - (s+u) \psi_Q(\theta) - s \psi_P(-\theta) },\\
	G_t(x) = & ~  \sup_{\theta \ge 0} \sth{ \theta x - t \psi_Q(\theta) - (t+2u) \psi_P(-\theta) }.
	\end{align*}
	Then Chernoff's inequality gives the large deviation bound: 
\[
\prob{A \geq x }\leq \exp(-F_s(x)), \quad 
\prob{B \geq x }\leq \exp(-G_t(x)).
\]
Next we provide a lower bound on the rate function. To explain the intuition, consider the easiest situation of $u=0$,
then we have $F_s(x)=G_s(x)=s F(\frac{x}{s})$. 
In the case of large $s$ (which can be shown to be the most adversarial scenario), we have the approximation
$F(\epsilon) = F(0) + F'(0) \epsilon + o(\epsilon) $ as $\epsilon\to0$. Recall $F'(0)=1/2$. Thus we have
$F_s(x) = s \alpha +  x/2 + o(1)$ as $s\diverge$, which depends linearly on both $x$ and $s$.
This intuition can be made precise by the following non-asymptotic bound:
\begin{align}
F_s(x) \geq & ~ 	s \alpha +  x/2 - u \psi_Q(1/2) \label{eq:Fslb}\\
G_t(x) \geq & ~ 	t \alpha +  x/2 - 2 u \psi_P(-1/2 ) \label{eq:Gtlb},
\end{align}
which simply follows from substituting $\theta =1/2$ in the definition of $F_s$ and $G_t$ and applying \prettyref{eq:Falpha}.

Next we bound the probability in \prettyref{eq:LD}. 
By the union bound, we have for any $\tau \in \reals$,
\begin{align}
\prob{\max_{i\in [N],j\in [M]}  2 A_i + B_{ij}  \geq 0} \le 
N \cdot \prob{ A \geq \tau} + NM \cdot \prob{B \geq -2A, A \leq \tau}, \label{eq:union_bound_key}
\end{align}
where $A\indep B$. Notice that here instead of directly taking the union bound
over both $i$ and $j$, we first single out the event $\{\max_{i \in [N]} A_i \geq \tau \}.$
The key idea is to choose $\tau$ appropriately such that  
event $\{\max_{i \in [N]} A_i \geq \tau \}$ has low probability. In view of the assumption $N \le (2n)^{s+u+v/2}$
and \prettyref{eq:Fslb},
a straightforward choice of $\tau$ is such that 
$$
s\alpha+ \tau/2 -u \psi_Q(1/2) = \left( s+u+v/2 \right) \alpha. 
$$ 
Here, we inflate $\tau$ slightly by setting 
\begin{equation}
\tau \triangleq 2  \left(u \psi_Q(1/2 )+ (u+v/2) \alpha + \epsilon(s+t+u-v/2) \log n \right),
\label{eq:tau}
\end{equation}
where 
\begin{align}
\epsilon=\frac{\alpha-\log n}{4\log n}. \label{eq:def_epsilon_alpha}
\end{align}
Then for the first term in \prettyref{eq:union_bound_key}:
\begin{align}
 N \prob{ A \geq \tau} \nonumber \leq & ~ N \exp(-F_s(\tau))	\nonumber \\
\overset{\prettyref{eq:Fslb}}{\leq} & ~ \exp(-(s \alpha + \tau/2 - u \psi_Q(1/2 ) ) + \log (2n) (s+u+v/2))	\nonumber \\
\overset{\prettyref{eq:tau}}{=} & ~ \exp\left(- (s+u+v/2) \left(\alpha-\log (2n) \right)  - \epsilon(s+t+u-v/2) \log n \right) 	\nonumber \\
\overset{\prettyref{eq:alphaabove}}{\leq} & ~ \exp \left( -\epsilon (s+t+u-v/2) \log n \right). \label{eq:union_bound_part_1}
\end{align}
For the second term in \prettyref{eq:union_bound_key}, fix some $\delta>0$ to be chosen later and partition $(-\infty,\tau]$ into disjoint intervals $\cup_{i\geq 1} (\tau_i,\tau_{i-1}]$, where 
$\tau_i=\tau-i \delta$.
Then by union bound,
\begin{align*}
\prob{B \geq -2A, A \leq \tau}
\leq & ~ \sum_{i \geq 1} \prob{B \geq -2A, \tau_i \leq A \leq \tau_{i-1}}	 \\
\leq & ~ \sum_{i \geq 1} \prob{B \geq -2\tau_{i-1}} \prob{A \geq \tau_i }	 \\
\leq & ~ \sum_{i \geq 1} \exp\left(-F_s(\tau_i) - G_t(-2\tau_{i-1}) \right) 
\end{align*}
where, in view of \prettyref{eq:Fslb} and \prettyref{eq:Gtlb}, the combined exponent satisfies
\begin{align*}
 & ~ F_s(\tau_i) + G_t(-2\tau_{i-1}) \\
\geq & ~ s \alpha +  \tau_i/2 - u \psi_Q(1/2) + 	t \alpha -   \tau_{i-1} - 2 u \psi_P(-1/2) 	\\
= & ~ (s+t) \alpha - \tau/2  - 2 u \psi_P(-1/2) - u \psi_Q(1/2 )  + (i- 2) \delta/2	\\
\overset{\prettyref{eq:tau}}{=} & ~ (s+t) \alpha - \left(u \psi_Q(1/2)+ (u+v/2) \alpha + \epsilon(s+t+u-v/2) \log n \right)  - 2 u \psi_P(-1/2) - u \psi_Q(1/2)  + (i- 2) \delta/2	\\
\overset{\prettyref{eq:Falpha}}{=} & ~ (s+t) \alpha  + 2u \alpha - (u+v/2) \alpha - \epsilon (s+t+u-v/2)\log n  + (i- 2) \delta/2	\\
= & ~ (s+t + u-v/2) (\alpha-\epsilon)  + (i- 2) \delta/2
\overset{\prettyref{eq:def_epsilon_alpha}}{\le} (s+t + u-v/2) (1+3\epsilon)\log n + (i- 2) \delta/2.
\end{align*}
 Combining the last two displayed equation and noting that $NM \le (8n)^{s+t+u-v/2}$,  we have
\begin{align*}
NM \prob{B \geq -2A, A \leq \tau} &\leq  \exp \left(  - ( 3\epsilon \log n - \log 8)  (s+t+u-v/2) \right) 
\sum_{i \geq 1} \exp\left(-(i- 2) \delta/2 \right) \\
& \leq \frac{e^{ \delta/2}}{1-e^{-\delta/2}}  \exp \left( -\epsilon (s+t+u-v/2) \log n \right),
\end{align*}
where the last inequality holds because $2\epsilon \log n = (\alpha-\log n)/2 \ge \log 8$ in view of \prettyref{eq:alphaabove}.
Choosing $\delta = 2\log 2$ that minimizes the right hand side of the last displayed equation,
we get
\begin{align}
NM \prob{B \geq -2A, A \leq \tau} \le 4 \exp \left( -\epsilon (s+t+u-v/2) \log n \right). \label{eq:union_bound_part_2}
\end{align}
Finally, we arrive at \prettyref{eq:LD} by combining \prettyref{eq:union_bound_key}, \prettyref{eq:union_bound_part_1}, and
\prettyref{eq:union_bound_part_2}, and \prettyref{eq:def_epsilon_alpha}.
\end{proof}

\section{Proof of \prettyref{lmm:assumption_verify}}
	\label{app:assumption_verify}
\begin{proof}
In the Gaussian case,  the log likelihood ratio is given by
$$
\log \frac{\diff P}{\diff Q} (x)  =  x \mu - \mu^2/2.
$$ 
Hence, 
$$
\psi_P(\theta)= \frac{1}{2} \theta (1+\theta) \mu^2
$$
and $\alpha_n= \mu^2/4$. It follows that $\tau^*=\psi'_P(-1/2) = 0$.
Define $\Phi^c(x) \triangleq \int_x^{\infty} \frac{1}{\sqrt{2\pi}} e^{-t^2/2} dt$ as the tail probability of the standard normal.
Then
\begin{align*}
\log \prob{X \le \tau^*} & = \log \prob{ \calN(\mu,1) \le \mu/2}  = \log \Phi^c(\mu/2)  \ge - \mu^2/8 + O(\log \mu ),  \\
\log \prob{Y \ge \tau^*} & = \log \prob{ \calN(0,1) \ge \mu/2 } = \log \Phi^c(\mu/2) \ge - \mu^2/8 + O(\log \mu ). 
\end{align*}
where the last inequality holds because $\Phi^c(x)\ge \frac{x}{1+x^2} e^{-x^2/2}$. 
Hence \prettyref{ass:chernoff_inverse} holds in the Gaussian case. 

In the Poisson case, for integer $k \geq 0$,
$$
\log \frac{\diff P} {\diff Q} (k) = k \log (\lambda/\mu) - (\lambda - \mu).
$$
Hence, 
$$
\psi_P(\theta)= \lambda \left( (\lambda/\mu)^\theta -1 \right) - \theta (\lambda -\mu). 
$$
and
$$
\alpha_n=\left(\sqrt{\lambda}-\sqrt{\mu} \right)^2, \quad  \tau^* = \sqrt{\lambda \mu} \log (\lambda/\mu) - (\lambda -\mu) .
$$
Let $\eta= \sqrt{\lambda \mu}$. Then  
\begin{align*}
\log \prob{X \le \tau^*} & = \log \prob{\Pois(\lambda ) \le \eta } \\
& \ge \log \prob{ \Pois (\lambda ) =\lfloor \eta \rfloor }  \\
& = \lfloor \eta \rfloor \log \lambda  - \lambda - \log \left( \lfloor \eta \rfloor ! \right)  \\
& = \eta  \log \lambda   - \lambda  -  \eta \log (\eta/e )  +O( \log (\lambda\mu) )  +O(1) \\
& =  -  \lambda + \eta + \frac{1}{2} \eta \log \left( \lambda/\mu \right)  +o(\alpha)+o(\log n),
\end{align*}
where we used the fact that $\log n! \le n \log (n/e) + (1/2)\log n +1$ and the assumption
that $\log (\lambda \mu) = o (\alpha)  + o(\log n).$
Similarly, 
\begin{align*}
\log \prob{Y \ge \tau^*} & = \log \prob{\Pois( \mu ) \ge \eta } \\
& \ge \log \prob{ \Pois (\mu) =\lceil \eta  \rceil  }  \\
& =  \lceil \eta  \rceil  \log \mu  - \mu - \log \left( \lceil \eta  \rceil ! \right)  \\
& = \eta  \log \mu - \mu  - \eta \log (\eta/e)  + +O( \log (\lambda\mu) )  +O(1)  \\
& =  -  \mu + \eta  - \frac{1}{2} \eta \log \left( \lambda/\mu \right)    +o(\alpha)+o(\log n).
\end{align*}
Thus  \prettyref{ass:chernoff_inverse}    holds in the Poisson case. 

In the Bernoulli case, let $\bar{p}=1=p$ and $\bar{q}=1-q$. Then 
$X$ and $Y$ takes the value of either $\log (p/q)$ or $\log (\bar{p}/\bar{q})$. 
Hence, 
\begin{align*}
 \log \prob{X \le \log(p/q) } + \log \prob{ Y \ge \log(p/q) }  & = \log 1 + \log q = \log q \\
  \log \prob{X \le \log (\bar{p}/\bar{q}) } + \log \prob{ Y \ge \log (\bar{p}/\bar{q}) } & = \log \bar{p} + \log 1 = \log \bar{p}.
\end{align*}
Hence, to verify \prettyref{ass:chernoff_inverse}, it suffices to show
\begin{align}
\max \{ \log \bar{p}, \log q \}  \ge - \left( 1+o(1) \right) \alpha + o(\log n). 
  \label{eq:Bernoulli_check}
\end{align}
Note that 
 $$
 - \alpha = 2 \log \left( \sqrt{pq} + \sqrt{\bar{p}\bar{q} } \right) \le  \max \{ \log (pq), \log ( \bar{p} \bar{q} ) \} + \log 4
 \le  \max \{ \log q, \log \bar{p}   \} + \log 4. 
 $$
 Hence, the desired \prettyref{eq:Bernoulli_check} holds. 
\end{proof}

\section{Analysis for the Suboptimal Dual Certificate \prettyref{eq:dual}}
\label{app:dual}

We first present an auxiliary lemma. Recall that 
$$
\beta_n = - \frac{3}{2} \log \int  (\diff P)^{2/3} (\diff Q)^{1/3}.
$$
\begin{lemma}\label{lmm:concentration_2YXX}
$$
\prob{ 2 Y - X_1 -X_2 \ge 0 } \le \exp \left(  - 2 \beta_n  \right).
$$
\end{lemma}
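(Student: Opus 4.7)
The plan is to apply a one-parameter Chernoff bound and optimize the tilt. Specifically, I would write, for any $\theta \ge 0$,
\[
\prob{2Y - X_1 - X_2 \ge 0} \;\le\; \Expect\!\left[ e^{\theta(2Y - X_1 - X_2)}\right] \;=\; \Expect\!\left[e^{2\theta Y}\right] \cdot \left(\Expect\!\left[e^{-\theta X}\right]\right)^2,
\]
using independence of $Y, X_1, X_2$ and the fact that $X_1, X_2$ are identically distributed.

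Next I would compute the two moment generating functions by changing measure. Since $X \sim \log(\diff P/\diff Q)$ under $P$ and $Y \sim \log(\diff P/\diff Q)$ under $Q$,
\[
\Expect\!\left[e^{2\theta Y}\right] = \int (\diff P)^{2\theta} (\diff Q)^{1-2\theta}, \qquad \Expect\!\left[e^{-\theta X}\right] = \int (\diff P)^{1-\theta} (\diff Q)^{\theta}.
\]
The key observation, which motivates the exponent $\beta_n$, is that the choice $\theta = 1/3$ makes both integrals equal to $\int (\diff P)^{2/3}(\diff Q)^{1/3}$.

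With this choice, the Chernoff bound becomes
\[
\prob{2Y - X_1 - X_2 \ge 0} \le \left(\int (\diff P)^{2/3}(\diff Q)^{1/3}\right)^{3} = \exp\!\left(3 \log \int (\diff P)^{2/3}(\diff Q)^{1/3}\right) = \exp(-2\beta_n),
\]
using the definition of $\beta_n$ in \prettyref{eq:def_beta_n}. This yields the claim. There is essentially no obstacle here; the main point is recognizing that the Chernoff exponent for the random variable $2Y - X_1 - X_2$ is optimized by the symmetric choice $\theta = 1/3$, which is precisely what makes $\beta_n$ (the $1/3$-R\'enyi divergence) the natural quantity governing the rate.
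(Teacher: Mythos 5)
Your proposal is correct and takes essentially the same approach as the paper: a Chernoff bound on $2Y - X_1 - X_2$ with tilt $\theta = 1/3$, recognizing that both moment generating functions then reduce to $\int (\diff P)^{2/3}(\diff Q)^{1/3}$ so that the bound equals $\exp(-2\beta_n)$. The paper additionally observes that $\theta = 1/3$ is in fact the optimizer of the Chernoff exponent, but as you correctly note this is not needed for the claimed upper bound.
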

\begin{proof}
It follows from the Chernoff bound that
$$
\prob{ 2 Y - X_1 -X_2 \ge 0 } \le  \exp \left( - \sup_{\theta \ge 0} \left\{ \psi_Q(2\theta ) + 2\psi_P( - \theta) \right \}\right).
$$
It turns out that the supremum is achieved at  $\theta=1/3$ so that $\psi'_Q(2/3) = \psi'_P(-1/3)$ and
$$
\psi_Q(2/3 ) + 2\psi_P( - 1/3) = - 3 \log \int  (\diff P)^{2/3} (\diff Q)^{1/3} =2\beta_n.
$$
\end{proof}

\begin{theorem} \label{thm:F2F_duality}
If 
$$
\beta_n - \log n \to \infty,
$$ 
then the 
optimal solution of F2F \prettyref{eq:F2F}
satisfies 
$
\min_{x^*} \prob{ \hat x_{\rm F2F} = x^* } \to 1 
$ as $n \to \infty$. 
\end{theorem}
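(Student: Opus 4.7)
The plan is to execute the dual certificate argument with the explicit choice $u_i = \tfrac12 \min\{w_{ij}: x^*_{ij}=1\}$ and verify that the dual inequalities \eqref{eq:dual_certificate_1}--\eqref{eq:dual_certificate_2} hold with high probability, in fact strictly on non-cycle edges so that uniqueness of $x^*$ as the F2F optimum follows. Dual feasibility on cycle edges is automatic: if $x^*_{ij}=1$, then by definition $u_i\le w_{ij}/2$ and $u_j\le w_{ij}/2$, giving $u_i+u_j\le w_{ij}$ and \eqref{eq:dual_certificate_1}. The nontrivial work is to establish \eqref{eq:dual_certificate_2} strictly on every non-cycle edge.

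Next, I would reduce the per-edge bound to Lemma~\ref{lmm:concentration_2YXX}. Fix a non-cycle edge $(i,j)$ and set $M_i\triangleq 2u_i=\min\{w_{ik}: x^*_{ik}=1\}$, so that the failure event is $\{M_i+M_j\le 2w_{ij}\}$. The two cycle edges contributing to $M_i$, the two contributing to $M_j$, and the edge $(i,j)$ itself are always distinct (since $(i,j)$ is non-cycle), hence the five weights involved are independent, with $X_1,X_2,X_3,X_4\iiddistr\log(dP/dQ)$ under $P$ and $Y\sim \log(dP/dQ)$ under $Q$. Then
\[
\{\min(X_1,X_2)+\min(X_3,X_4)\le 2Y\}\ \subseteq\ \bigcup_{a\in\{1,2\},\,b\in\{3,4\}}\{X_a+X_b\le 2Y\},
\]
and Lemma~\ref{lmm:concentration_2YXX} bounds each of the four terms by $e^{-2\beta_n}$, giving a per-edge bound of $4e^{-2\beta_n}$. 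A union bound over the $\binom{n}{2}-n=O(n^2)$ non-cycle edges then yields total failure probability at most $2n^2 e^{-2\beta_n}=2e^{-2(\beta_n-\log n)}\to 0$ under the hypothesis.

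Finally, I would convert strict dual feasibility into uniqueness via the identity
\[
\langle w,x-x^*\rangle \;=\; \sum_e (w_e-u_i-u_j)(x_e-x^*_e),
\]
which holds for every $x$ in the F2F polytope because $x(\delta(v))=x^*(\delta(v))=2$. Each summand is nonpositive on the good event (cycle terms because $x_e-x^*_e\le 0$ and $w_e-u_i-u_j\ge 0$; non-cycle terms because $x_e-x^*_e\ge 0$ and $w_e-u_i-u_j<0$), and a short degree-constraint argument shows that any feasible $x\neq x^*$ must place positive weight on at least one non-cycle edge (otherwise the degree constraints force $x_e=1$ on all cycle edges, recovering $x^*$), on which the summand is strictly negative. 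Hence $\langle w,x\rangle<\langle w,x^*\rangle$ for every $x\ne x^*$, and $x^*$ is the unique F2F optimizer.

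The main obstacle, and the reason this construction falls short of the sharp threshold achieved by Theorem~\ref{thm:LP_opt}, is the exponent in the per-edge bound: the dual inequality couples three independent weight terms (the two minima and the non-cycle weight), whose optimal Chernoff exponent $\sup_{\theta\ge 0}\{\psi_Q(2\theta)+2\psi_P(-\theta)\}$ is attained at $\theta=1/3$ and equals $2\beta_n$ rather than $2\alpha_n$. Any $u$ built only from local cycle-neighborhood statistics appears to inherit this three-term structure, which is precisely what motivates the primal extreme-point analysis underlying Theorem~\ref{thm:LP_opt}.
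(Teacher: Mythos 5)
Your proof is correct and takes essentially the same route as the paper's Appendix~D argument: the same dual certificate $u_i=\tfrac12\min\{w_{ij}:x^*_{ij}=1\}$, the same reduction of the non-cycle-edge failure event to a four-fold union handled by Lemma~\ref{lmm:concentration_2YXX}, and the same union bound over $O(n^2)$ non-cycle edges. The one cosmetic difference is in the uniqueness step: the paper phrases it through KKT complementary slackness ($\langle\tilde x,b\rangle=0$ forces $\tilde x_e=0$ off the cycle), while you unwind this into the primal identity $\langle w,x-x^*\rangle=\sum_e(w_e-u_i-u_j)(x_e-x^*_e)$ using the degree constraints, which is the same content stated more elementarily; both are fine, and your observation that any feasible $x\neq x^*$ must put positive mass on a non-cycle edge is exactly the needed closing step.
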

\begin{proof}
We use the standard dual certificate argument.
The Lagrangian function is given by
$$
L(x, b, h, u)
= - \iprod{w}{x} - \iprod{b}{x} + \iprod{h}{x- \ones} 
+ \sum_i u_i \left( x \left(\delta(i) \right) -2   \right),
$$
where $\ones$ denotes a $\binom{n}{2}$-dimensional all-one vector; 
$b \ge 0$, $h \ge 0$, and $u \in \reals^n$ are dual variables. 
To show that $x^*$ is an optimal solution to F2F \prettyref{eq:F2F}, it suffices
to construct the dual variables to satisfy the following KKT conditions: for every edge $e = (i,j) $, 
\begin{align}
- w_e - b_e +h_e + u_i + u_j = & 0,   \label{eq:F2FLP_first}\\
h_e (x^*_e -1)   = & 0,  \label{eq:F2FLP_second} \\
b_e x^*_e = & 0  \label{eq:F2FLP_third},
\end{align}
where \prettyref{eq:F2FLP_first} is the first-order condition,
and \prettyref{eq:F2FLP_second}  and \prettyref{eq:F2FLP_third}
are complementary slackness conditions. 

It follows from \prettyref{eq:F2FLP_second} that 
if $x^*_e=0$, then $h_{e} =0$. Similarly, it
follows from \prettyref{eq:F2FLP_third} 
that if $x^*_e=1$, then $b_{e }=0$.  Thus,  in view of 
\prettyref{eq:F2FLP_first}, for every edge $e=(i,j)$, we must have
\begin{align*}
h_{e}  =  w_{e }  - u_i - u_j  \quad & \text{ if }  x^*_e=1,   \\
b_{e}   = - w_{e} +  u_i +u_j   \quad & \text{ if } x^*_e=0, 
\end{align*}
Then to satisfy all the KKT conditions,  it reduces to constructing $u$ such that
that for every $e=(i,j)$, 
\begin{align}
u_i +u_j &\le w_e \quad  \text{ if } x^*_e=1,   \\
 u_i +u_j  & \ge w_e  \quad  \text{ if } x^*_e=0. 
\end{align}

Let
$$
u_i =  \frac{1}{2} \min \left\{  w_e :  e \in \delta(i), x^*_e =1 \right\}.
$$
Then for every $e=(i,j)$ such that $x^*_e=1$, $u_i \le w_e /2$ and
$u_j \le w_e/2$, and thus $ u_i +u_j \le w_e$. Define
event
$$
\calE = \cup_{ e=(i,j): x^*_e=0 }  \left\{ w_e -  u_i/2 - u_j/2     \ge 0 \right\} 
$$
We are left to show that  as $n \to \infty$, 
$$
\prob{ \calE } \to 0. 
$$ 
For each $e=(i,j)$ such that $x^*_e=0$, 
by the union bound, 
\begin{align*}
\prob{   w_e  -   u_i/2  - u_j/2   \ge  0  }
&= \prob{ \cup_{f \in \delta(i): x^*_{f } =1} \cup_{g \in \delta(j): x^*_{g}=1 }  \left\{  2 w_e - w_{f} - w_{g}  \ge 0 \right\}  }  \\
& \le \sum_{ f \in \delta(i): x^*_f =1 } \sum_{g \in \delta(j): x^*_g =1} \prob{ 2 w_e - w_{f} - w_{g} \ge 0 } \\
& =  4   \prob{ 2Y - X_1 -X_2 \ge 0} \\
& \le 4  \exp \left( -2 \beta_n  \right), 
\end{align*}
where the last inequality holds due to \prettyref{lmm:concentration_2YXX}. 
Hence, by the union bound
\begin{align*}
\prob{ \calE }
& \le \sum_{ e=(i,j): x^*_e=0 } \prob{   w_e  -   u_i/2  - u_j/2   \ge  0  } \\
& \le 4 n^2 \exp \left(  -2\beta_n \right) =o(1),
\end{align*}
where the last inequality holds due to the assumption that $\beta_n - \log n \to +\infty$. 

Finally, we show that $x^*$ is the unique optimal solution to F2F on event $\calE^c$. 
Let $\tilde{x}$ is another optimal solution to F2F. Since KKT
conditions are also necessary, it follows that $\tilde{x}$ must satisfy 
$
\iprod{\tilde{x}}{b} =0.
$
On event $\calE^c$, for every $e=(i,j)$ such that $x^*_e=0$, 
$b_e=-w_{e} + u_i +u_j >0$ and hence  $\tilde{x}_{e}  =0$. Moreover,
since $\tilde{x}$ is feasible, it follows that $\tilde{x}_{e} =1$ for every $e$ such that $x^*_e=1$ 
and thus $\tilde{x} = x$. 
\end{proof}

\section{Greedy Methods} \label{app:greedy}

We first consider the simple thresholding algorithm. 
\begin{algorithm}
\caption{Simple Thresholding}\label{alg:thresholding}
\begin{algorithmic}[1]
\STATE For every vertex keep the two edges with the largest weights and delete the rest
of $n-3$ edges, with ties broken arbitrarily. 
\STATE Output the resulting graph $\hat{C}_{\rm TH}$. 
\end{algorithmic}
\end{algorithm}

The following theorem provides an sufficient condition under which the simple thresholding algorithm
to attain exact recovery. 
\begin{theorem}[Simple Thresholding]
If 
\begin{align}
\alpha_n - 2 \log n \to + \infty, \label{eq:suff_simple_thresholding}
\end{align}
then $\min_{C^*} \prob{\hat{C}_{\rm TH}  = C^*} \to 1$ as $n \to \infty.$
\end{theorem}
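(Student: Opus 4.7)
The plan is a direct union-bound argument. The simple thresholding algorithm outputs $C^*$ exactly if and only if at every vertex $i \in [n]$, the two log-likelihood weights on the two cycle edges incident to $i$ both strictly exceed all $n-3$ remaining incident edge weights. Fix a vertex $i$ and let $W_1, W_2$ denote the log-likelihood ratios on the two cycle-edges at $i$ and $V_1,\ldots,V_{n-3}$ those on the remaining incident edges. By the independent noise assumption, these $n-1$ random variables are independent, with $W_1, W_2 \overset{\text{iid}}{\sim} X$ (that is, $\log \tfrac{\diff P}{\diff Q}$ under $P$) and $V_1,\ldots,V_{n-3} \overset{\text{iid}}{\sim} Y$ (that is, $\log \tfrac{\diff P}{\diff Q}$ under $Q$). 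Thus failure at vertex $i$ is precisely the event $\{\min(W_1,W_2) \le \max_j V_j\}$, and by a union bound over the $n$ vertices,
\[
\prob{\hat C_{\rm TH} \neq C^*} \;\le\; n\, \prob{\min(W_1,W_2) \le \max_{1\le j\le n-3} V_j}.
\]

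Next I would apply two further union bounds. First, over the $n-3$ spurious edges:
\[
\prob{\min(W_1,W_2) \le \max_j V_j} \;\le\; \sum_{j=1}^{n-3} \prob{\min(W_1,W_2) \le V_j} \;\le\; (n-3)\,\prob{\min(W_1,W_2) \le V},
\]
where $V$ is an independent copy of $Y$. Second, for a single $V$, using $\min(W_1,W_2)\le V$ implies at least one of $W_1, W_2$ is $\le V$, we get $\prob{\min(W_1,W_2) \le V} \le 2\,\prob{X \le Y}$ with $X \indep Y$. Combining,
\[
\prob{\hat C_{\rm TH} \neq C^*} \;\le\; 2n(n-3)\,\prob{X \le Y}.
\]

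Finally, I would control $\prob{X \le Y} = \prob{Y - X \ge 0}$ via Chernoff. Choosing $\theta = 1/2$ in the moment-generating bound yields
\[
\prob{Y-X\ge 0} \;\le\; \exp\bigl(\psi_Q(1/2) + \psi_P(-1/2)\bigr) \;=\; e^{-\alpha_n},
\]
since $\mathbb{E}_Q[e^{Y/2}] = \mathbb{E}_P[e^{-X/2}] = \int \sqrt{\diff P\,\diff Q}$; this is a special case of the large-deviation bound \prettyref{eq:LDXY} with $\ell=1$, $x=0$, combined with \prettyref{eq:Falpha}. Plugging in:
\[
\prob{\hat C_{\rm TH} \neq C^*} \;\le\; 2n(n-3)\, e^{-\alpha_n} \;\le\; 2\exp\bigl(-(\alpha_n - 2\log n)\bigr),
\]
which tends to $0$ under the hypothesis $\alpha_n - 2\log n \to +\infty$.

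There is no genuine obstacle here; the proof is a three-layer union bound coupled with the single-sample Chernoff identity for the R\'enyi-$\tfrac12$ divergence. The only point warranting care is verifying independence: within a single vertex the two true cycle edges and the $n-3$ spurious edges are distinct edges of $G$ and so carry mutually independent weights, which is what makes the clean decomposition $(W_1,W_2,V_1,\ldots,V_{n-3})$ valid at each $i$. Notably, the factor $n^2$ (as opposed to $n$) arises from the second-layer union bound over spurious edges, explaining why the sufficient condition for thresholding is $\alpha_n - 2\log n \to \infty$ rather than the optimal $\alpha_n - \log n \to \infty$ achieved by F2F.
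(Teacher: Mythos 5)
Your proof is correct and follows essentially the same strategy as the paper's: define the per-vertex success event (minimum true-edge weight exceeding maximum spurious-edge weight), union-bound over the $n$ vertices, the $n-3$ spurious edges, and the two true edges to obtain a factor $2n(n-3)$, then apply the single-sample Chernoff bound $\prob{X\le Y}\le e^{-\alpha_n}$. The only cosmetic difference is that you split the per-vertex union bound into two explicit stages where the paper writes the $2(n-3)$ factor in one line; the independence check you flag is the same implicit fact the paper relies on.
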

\begin{proof}
For any vertex $i$, define event 
$$
\calH_i =\left\{  \min\{ w_e : e \in \delta(i), x^*_e=1 \} >   \max \{ w_e : e \in \delta(i), x^*_e=0 \} \right\}. 
$$
Then by union bound and Chernoff's inequality~\prettyref{eq:LDXY}, 
$$
\prob{\calH_i^c}  \le 2 (n-3)  \prob{ X  \le Y } \le 2 ne^{  - \alpha }. 
$$
Let $\calH=\cap_{i=1}^n \calH_i$.  Hence, applying union bound, we get that 
$$
\prob{\calH^c} \le \sum_{i=1}^n \prob{\calH_i^c} \le 2 n^2 e^{  - \alpha  } =o(1),
$$
where the last equality holds in view of the assumption \prettyref{eq:suff_simple_thresholding}.
The proof is complete by noting that  $ \calH \subset \{ \hat{C}_{\rm TH}  = C^* \} $.
\end{proof}

Next we consider the following nearest neighbor algorithm. 
\begin{algorithm}[htb]
\caption{Nearest Neighbor}\label{alg:greedy_search}
\begin{algorithmic}[1]
\STATE Start on an arbitrary vertex as current vertex.
\STATE Find out the edge with the largest weight connecting current vertex and an unvisited vertex $v$.\STATE Set current vertex to $v$ and mark $v$ as visited.
\STATE If all the vertices in domain are visited, then output Hamiltonian cycle $\hat{C}_{\rm NN}=(v_1, \ldots, v_n, v_1)$, where
$v_1, \ldots, v_n$ is the sequence of visited vertices; otherwise go to Step 2.
\end{algorithmic}
\end{algorithm}

The following theorem provides an sufficient condition under which the nearest neighbor algorithm 
to attain exact recovery. 
\begin{theorem}[Nearest Neighbor]
If 
\begin{align}
\alpha_n - 2 \log n \to + \infty, \label{eq:suff_greedy_search}
\end{align}
then $\min_{C^*} \prob{\hat{C}_{\rm NN}  = C^*} \to 1$ as $n \to \infty.$
\end{theorem}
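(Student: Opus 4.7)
The plan is to reduce the nearest-neighbor algorithm to the simple thresholding algorithm already analyzed in the preceding theorem. Specifically, I will argue that the event $\calH = \bigcap_{i=1}^n \calH_i$ under which simple thresholding succeeds---for every vertex $v$, the two edges incident to $v$ on $C^*$ have strictly larger weights than every other of the $n-3$ incident edges---already implies that the nearest-neighbor algorithm returns $C^*$. Granting this, the conclusion is immediate: the proof of the preceding theorem bounds $\prob{\calH^c} \le 2n^2 e^{-\alpha_n} = o(1)$ under the hypothesis $\alpha_n - 2 \log n \to +\infty$, and this bound transfers verbatim to $\prob{\hat{C}_{\rm NN} \ne C^*}$.

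To verify the claim, I will use a short induction. Without loss of generality set $C^* = (1,2,\ldots,n,1)$ and start nearest-neighbor at $v_1 = 1$. On event $\calH$, the two largest-weight edges incident to vertex $1$ are $(1,2)$ and $(1,n)$, so step 1 selects one of them; by symmetry suppose vertex $2$ is chosen, thereby fixing the traversal direction. Assume inductively that after step $k-1$ the visited sequence is $(1,2,\ldots,k)$. Then the unvisited set is $\{k+1,\ldots,n\}$, and among the two $C^*$-neighbors of $v_k = k$, namely $k-1$ and $k+1$, only $k+1$ is still unvisited. Event $\calH$ forces $w_{k,k+1}$ to exceed $w_{k,j}$ for every $j \in \{k+2,\ldots,n\}$, so nearest-neighbor correctly selects $k+1$. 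After $n-1$ such steps the visited sequence is $(1,2,\ldots,n)$, and the output cycle is $C^*$.

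There is no real obstacle, and no new large-deviation estimate is needed. One could alternatively give a direct union bound at each greedy step: conditioning on the current vertex $v_k$ and the unvisited set $U_k$, the probability that a non-$C^*$ edge beats the correct edge is at most $|U_k| \cdot \prob{X \le Y} \le n e^{-\alpha_n}$ by Chernoff (with $F(0) = \alpha_n$ as in Appendix A), and summing over the $n-1$ steps produces the same bound $O(n^2 e^{-\alpha_n}) = e^{-(\alpha_n - 2\log n + O(1))}$. The reduction to thresholding is cleaner in that it neatly sidesteps the random direction choice made at step 1, but both routes yield the stated threshold.
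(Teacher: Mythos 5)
Your proof is correct, and it takes a genuinely different route from the paper's. The paper argues sequentially: it defines step-wise events $\calH_t$ for $t = 1, \ldots, n-2$ that depend on the algorithm's random trajectory (the current vertex $v_t$ and the visited set $V_t$), factors $\prob{\cap_t \calH_t}$ by the chain rule, and bounds each conditional factor $\prob{\calH_t^c \mid \calH_{t-1},\ldots,\calH_1}$ using the observation that the edges from $v_t$ to the still-unvisited vertices are disjoint from, and hence independent of, the edges examined in earlier steps. Your argument instead establishes the deterministic inclusion $\calH \subseteq \{\hat{C}_{\rm NN} = C^*\}$, where $\calH$ is the static thresholding success event already bounded in the preceding theorem, so the estimate $\prob{\calH^c} \le 2 n^2 e^{-\alpha_n}$ transfers with no new probability computation. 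The induction on the visited prefix of $C^*$ is exactly what is needed: on $\calH$, at every step the unique unvisited $C^*$-neighbor of the current vertex strictly dominates every other unvisited candidate, so the greedy walk never deviates from the cycle. This sidesteps the conditional-independence bookkeeping and handles the direction choice at step 1 transparently, at the cost of conditioning on a slightly stronger event (the thresholding event also controls edges back to visited vertices, which nearest-neighbor never examines); since both routes yield the same $O(n^2 e^{-\alpha_n})$ bound and thus the same threshold, nothing is lost. Your sketched alternative via a per-step union bound is essentially the paper's proof, so you have both arguments in hand.
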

\begin{proof}
At step $t = 1, \ldots, n-2$,  let $V_t$ denote the set of visited vertices and  $v_t$ denote the current vertex,
and define event 
$$
\calH_t =\left\{  \max \{ w_{v_t, j} : j \notin V_t, \; x^*_{v_t,j}=1 \} >   \max \{ w_{v_t,j} : j \notin V_t,  \; x^*_{v_t,j}=0 \} \right\}. 
$$
Define event $\calH = \cap_{t=1}^{n-2}\calH_t.$.
Since on event $\calH_t$,  $v_{t+1}$ must be neighbor of $v_t$ in the true Hamiltonian cycle $C^*$,
it follows that  $\{ \hat{C}_{\rm NN} = C^*\} = \calH$. Hence, the prove the theorem, it suffices to show
$\prob{\calH} \to 1$.  Using the property of conditional probability, we have
$$
\prob{\calH } = \prob{\calH_1} \prod_{t=2}^{n-2} \prob{\calH_t | \calH_{t-1}, \ldots, \calH_1}.
$$
Note that 
\begin{align*}
 \prob{\calH_1^c} & = \prob{ \max \{ X_1, X_2 \}   \le \max_{1 \le j \le n-3} Y_j  } \\
 &  \le  \sum_{j=1}^{n-3} \prob{ \max \{X_1, X_2\} \le Y_j}  \\
 & \le (n-3)  \prob{X \le  Y}  \le (n-3) e^{-\alpha}.
\end{align*}
Fix any $2 \le t \le n-2$ and condition on $(\calH_{t-1}, \ldots, \calH_1)$. 
Then there exists exactly one one unvisited neighbor of $v_t$ in the true Hamiltonian cycle $C^*$.
Moreover, since the edges connecting current vertex $v_t$ and all unvisited vertices are 
independent of $(\calH_{t-1}, \ldots, \calH_1)$, it follows that 
$$
\prob{ \calH_t^c | \calH_{t-1}, \ldots, \calH_1} 
= \prob{ X \le \max_{1 \le j \le n-1-t} Y_j }
\le (n-1-t) \prob{ X \le Y} \le (n-1-t) e^{-\alpha}.
$$
Combining the last three displayed equations yields that
\begin{align*}
\prob{\calH} & = \left( 1- (n-3) e^{-\alpha} \right) \prod_{t=2}^{n-2} \left( 1-  (n-1-t) e^{-\alpha} \right) \\
& \ge 1 - (n-3) e^{-\alpha} - \sum_{t=2}^{n-2} (n-1-t) e^{-\alpha} \\
& = 1-  (n-3) e^{-\alpha} - \frac{(n-2)(n-3)}{2} e^{-\alpha}  \to 1,
\end{align*}
where the first inequality holds by iteratively applying $(1-x)(1-y) \ge 1-x-y$ for $x,y\ge 0$;
the last limit holds in view of assumption \prettyref{eq:suff_greedy_search}.
\end{proof}

Finally, we consider a greedy merging algorithm, which improves over the
previous two greedy algorithms.

\begin{algorithm}[htb]
\caption{Greedy Merging}\label{alg:greedy_merge}
\begin{algorithmic}[1]
\STATE Start with $n$ isolated vertices. 
\STATE Among all vertices with degree strictly less than $2$, select
two of them, say $i$ and $j$, with the largest weight $w_{ij}$ and then
connect them. 
\STATE If all vertices have degree $2$, output the resulting graph $\hat{C}_{\rm GM}$; 
otherwise go to Step 2.
\end{algorithmic}
\end{algorithm}

\begin{theorem}[Greedy Merging]
If 
\begin{align}
\beta_n -  \log n \to + \infty, \label{eq:suff_greedy_merging}
\end{align}
then $\min_{C^*} \prob{\hat{C}_{\rm GM}  = C^*} \to 1$ as $n \to \infty.$
\end{theorem}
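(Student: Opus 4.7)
The plan is to show that, with high probability, every edge ever selected by greedy merging lies in the hidden Hamiltonian cycle $C^*$, in which case the output must be $\hat{C}_{\rm GM} = C^*$ itself. Let $\tau$ denote the first iteration at which a non-$C^*$ edge $(i,j)$ is chosen; if no such $\tau$ exists then greedy merging recovers $C^*$. So it suffices to show $\prob{\tau < \infty} \to 0$. My strategy is to exhibit, at step $\tau$, two ``witness'' edges of $C^*$ that the greedy rule \emph{should have} picked in place of $(i,j)$, and then to union-bound using \prettyref{lmm:concentration_2YXX}.

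Here is the key structural observation. Strictly before step $\tau$, all selections lie in $C^*$, so the current graph is a disjoint union of sub-paths of $C^*$; consequently the first bad edge $(i,j) \notin C^*$ satisfies $\deg(i), \deg(j) \leq 1$. Let $N_{C^*}(i) = \{i_1, i_2\}$ and $N_{C^*}(j) = \{j_1, j_2\}$; note $j \notin \{i_1, i_2\}$ and $i \notin \{j_1, j_2\}$. Since $\deg(i) \leq 1$, at least one of $(i,i_1), (i,i_2)$ is unselected; call its other endpoint $i^*$. Because $(i, i^*) \in C^*$ is unselected and every prior selection lies in $C^*$, the only possible selected edge at $i^*$ is the one to its other $C^*$-neighbor, so $\deg(i^*) \leq 1$ and $(i, i^*)$ is a valid candidate at step $\tau$. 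Symmetrically, find $j^* \in N_{C^*}(j)$ with $(j, j^*)$ a valid candidate. The three edges $(i,j), (i, i^*), (j, j^*)$ are pairwise distinct, and since the greedy rule chose $(i,j)$ over the other two candidates, $w_{ij} \geq w_{i, i^*}$ and $w_{ij} \geq w_{j, j^*}$.

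With the witness triple in hand, I would finish by a union bound:
\[
\{\tau < \infty\} \subseteq \bigcup_{(i,j,i^*,j^*)} \{w_{ij} \geq w_{i,i^*} \text{ and } w_{ij} \geq w_{j,j^*}\},
\]
where the union runs over ordered tuples with $(i,j) \notin C^*$, $i^* \in N_{C^*}(i)$, $j^* \in N_{C^*}(j)$, totalling at most $4n^2$. For any such fixed tuple, the three edges carry independent weights with $w_{ij} \eqdistr Y$ and $w_{i,i^*}, w_{j,j^*} \eqdistr X$, so
\[
\prob{w_{ij} \geq w_{i,i^*} \text{ and } w_{ij} \geq w_{j,j^*}} \leq \prob{2Y - X_1 - X_2 \geq 0} \leq e^{-2\beta_n},
\]
where the first inequality follows by adding $\{Y \geq X_1\}$ and $\{Y \geq X_2\}$, and the second is \prettyref{lmm:concentration_2YXX}. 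Hence $\prob{\tau < \infty} \leq 4n^2 e^{-2\beta_n} = 4\,e^{-2(\beta_n - \log n)} \to 0$ under the hypothesis \prettyref{eq:suff_greedy_merging}.

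The main obstacle is the combinatorial witness step: correctly identifying $i^*$ and $j^*$ so that (i) the resulting edges $(i, i^*), (j, j^*)$ are \emph{valid candidates} at step $\tau$ (both endpoints have degree $<2$), and (ii) together with $(i,j)$ they form three \emph{distinct} edges, so that their weights are independent. Both properties hinge on the fact that prior to the first bad selection the current graph is a sub-path-union of $C^*$, which lets the degree constraint propagate from $i, j$ to their $C^*$-neighbors. Once these witnesses are in place, the rest reduces to the $2\beta_n$ large-deviation rate in \prettyref{lmm:concentration_2YXX} combined with an $O(n^2)$ union bound, and the match between $2\beta_n$ and $2\log n$ is precisely what produces the threshold $\beta_n - \log n \to \infty$.
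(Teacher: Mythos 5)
Your proof is correct and takes essentially the same approach as the paper's: both arguments identify, for each candidate off-cycle edge $e=(i,j)$, two $C^*$-edges at $i$ and $j$ whose weights beat $w_e$, and then union-bound the failure event $\{2Y - X_1 - X_2 \geq 0\}$ over the $O(n^2)$ choices using the $e^{-2\beta_n}$ tail from \prettyref{lmm:concentration_2YXX}. Your ``first bad step $\tau$'' framing is just a more explicit rendering of the paper's claim that, on the event $\calH_e$, the edge $e$ cannot be added because its would-be blocking $C^*$-edges have not yet been rejected.
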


\begin{proof}
For all edge $e=(i,j)$ such that $x^*_e=0$, define an event
$$
\calH_e = \left\{ \min_{ f \in \delta(i): x^*_f=1}  w_f > w_e   \right\} \cup \left\{ \min_{g \in \delta(j): x^*_g =1}  > w_e  \right\}
$$
We claim that on event $\calH_e$, edge $e$ will  not  be added in $\hat{C}_{\rm}$.
Suppose not. Then by the definition of $\calH_e$ and the greedy merging algorithm,
either $i$ or $j$ must already have degree $2$ before adding edge $e$ and hence $e$
cannot be added. Define $\calH=\cap_{e:x^*_e=0} \calH_e.$ It follows that 
$
\{\hat{C}_{\rm GM}  = C^*\}=\calH.
$
Hence, to prove the theorem, it suffices to show $\prob{\calH^c} =o(1).$
Note that
\begin{align*}
\calH_e^c & = 
\left\{ \min_{ f \in \delta(i): x^*_f=1}  w_f \le  w_e   \right\} \cap \left\{ \min_{g \in \delta(j): x^*_g =1}  \le  w_e  \right\} \\
& = \cup_{f \in \delta(i):x^*_f=1} \cup_{g \in \delta(j):x^*_g=1} \left\{ w_f \le w_e , w_g \le w_e \right\}.
\end{align*}
Hence, by the union bound,
\begin{align*}
\prob{\calH_e^c} & \le \sum_{f \in \delta(i):x^*_f=1} \sum_{g \in \delta(j):x^*_g=1} \prob{ w_f \le w_e , w_g \le w_e } \\
& \le 4 \prob{ X_1 \le Y, X_2 \le Y} \\
&  \le 4 \prob{ X_1 +X_2 \le 2Y}  \le 4 e^{-2\beta},
\end{align*}
where the last inequality follows from \prettyref{lmm:concentration_2YXX}.
Applying union bound again, we get that
\begin{align*}
\prob{\calH^c} \le \sum_{e: x^*_e=0} \prob{\calH_e^c} \le 4n^2 e^{-2\beta} =o(1),
\end{align*}
where the last inequality holds due to assumption \prettyref{eq:suff_greedy_merging}.
\end{proof}

\section{Exact Recovery under the Partial Observation Model}
\label{app:partial}
Consider the hidden Hamiltonian cycle recovery problem under a partial observation model, where the weight of every edge in $G$ is erased independently with probability $\eta_n$. This is equivalent to recovering the hidden Hamiltonian cycle from $G'$ generated under hidden Hamiltonian cycle 
model in \prettyref{def:model} with $P'_n=\eta_n \delta_{*} +  (1-\eta_n) P_n$ and  $Q_n'=\eta_n \delta_{*} +  (1-\eta_n) Q_n$, where
$*$ is any element outside the support of $P_n$ and $Q_n$ denoting the erasure. 
In this case, the log-likelihood weight matrix is the same as before with each erasure replaced with 0. 
Straightforward calculation shows that
\[
\alpha'_n \triangleq -2 \log \int \sqrt{d P_n' dQ_n'} = -2 \log \left(\eta_n + (1-\eta_n) \exp(-\alpha_n/2) \right).
\] 
Note that $\alpha'_n - \log n \to +\infty$ if and only if $\eta_n=o(1/\sqrt{n})$ and $\alpha_n - \log n \to \infty$.
Hence, \prettyref{thm:LP_opt} implies that if $\eta_n =o(1/\sqrt{n})$ and $\alpha_n-\log n \to +\infty$,
then the F2F LP recovers the hidden Hamiltonian cycle with high probability.

For the converse, suppose MLE recovers the hidden Hamiltonian cycle with high probability. 
Then as shown in \prettyref{fig:c4}, with high probability there is no 4-cycle $(i,i+1,j+1,j,i)$ such that the weights on all $4$ edges are $*$,
because otherwise the likelihood function has at least two maximizers which renders MLE to fail with probability at least $1/2$. 
The probability that
all $4$ edges in $(i,i+1,j+1,j,i)$ have weights $*$ is $\eta_n^4$, and there are roughly $n^2/2$ different
such $4$-cycles. If the weights of all such $4$-cycles were mutually independent, then it would immediately 
follow that $\eta_n^4 n^2 \to 0$, \ie, $\eta_n=o(1/\sqrt{n}).$  However, the weights of all such $4$-cycles 
are dependent because of the multiple appearances of $(i,i+1)$ edges. The dependency can be dealt with 
in the same way as in the proof of \prettyref{thm:converse}. Specifically, it directly follows from 
\prettyref{eq:converse_key_equation} that for any $\tau \in \reals$,
$$
e^{  - \prob{Y \ge \tau}^2 \binom{n \prob{X \le \tau} /4}{2} } +  e^{ - n \prob{X \le \tau} /8 } \ge 1+o(1).
$$
Setting $\tau=0$ and noting that $\prob{X \le \tau} \ge \prob{X =0} \ge \eta_n$ and $\prob{Y\ge \tau} \ge \prob{Y \ge 0} \ge \eta_n$,
we get that 
$$
e^{  - \eta_n^2 \binom{n \eta_n /4}{2} } +  e^{ - n \eta_n/8 } \ge 1+o(1).
$$
It follows that $\eta_n=o(1/\sqrt{n})$ is necessary for exact recovery.

\section{Time Complexity of the Max-Product BP algorithm}
\label{app:BPtime}

In this appendix we bound the number of iterations $t_f$ of Max-Product BP algorithm mentioned in the end of \prettyref{sec:other_efficient_algorithms} in both the Gaussian and the Poisson model under the assumption \prettyref{eq:ITlimit} that ensures the F2F LP succeeds. Recall that $t_f = \lceil  2 n w^*/\epsilon \rceil$, where $w^*$ is the weight of the optimal $2$-factor and $\epsilon$ is the difference between the weight of the optimal $2$-factor and the second largest weight of $b$-factors. 

First we show that $\epsilon \to \infty$, which holds for any weight distribution. 
Following the analysis of the 2F ILP in \prettyref{sec:2f}, we prove a slightly stronger separation between the weight of the true Hamiltonian cycle and the weight of any other $2$-factor. Let $\rho_n \triangleq \alpha_n - \log (2n) $, 
which tends to infinity. Then 
it follows from the Chernoff's inequality that
\begin{align*}
\prob{ w(B) \ge -\rho_n} = \prob{ \sum_{i=1}^{\ell/2} Y_i - \sum_{i=1}^{\ell/2} X_i \ge -\rho_n}  
& \le \exp \left( - (\ell/2) F ( - 2\rho_n /\ell ) \right) \\
& \le \exp \left( - \alpha_n \ell/2 + \rho_n \right).
\end{align*}
where $F$ is given in  \prettyref{eq:F} and the last inequality 
follows because by convexity of $F$, $F(x) \ge F(0)+F'(0)x=\alpha_n+x/2$. Then following the exactly same argument as in \prettyref{sec:2f}, we get that 
\begin{align*}
\prob{ \max_{B \in \calB^*} w(B) \ge -\rho_n} \le e^{\rho_n} \frac{\exp \left\{ -2 \left(\alpha_n - \log (2n) \right) \right\} }{1- \exp \left\{ - \left(\alpha_n-\log (2n) \right)/2 \right\} } \to 0.
\end{align*}
It follows that $\epsilon \ge \rho_n \to +\infty$ with high probability.

Finally, for the Gaussian and Poisson weight distribution (cf.~\prettyref{eq:alphaexample}), the total weight $w^*$ of the true Hamiltonian cycle is distributed as $N(n\mu,n)$ and $\Pois(n\lambda)$, respectively. At the information-theoretic limit, we have $\mu = \Theta(\sqrt{\log n})$ and $\lambda = \Theta(\log n)$. Hence $w^* = O_P(n \log n)$ under both cases.

\section{Data Source and Preparation for DNA Assembly}\label{appendix:data}
\subsection{Data Source}
Table \ref{table:data_source} contains links to all the datasets used for experiments in \prettyref{sec:emp_results}. Each organism requires the following three datasets:
\begin{enumerate}
    \item Contigs -- Fasta file of unordered contigs generated from denovo assembly softwares such as Meraculous ~\cite{chapman2011meraculous}, AllPaths ~\cite{butler2008allpaths} etc.
    \item HiC/Chicago -- Fastq file containing HiC/Chicago  reads obtained from the DNA.
    \item Chromosome -- Chromosomes pre-assembled using other expensive data. These chromosome comprise of the above contigs in the correct order and thus this dataset gives us the  ground truth ordering of the contigs.
\end{enumerate}
\begin{table}[h!]

\centering
 \begin{tabular}{||c |c| c| c||} 
 \hline
 Organism & Contigs & HiC/Chicago & Chromosome \\ [0.5ex] 
 \hline\hline
 HomoSapiens ~\cite{putnam2016chromosome} & \href{https://www.ncbi.nlm.nih.gov/sra/?term=SRR2911057}{SRR2911057} & \href{https://www.ncbi.nlm.nih.gov/nuccore/GL582980.1?report=fasta}{GL582980.1} & \href{https://www.ncbi.nlm.nih.gov/assembly/GCF_000001405.13/}{GRCh37} \\ 
 \hline
  Gasterosteus Aculeatus ~\cite{peichel2016improvement} & \href{https://www.ebi.ac.uk/ena/data/view/SRR4007584}{SRR4007584} & 
  \href{http://datadryad.org/resource/doi:10.5061/dryad.h7h32/1}{dryad.h7h32}& \href{ftp://ftp.ncbi.nlm.nih.gov/genomes/all/GCA/000/180/675/GCA_000180675.1_ASM18067v1/GCA_000180675.1_ASM18067v1_genomic.fna.gz}{GCA000180675} \\
 \hline
  Aedes Agyepti ~\cite{dudchenko2017novo} & 
  \href{https://www.ncbi.nlm.nih.gov/sra?term=SRX2618715}{SRX2618715}
  & \href{https://www.ncbi.nlm.nih.gov/genome/?term=txid7159[Organism:noexp]}{txid7159} & \href{ftp://ftp.ncbi.%nlm.nih.gov/genomes/all/GCA/000/180/675/GCA_000180675.1_ASM18067v1/GCA_000180675.1_ASM18067v1_genomic.fna.gz}{GCA000180675}\\
 \hline
\end{tabular}
 \caption{Datasets.
\label{table:data_source}}
\end{table}


\textbf{Goal}: Order the contigs using the HiC/Chicago reads and check the accuracy of the contig ordering using the true answer obtained from  pre-assembled chromosome.

\subsection{Procedure}
For each chromosome of an organism, the following steps were performed to generate the contigs graph $G$ and to evaluate the performance of F2F, BP, greedy merging and simple thresholding.
\begin{enumerate}
    \item Align the HiC/Chicago reads to the contigs using HiC-Pro ~\cite{servant2015hic}. This gives us the raw number of HiC reads between each pair of contigs.
    \item Construct the graph $G$ using the data obtained in the previous step.
    \item On graph $G$, run F2F, BP, greedy merging and simple thresholding algorithms, with outputs denoted by 
    $\tilde{C}_{\rm F2F}$, $\tilde{C}_{\rm BP}$, $\tilde{C}_{\rm GM}$, and $\tilde{C}_{\rm TH}$, respectively.
    \item Align the contigs on the pre-assembled chromosome to obtain the true ordering of the contigs to obtain the ground truth Hamiltonian path $C^*$.
    \item Evaluate the misclassification error of the outputs of the four algorithms with respect to the ground truth path $C^*$.
\end{enumerate}
\end{appendices}

\section*{Acknowledgment}
D.~Tse and J.~Xu would like to thank the support of the Simons Institute, where this collaboration began.  J.~Xu would also like to thank Mohit Tawarmalani for inspiring discussions
on fractional $2$-factor LP relaxation of TSP. 
Y.~Wu is grateful to Dan Spielman for helpful comments and to David Pollard and Dana Yang for an extremely thorough reading of the manuscript and various corrections.

V.~Bagaria and D.~Tse are  supported by the Center for Science of Information, an NSF Science and Technology Center, under grant agreement CCF-0939370, as well as the National Human Genome Research Institute of the National Institutes of Health under award number R01HG008164.
J.~Ding was supported in part by the NSF Grant DMS-1757479 and an Alfred Sloan fellowship.
Y.~Wu was supported in part by the NSF Grant IIS-1447879, CCF-1527105, an NSF CAREER award CCF-1651588, and an Alfred Sloan fellowship.
J.~Xu was supported in part by a Simons-Berkeley Research Fellowship and the NSF Grant CCF-1755960.


\newcommand{\etalchar}[1]{$^{#1}$}

\end{document}